\documentclass[12pt]{amsart}
\usepackage[margin=1.25in]{geometry}
\usepackage[singlespacing]{setspace}
\usepackage{xfrac}

\usepackage[dvipsnames]{xcolor}
\usepackage[font=smaller]{caption}
\usepackage{tikz}
\usetikzlibrary{patterns} %

\pgfdeclarepatternformonly{wide vertical lines}{\pgfqpoint{-1pt}{-1pt}}{\pgfqpoint{10pt}{10pt}}{\pgfqpoint{2.5pt}{2.5pt}} %
{
  \pgfsetlinewidth{0.4pt}
  \pgfpathmoveto{\pgfqpoint{0pt}{0pt}}
  \pgfpathlineto{\pgfqpoint{0pt}{10pt}}
  \pgfusepath{stroke}
}
\pgfdeclarepatternformonly{wide horizontal lines}{\pgfqpoint{-1pt}{-1pt}}{\pgfqpoint{10pt}{10pt}}{\pgfqpoint{2.5pt}{2.5pt}} %
{
  \pgfsetlinewidth{0.4pt}
  \pgfpathmoveto{\pgfqpoint{0pt}{0pt}}
  \pgfpathlineto{\pgfqpoint{10pt}{0pt}}
  \pgfusepath{stroke}
}

\usepackage{amsthm}

\usepackage{amsmath,amsfonts, amssymb}

\usepackage{enumerate}
\usepackage{enumitem}
\usepackage{subfigure}
\usepackage{setspace}
\usepackage[ruled]{algorithm2e} %

\usepackage{etoolbox}
\usepackage{bbm}
\PassOptionsToPackage{hyphens}{url}

\usepackage{hyperref}

\hypersetup{
     colorlinks=true,
     linkcolor=blue,
     filecolor=blue,
     citecolor = blue,      
     urlcolor=cyan,
     }

\usepackage[authoryear,sort]{natbib}
 \bibpunct[, ]{(}{)}{,}{a}{}{,}%

\usepackage[normalem]{ulem}

\usepackage{graphicx}
\usepackage{color}

\newtoggle{version}
\toggletrue{version}

\newtoggle{comments}
\toggletrue{comments}

\theoremstyle{definition}
\newtheorem{example}{\sc Example}

\theoremstyle{definition}
\newtheorem{definition}{\sc Definition}

\newtheorem{lemma}{\sc Lemma}
\newtheorem{theorem}{\sc Theorem}
\newtheorem*{theorem*}{\sc Theorem}

\newtheorem{proposition}{\sc Proposition}

\iftoggle{comments}{
\newcommand{\at}[1]{{\color{orange} AT: #1}}
\newcommand{\sv}[1]{{\color{red}  SV: #1}}
\newcommand{\tn}[1]{{\color{purple} TN:  #1}}

\newcommand{\jh}[1]{{\color{cyan} JH: #1}}

}{
\newcommand{\at}[1]{}
\newcommand{\sv}[1]{}
\newcommand{\tn}[1]{}
\newcommand{\jh}[1]{}

}

\newcommand{\supply}[1]{c^{#1}}

\newcommand{\expect}[1]{\mathbb{E}\left[#1\right]}

\newcommand{\numtypes}{\tau}

\newcommand{\agent}{i} %

\newcommand{\good}{j}
\newcommand{\ngoods}{m} %
\newcommand{\agentset}{N}

\newcommand{\nagents}{n}
\newcommand{\goodsset}{M}

\newcommand{\typeset}{\Theta}

\newcommand{\samplespace}{\Omega}
\newcommand{\outcome}{\omega}

\newcommand{\optsub}[2]{{#1}_{#2}}

\newcommand{\lottery}[2]{\widetilde{{\mathbf #1}}_{#2}}

\newcommand{\prices}{{\bf p}}
\newcommand{\price}[1]{p^{#1}} %
\newcommand{\allocs}{\mathbf{X}}
\newcommand{\allocv}[1]{{\bf x}_{#1}}

\newcommand{\gallocv}[1]{{\bf x}_{#1}}

\newcommand{\bundle}{{\bf x}}

\newcommand{\ignore}[1]{}
\newcommand{\Xomit}[1]{}

\newcommand{\choices}{\Psi}
\newcommand{\budget}[1]{b_{#1}}

\newcommand{\mechanism}{\Phi}

\newcommand{\randombudget}[1]{\mathcal{B}_{#1}}

\newcommand{\optbundle}[1]{\optsub{\mathcal{X}}{#1}}

\newcommand{\z}{{\bf z}}
\newcommand{\y}{{\bf y}}
\newcommand{\x}{{\bf x}}
\newcommand{\e}{{\bf e}}
\newcommand{\ooo}{{\bf 0}}

\newcommand{\supplyvector}{\mathbf{\supply{}}}

\newcommand{\aceei}{ACEEI}

\newcommand{\gridjump}{\lambda}
\newcommand{\gridjumplong}{\numtypes \sqrt{n}}
\newcommand{\grid}[1]{\mathbf{G}(#1)}

\newcommand{\Mtwo}{CERI-SP}
\newcommand{\Mone}{CERI-L}
\newcommand{\CERIU}{CERI-S}
\newcommand{\nagtype}{\psi}
\newcommand{\nagtypevec}{\boldsymbol{\nagtype}}
\newcommand{\nagtypevecbig}{\boldsymbol{\Psi}}

\newcommand{\typedist}{\mathcal{D}}
\newcommand{\capacities}{\mathbf{c}}
\newcommand{\economy}{(\capacities, \succ)}
\newcommand{\zeconomy}{(\capacities, \typeapprox)}
\newcommand{\pseconomy}{(\capacities,\nagtypevec)}
\newcommand{\alteconomy}{(\capacities', \succ)}
\newcommand{\prob}{\mathbb{P}}
\newcommand{\support}{\text{supp}}
\newcommand{\expectation}{\mathbb{E}}

\newcommand{\minp}{p_{\min}(\typedist)}
\newcommand{\eventa}{A}
\newcommand{\eventb}{B}
\newcommand{\nthresh}{\bar{n}}
\newcommand{\typeapprox}{\overline{\nagtypevec}}
\newcommand{\typeapproxq}{\overline{\psi}}

\title[Efficiency, Envy, and Incentives in Combinatorial Assignment]{Efficiency, Envy, and Incentives \\ in Combinatorial Assignment}

\author{Th\`anh Nguyen \and Alexander Teytelboym  \and Shai Vardi}
\thanks{
Nguyen: Daniels School of Business, Purdue University; {\tt nguye161@purdue.edu}.
Teytelboym: Department of Economics, Institute for New Economic Thinking, and St.~Catherine's College, University of Oxford; {\tt alexander.teytelboym@economics.ox.ac.uk}.
Vardi: Muma College of Business, University of South Florida; {\tt vardi@usf.edu}.\\
An extended abstract of this paper appeared in the Proceedings of the 26th ACM Conference on Economics and Computation (EC'25). We are indebted to Aviad Rubinstein for the discussion that led to the conception of this paper. We are grateful to Eric Budish, Amrit Daswaney, Justin Hadad, Simon Jantschgi, Shengwu Li and Ludvig Sinander and many seminar participants at LBS, Monash, Columbia, Harvard and Tokyo for their terrific comments. 
This project has received funding from the European Research Council (ERC) under the European Union’s Horizon 2020 research and innovation programme (grant agreement No.~949699). This material is based upon work supported by the National Science Foundation under Grant No. DMS-1928930 and by the Alfred P. Sloan Foundation under grant G-2021-16778, while Teytelboym was in residence at the Simons Laufer Mathematical Sciences Institute (formerly MSRI) in Berkeley, California, during the Fall 2023 semester.
}

\begin{document}

\maketitle

\begin{abstract}

Ensuring efficiency and envy-freeness in allocating indivisible goods without money often requires randomization. However, existing combinatorial assignment mechanisms (for applications such as course allocation, food banks, and refugee resettlement) guarantee these properties either ex ante or ex post, but not both. We propose a new class of mechanisms based on Competitive Equilibrium from Random Incomes (CERI): Agents receive random token budgets and select optimal lotteries at competitive prices that clear markets in expectation. Our main insight is to let the CERI price vector guide all ex-post allocations. We show that all ordinally efficient allocations are CERI allocations, which can be implemented as lotteries over near-feasible Pareto-efficient outcomes. With identical budget distributions, CERI allocations are ordinally envy-free; with budget distributions on small supports, ex-post allocations are envy-free up to one good. Moreover, we design an asymptotically efficient implementation of CERI that satisfies a strong new non-manipulability property in large markets.

\end{abstract}

\onehalfspacing

\newpage
\section{Introduction}

Combinatorial assignment is a flexible and powerful model for allocation of indivisible resources without the use of money or of explicit priorities. 
Combinatorial assignment captures many complex allocation problems including course allocation \citep{budish2011combinatorial,budish2012multi}, allocation of food donations to food banks \citep{prendergast2017food,prendergast2022allocation}, and refugee resettlement \citep{ahani2021dynamic, delacretaz2023matching}.
In these settings, agents typically reveal their preferences over bundles of resources and the designer hopes to pick a  fair and efficient allocation.

In order to achieve fairness, mechanisms that allocate indivisible resources often use randomization.
It is therefore important to ensure that the mechanism's fairness and efficiency properties are desirable both ex ante (i.e., prior to the realization of the lotteries) and ex post. In this paper, we will focus on canonical ex-ante efficiency and fairness requirements for a random mechanism based only on ordinal preference information:  \emph{ordinal efficiency} and \emph{ordinal envy-freeness} \citep{bogomolnaia2001new}.\footnote{Perhaps better terms for these concepts would be ``ex-ante ordinal efficiency'' and ``ex-ante ordinal envy-freeness'' to contrast with ``ex-ante cardinal efficiency'' and ``ex-ante cardinal envy-freeness'' both of which fix a von Neumann-Morgenstern utility representation of preferences over lotteries. Note that ex-ante cardinal efficiency is stronger than ex-ante ordinal efficiency while ex-ante cardinal envy-freeness is weaker than ex-ante ordinal envy-freeness.}  Ordinal efficiency requires that no alternative assignment exist in which every agent receives a lottery that first-order stochastically dominates (ordered by their preferences) the original one.  Ordinal efficiency implies ex-post efficiency by ensuring that ex-post inefficient allocations are never assigned a positive probability. Ordinal envy-freeness requires that 
every agent receive a lottery that first-order stochastically dominates the lottery received by any other agent.
Since we only assume that agents' preferences over lotteries satisfy monotonicity (rather than be expected-utility maximizers)\footnote{Our reasons are practical rather than axiomatic. First, it is difficult to elicit preferences over lotteries over many consumption bundles. Second, in many market design applications it is preferable to make as few assumptions about agents' behaviour as possible to ensure that outcomes are robust to various forms of errors and misspecifications (this might be reminiscent of the ``Wilson's doctrine'' that criticizes the reliance on the common knowledge assumption in mechanism design). While the monotonicity assumption might appear weak, there is nevertheless some recent experimental evidence of its violation \citep{agranov2022revealed}.}, ordinal efficiency and ordinal envy-freeness are the natural ex-ante analogues of Pareto-efficiency and ex-post envy-freeness.\footnote{Of course, with indivisible goods, the set of ex-post envy-free outcomes can be empty even with two agents and one good, therefore, envy-freeness in the divisible-goods setting is the more reasonable analogue \citep{foley1967resource,varian1974equity}.}

There is no existing mechanism for combinatorial assignment that simultaneously achieves desirable ex-ante and ex-post fairness and efficiency properties.  Some ex-post efficient mechanisms that might appear fair, such as the Random Serial Dictatorship (RSD) and the Approximate Competitive Equilibrium from Equal Incomes (ACEEI) \citep{budish2011combinatorial}, are neither ordinally efficient nor ordinally envy-free. Other mechanisms, such as the Bundled Probabilistic Serial (BPS) \citep{nguyen2016assignment} are ordinally efficient and ordinally envy-free, but do not provide any ex-post envy-freeness guarantees. While the RSD mechanism is strategyproof, ACEEI and BPS mechanisms are only strategyproof in the large \citep{azevedo2019strategy}. Indeed, there is a stark tradeoff between ordinal efficiency, truthtelling incentives and minimal forms of fairness even for the simplest allocation problems \citep{bogomolnaia2001new}. 

The main insight of our paper is that by using a single market-clearing price vector to guide the ex-ante and ex-post allocations, the designer can simultaneously achieve ordinal efficiency, ordinal envy-freeness, approximate ex-post efficiency, approximate ex-post envy-freeness, as well as strong truthtelling incentives in large markets even in the most general combinatorial assignment settings.  
To achieve this, we introduce a new version of competitive equilibrium in an economy with an artificial currency (henceforth, ``tokens'') in spirit of \citet{varian1974equity} and \citet{budish2011combinatorial}. 
The key technical twist is that we make the budgets of tokens be \emph{exogenously random} for all agents. %
Our Competitive Equilibrium from Random Incomes (CERI) works as follows: agents are allocated distributions of token budgets, they report their ordinal preferences over bundles, and the designer computes their expected demand by averaging over optimal bundles at each budget realization.\footnote{We will use ``budget'' and ``income'' interchangeably.} 
Finally, the designer computes a linear and anonymous equilibrium price vector that equates aggregate expected demand to supply for every good. 
At CERI prices, an agent's allocation is therefore a lottery (or, more generally, a distribution) over optimal bundles at different budgets.
CERI prices turn out to be  precisely the prices that ensure desirable properties of the ex-ante and ex-post allocations.

First, we deal with existence and implementability.   
For a given profile of budget distributions, it might not be possible to exactly equate aggregate expected demand with supply (e.g., when budgets are identical and deterministic). 
However, we show that if distributions of budgets are continuous for all agents, then a CERI always exists (Theorem~\ref{theo:exist}).\footnote{The result is substantial since any discrete distribution of budgets can be made continuous by arbitrarily small perturbations.} 
Since CERI only outputs an ex-ante allocation (i.e., lottery for each agent), the key concern is whether a CERI allocation is implementable as a single lottery over ex-post feasible allocations. Unlike the unit-demand setting, we cannot merely rely on the Birkhoff-von Neumann (BvN) Theorem \citep{birkhoff1946tres,vonneumann1953assignment} to guarantee exact implementability. However, Theorem~\ref{theo:feasibility} shows that any CERI allocation can be implemented as a lottery over approximately feasible allocations supported by CERI prices and by appropriate realizations from correlated token budget distributions.%

Second, we turn to efficiency.
We show that any CERI allocation is ordinally efficient (i.e., our First Welfare Theorem).
We also show that any ordinally efficient allocation can be supported by a CERI with appropriate (independent) distributions of agents' token budgets (i.e., our Second Welfare Theorem).
Hence, an allocation is ordinally efficient if and only if it is a CERI allocation (Theorem~\ref{thm:characterization}). By connecting this characterization to approximate implementability established earlier, we show that any CERI allocation is implementable over near-feasible Pareto-efficient allocations (Theorem~\ref{theo:expostefficiency}).
While the economic content of our characterization---that efficient and equilibrium outcomes coincide in competitive markets---might be familiar from general equilibrium theory \citep{arrow1951extension,debreu1951coefficient}, the connection between ordinal efficiency and competitive pricing that we establish is novel. Additionally, we believe that our characterization of ordinally efficient allocations opens a door towards exploring market design applications in which token budgets are deliberately set unequally by the designer in order to correct for natural asymmetries between agents. For example, food banks are allocated different amounts of token currency to account for the different sizes of their client populations \citep{prendergast2017food}.

\ignore{you already have some of these, but I think discuss more on our characterization, 1 price vs many prices, and its insights into unitdemand and multi unit demand.

insights into unit demand.
\begin{itemize}
\item In unit demand, there is a one-to-one correspondence between the PS mechanism \cite{bogomolnaia2001new} and a CERI. The PS mechanism can be conceptualized similarly to a descending auction clock from 1 to 0 (start to stop of consumption). Item prices are determined by the time the resource runs out. The pattern of an agent's consumption, divided into subintervals, aligns with the order of preference. This facilitates the construction of budget distribution, establishing it as a CERI.

\item Deterministic Serial Dictatorship also corresponds to CERI. In particular, in the order of the mechanism  the $k$-th agent's budget and the price of the item he consumes can be set to be $\frac{1}{k}$

The random serial dictatorship  selects one of these outcome at random,  Due to significant price variations in each outcome, achieving coordination on the same price among all Deterministic Serial Dictatorship outcomes is impossible. Consequently, it does not qualify as a CERI and therefore, RSD lacks ordinal efficiency.
\end{itemize}

insights into multi-unit demand.
\begin{itemize}
    \item eating mechanism no longer characterize every efficicent one. see example??
    \item     
    The ACEEI mechanism, while theoretically not ordinally efficient (as seen in the example), empirically outperforms RSD. Budish attributes this to the ex-post envy-free constraint. From \citet{budis} \begin{quote}
    The second interpretation is that ex-post fairness is a means
to an end, namely ex-ante welfare. More specifically, Budish and Cantillon [2012]
showed us that the ex-post unfairness of the RSD has a cost in terms of ex-ante
welfare, and Approximate CEEI avoids this concern by imposing ex-post fairness
as a design objective
\end{quote} Our results provide additional insights: the efficiency of ACEEI is rather due to its resemblance to a CERI. The allocation of ACEEI is supported by a single price. In fact, ACEEI is constructed by beginning with a specific CERI and identifying a deterministic allocation nearby.  
    Our results also suggest that if efficiency is the sole concern, there are various alternative outcome without envy-free constraints. Specifically, one can use CERI with any profile of budget distribution. %
       
\end{itemize}
}

Third, we discuss envy-freeness. We show that if token budget distributions are identical then CERI produces an ordinally envy-free allocation (Theorem~\ref{theo:envy}). Moreover, if budget distributions are on a sufficiently small support, then any realization of CERI is ex-post envy-free up to one good (EF1) \citep{lipton2004approximately,budish2011combinatorial}. Therefore, we can simultaneously ensure that the lottery allocation is ordinally envy-free and that all ex-post allocations are EF1 by setting agents' budget distributions to be identical \emph{and} by requiring that they have a sufficiently small support.

The power of the single market-clearing CERI price vector becomes especially clear when we consider truthtelling incentives of mechanisms for combinatorial assignment.
As a warm-up, we combine the results above in Theorem~\ref{thm:ceris} to show that, under identical budget distributions, a mechanism (\CERIU) that uses CERI as its allocation rule is ordinally (and hence ex-ante cardinally) envy-free entailing that it is \emph{strategyproof in the large} \citep{azevedo2019strategy}. However, strategyproofness in the large provides a relatively weak incentive for truthtelling when we consider the whole market. Specifically, strategyproofness in the large and even stronger notions of asymptotic strategyproofness \citep{liu2016ordinal} do not ensure that, in any finite economy, a positive fraction of agents have a dominant strategy to truthfully report their preferences. To address this limitation, we introduce \emph{uniform strategyproofness} which is a far stronger incentive compatibility notion in large markets than those in the literature. Uniform strategyproofness requires that, as the market grows large, \emph{all agents} have a dominant strategy to truthfully report their preferences with a probability arbitrarily close to 1. To this end, we develop another CERI-based mechanism, called \Mone, which satisfies uniform strategyproofness while preserving asymptotic ordinal efficiency.

The technical idea behind \Mone\ is an exogenously \emph{random grid} of the type space. After agents report their types, the mechanism selects a grid point which approximates the type distribution. 
In \Mone, we compute a CERI using the grid point's  approximation of the type distribution and allocate bundles to agents using the prices from this CERI. 
The grid choice  balances efficiency and the provision of incentives. A coarse grid reduces the likelihood that an agent's misreport can affect the type distribution approximation, ensuring uniform strategyproofness. On the other hand, a fine grid closely reflects the original type distribution and leads to allocations that are close to efficient under truthful reporting. However, a finer grid gives the agents stronger incentives to misreport their preferences. 
By carefully choosing the size of the random grid, we demonstrate that \Mone~is uniformly strategyproof, asymptotically ordinally efficient, ordinally envy-free, and ex-post EF1 (Theorem~\ref{thm:m1}).
In ``small'' markets, away from the asymptotic limit, while \Mone~is not ordinally efficient, it can nevertheless be realized as a lottery over approximately feasible, Pareto-efficient allocations.

\begin{table}[ht]
\centering
\smaller

\begin{tabular}{c|ccccc}
\hline
    
    Mechanism/Property & Ordinally Eff. & Ex-post Eff. & Ordinally EF & Ex-post EF1 & Strategyproof \\
    \hline
    RSD  & No & Yes & No & No & Yes \\
    HBS Draft & No & Yes & No & No & No \\
    ACEEI & No & Yes$^*$ & No & Yes & In the large \\
    BPS  & Yes & Yes$^*$ & Yes & No & In the large \\
    \CERIU\ (this paper) & Yes & Yes$^*$ & Yes & Yes & In the large \\
    \Mone~(this paper)  & Asymp. & Asymp.$^*$ & Yes & Yes & Uniformly \\
    \hline
\end{tabular}
\caption{Our mechanism vs other combinatorial assignment mechanisms with ordinal preferences. {\textbf{Mechanisms:}} RSD = Random Serial Dictatorship. For the description of HBS Draft, see \citet{budish2012multi} and \citet{kominers2010course}. ACEEI = Approximate Competitive Equilibrium from Equal Incomes. BPS = Bundled Probabilistic Serial. \CERIU~= CERI mechanism with identical budget distributions over a small support (Theorem~\ref{thm:ceris}). \Mone~= CERI mechanism for a large market (Theorem~\ref{thm:m1}). {\textbf{Properties:}} EF = envy-free. EF1 = envy-free up to 1 good. Eff. = efficient. Yes$^*$ = the mechanism is $\Delta$-ex-post efficient (i.e., Pareto-efficient with respect to adjusted supply); see Definition~\ref{def:deltaeff}. Asymp. =  asymptotically; Asymp.$^*$ = asymptotically $\kappa$-ex-post efficient; see Section~\ref{sec:asymproperties}.}
    \label{tab:allocation}
    \end{table}

Taken together, our paper makes three key contributions. First, we introduce the idea of using a single market-clearing (CERI) price vector in order to achieve desirable ex-ante and ex-post properties of combinatorial assignment mechanisms (Table~\ref{tab:allocation}). Second, we offer a novel price-theoretic lens on existing mechanisms for assignment of indivisible resources. Third, we illustrate how CERI can be used to develop new mechanisms such as \Mone\, and to pinpoint the balance between efficiency, envy-freeness and incentives, both in large and small markets.

After reviewing related work, we present the combinatorial assignment model in Section~\ref{sec:model}. Then in Section~\ref{sec:ceri} we define CERI, provide  existence and implementability results, discuss the characterization of ordinally efficient allocations, and describe envy-freeness properties. In Section~\ref{sec:cerimech} we describe \CERIU\ and use CERI to offer a price-theoretic foundation for other assignment mechanisms in the literature. In Section~\ref{sec:strategyproof} we introduce \Mone\ and describe its (large-market) properties. Section~\ref{sec:conclusion} is a conclusion. 
\ignore{
\begin{itemize}
  
  \item In CA, these results do not extends. there are attractive (effficient and fair) allocations that can not be obtained by neither rsd nor eating algorithm,  in not only  finite but also large market. \tn{need to cook up example like Aviad for large market. }
  The  literature thus far provides some mechanisms with selected subsets of good properties. (See tables). It is often difficult to compare these mechanisms because  a mechanism is strong in one dimension is often weak in another. IN general, combinatorial assignment  problems    
    Lacks of foundation to study tradeoffs betwwen efficicncy-trainess and stratergyproof,   and how to understand relationship between existing mechanisms. Lack of tools to build better ones.
  \item contribuion of this paper is two fold, first to build foundation and use it to design better mechanisms. we introduce on a novel concept of CERI. We show that CERI  characterizes all exante efficicent mechanisms. CERI exists for all contunous distribution and  can be implemented over allocation that approximate feasible.
  \item  CERI provides a foundation on mechanism that exante efficient in combinatorial assignemnt, previous characterzation only known for unit demand, and that result does not extends to combinatorial case.   \tn{This is foundation, ,It open up a broad new class of mechanism that are provable to be exante efficient. Also show that any extante efficicent mechanism has to looks like CERI. and help us to understand better why RSD is worse compared with ACCEI.(Need to make this a nice economic interpretation/contributions) Make a picture.}

  \item the second contribution is using CERI we build three versions of mechanisms..with different level of stratgy proof. The fist level is SPL as in budish, which assume a single agent does not change agrregate distribution. Second level we call  highly stratergy proof: it says for every $\epsilon$ there exists $n_\epsilon$ such that if the number of agent is more than $n_\epsilon$, the mechanism is stratergy proof with prob at least $1-\epsilon$. The third  level is stratergy proof.  (need to explain $\epsilon$-close efficicency)

\item need to describe how to build these mechanisms, what makes it novel. \\
in M0, characterization of CERI and implementation result  makes it easy to prove exnate efficicency. chossing budget the same make exante envy-free, and therefor Stratergy-proof in the large.  choosing budget from small interval makes ex post envy-free.

in M1: we aim for stronger notion of stratergy proof

In \Mtwo:  budget distribution is determined before equilibrium computation. unlike budish. This is the key to make mechanism strategyproof. the idea is to divide economy into 2, applying prices across. easy to see the mechanism is stratergy-proof, but 2 different economy, hard to show efficiency. One need new idea: Novelty 2: random grid idea help coordinate prices with high probability.\\

\end{itemize}

}

\ignore{

ACEEI vs CERI: Deterministic vs ordinally efficient. BUT 

ACEEI vs. CERI: ACEEI cannot implement every ordinally efficient allocation.

ACEEI vs. CERI: ACEEI might not be ordinally efficient / ex ante envy-free???? example where no random budget perturbations give you a ex-ante envy-free outcome??? FOR EXAMPLE, WE MIGHT NEED TO BREAK TIES IN A VERY PARTICULAR WAY THAT IS VERY EX ANTE ENVIABLE TO SOME AGENTS (EVEN THOUGHT IT'S ENVY-FREE EX POST). BUT WHO CARES: YOU'VE VIOLATED ENVY BY BREAKING TIES IN A MEAN WAY. \tn{ We might not be able to do it. Budish breaks ties by random order...}

\tn{main disadvantage of ACEEI approach: budget perturbation is done after equilibrium computation, which depend on information about agents' types. This affect incentives. Because of this, it is difficult to design a strategyproof mechanism based on ACEEI. Budish therefore focusses on expost approximate gurantee. The main contribution of our paper is that exante point of view should be the holistic view of combinatorial assignment, expost guarantee is a by product.  }

The assignment problem, central to market design applications, involves allocating objects to agents based on their ordinal preferences. To ensure efficiency and fairness without resorting to monetary transfers, it is necessary to employ randomization \cite{cite_source}.
Prior literature has extensively investigated random allocations, but primarily focus on situations where agents consume at most one object. In this context, the existing body of work offers comprehensive insights and solutions \cite{cite_source}, encompassing characterizations of all efficient\footnote{A random allocation is efficient if it is not stochastically dominated by any other random allocation.} random allocations through a natural eating mechanism \cite{cite_source}. However, in the combinatorial  setting, when agents consume bundles of goods, a wide range of efficient random outcomes exists that cannot be supported by the eating mechanism. This introduces a gap in understanding the entire class of potential efficient mechanisms for combinatorial assignment,  making it challenging for researchers to assess and compare various mechanisms.

Our paper introduces a general and natural class of mechanisms, called  Competitive Equilibrium with Random Income (CERI), that characterizes {\em all} ordinally efficicent random allocations  in the combinatorial assignment problem. CERI consists of prices and a distribution of random incomes for each agent such that under the random income, the aggregate expected consumption clears the market. We show that CERI exists when the random incomes are drawn from continous distributions.

CERI introduces a novel approach to constructing mechanisms for combinatorial assignment, offering new insights for existing solutions and the ability to derive new ones. Specifically, we demonstrate various methods to design mechanisms based on CERI that are (approximately) efficient, fair, and strategyproof.

\begin{itemize}
    \item Approximate equilibrium with equal income ACEEI introduced by BUdish can be constructed via our framework. Further more our method has advantage on computation.
    \item An approximately efficient  allocation mechanism that is both exante envy free and approximate expost envy free
    \item strategy proof and approximately efficient  allocation mechanism for online/ dynamic...    
\end{itemize}

}

\section{Relationship to existing work}
Our paper is related to three strands of the literature. First, we are able to illuminate the price-theoretic foundations of existing mechanisms (such as the Probabilistic Serial mechanism) for assignment with unit-demand agents. As a result, we can offer implementations of these mechanisms which have stronger incentive properties.
Second, we can combine the desirable properties of existing mechanisms for combinatorial assignment (ACEEI and Bundled Probabilistic Serial) into a single mechanism.
Third, our results are broadly related to a literature on pseudomarkets in which agents are assumed to be expected utility maximizers.
In summary, the benefit of using a single, anonymous price vector in CERI is that it fully characterizes ordinally efficient outcomes, allows us to obtain EF1 allocations and yields uniform strategyproofness. The cost is that in our setting ex-post efficient economies are only approximately feasible. While feasibility violations could be removed by using nonlinear pricing, it might add complexity, lose incentive guarantees and create potential opportunities for arbitrage in probability shares.

\subsection*{Ordinal preferences: unit-demand assignment}
Results from our paper shed a new light on existing assignment problems for unit-demand agents with ordinal preferences. In a celebrated paper, \citet{bogomolnaia2001new} introduced the Simultaneous Eating mechanisms that characterize all ordinally efficient allocations. A special case, the Probabilistic Serial mechanism in which all ``eating speeds'' are the same, delivers an ordinally envy-free outcome. Hence, for the unit-demand setting, CERI allocations (for different profiles of budget distributions) coincide with the allocations produced by Simultaneous Eating  mechanisms (for different profiles of ``eating speeds''). Our characterization entails that in the unit-demand setting an allocation is ordinally efficient if and only if every ex-post efficient allocation in its support can be supported by a single, anonymous equilibrium price vector. Moreover, an ordinally efficient allocation is ordinally envy-free if and only if agents face the same budget distributions in the CERI. In Section~\ref{sec:cerimech}, we provide a mapping between the ``eating speeds'' and ``finish times'' in the Probabilistic Serial mechanism and the budgets and prices in a CERI.

\citet{bogomolnaia2001new} showed that the Probabilistic Serial mechanism is not strategyproof.\footnote{Moreover, they showed that there are no ordinally efficient and strategyproof mechanisms that satisfy the equal treatment of equals property when the numbers of agent/good is greater than four.} However, \citet{kojima2010incentives} noted that the Probabilistic Serial mechanism is  strategyproof in a large enough replica economy. \citet{che2010asymptotic} demonstrated the asymptotic equivalence of Probabilistic Serial to the Random Serial Dictatorship. \citet{liu2016ordinal} then established a broader equivalence between asymptotically strategyproof and asymptotically efficient mechanisms. Compared to previous work on large markets, we strengthen the notion of asymptotic strategyproofness to uniform strategyproofness. Moreover, a special case of \Mone\ is an asymptotically ordinally efficient and uniformly strategyproof implementation of the Probabilistic Serial mechanism.

\subsection*{Ordinal preferences: combinatorial assignment}
Our work is intimately related to combinatorial assignment mechanisms with ordinal preferences \citep{budish2011combinatorial,budish2012multi,kornbluth2021undergraduate}.
The most celebrated such mechanism is the ACEEI introduced by \citet{budish2011combinatorial}. Budish shows that one can find an equilibrium price vector which exactly clears a convexified market after a small budget perturbation. He then shows that one can find a budget perturbation which ensures approximate ex-post market-clearing. Moreover, ACEEI allocations are ex-post EF1.

However, the ACEEI mechanism might be neither ordinally efficient nor ordinally envy-free. In the unit-demand setting, the ACEEI mechanism coincides with RSD (see Section~\ref{sec:cerimech}). For the multiunit-demand setting, we provide an example in Appendix~\ref{app:accei} that demonstrates that there may be only one deterministic allocation---an empty one---that satisfies ACEEI's ex-post market-clearing condition, but which is ordinally inefficient. The reason that the ACEEI mechanism is neither  ordinally efficient nor ordinally envy-free is that it computes a different set of market-clearing prices for different budget perturbations, rather than, as CERI does, a single set of market-clearing prices for a given profile of budget distributions.\footnote{Our construction of a CERI is similar to \citeauthor{budish2011combinatorial}'s construction of his convexified equilibrium, however, by using a different rounding procedure we can avoid \citeauthor{budish2011combinatorial}'s initial budget perturbation step that compromises ordinal efficiency and ordinal envy-freeness of the ACEEI mechanism.} As a consequence, CERI allocations are ordinally efficient (and ordinally envy-free with appropriate budget distributions) and, in fact, we can show that the existence of a CERI implies the existence of an ACEEI (Appendix~\ref{app:B}).
A final difference between the properties of ACEEI and of CERI is that the ACEEI allocation bounds depend on the market size and are expressed in terms of the $\ell^2$-norm whereas CERI bounds are expressed good-by-good and are independent of the market size (i.e., in the $\ell^\infty$-norm). The good-by-good bounds which are invariant to the market size are often more practical from a market design perspective because by adjusting the capacities of individual goods the designer can ensure that none of the bounds are exceeded ex-post  \citep[p. 4136-4137]{nguyen2021delta}.
Hence, our \CERIU\ mechanism replicates the approximate ex-post efficiency, ex-post EF1, and strategyproofness in the large properties of ACEEI while additionally ensuring ordinal efficiency and ordinal envy-freeness (Theorem~\ref{thm:ceris}).

\citet{nguyen2016assignment} proposed the BPS mechanism for combinatorial assignment. The BPS mechanism extends the PS mechanism to the combinatorial setting. The BPS mechanism is ordinally efficient and ordinally envy-free. In Section~\ref{sec:cerimech}, we provide a mapping between the ``eating speeds'' and ``finish times'' in the BPS mechanism and the budgets and prices in a CERI. But, unlike CERI, BPS neither guarantees ex-post EF1 nor characterize all ordinally efficient allocations (as shown in Example~\ref{ex:eating}).

Overall, mechanisms based on CERI can combine the best features of ACEEI---i.e., ex-post efficiency after adjusting supply and EF1---with the best features of the BPS mechanism---ordinal efficiency and ordinal envy-freeness---while strengthening the incentive properties of these mechanisms in large markets \citep{azevedo2019strategy} (see Table~\ref{tab:allocation}).

\subsection*{Cardinal preference representations}
There is a substantial literature on efficient and fair allocation of indivisible goods in which agents are assumed to be expected utility maximizers.\footnote{There is a well known connection between models with ordinal and vNM preferences. \citet{mclennan2002ordinal} showed that any ordinally efficient lottery allocation maximizes the sum of expected utilities for some vector of vNM utility functions that are consistent with the ordinal preferences. See also \citet{manea2008constructive} and \citet{carroll2010efficiency}. }
\citet{aziz2023best} explored how to combine various desirable ex-ante and ex-post properties (what they called the ``best of both worlds'') and derived a number of impossibility results under the assumption of additive preferences over bundles.  CERI mechanisms can also claim to satisfy the ``best of both worlds'' property.
\citet{cole2021existence} show that for a class of partition-based utility functions one can guarantee allocations that are exactly realizable (i.e., the lottery allocation is assumed to come from the set of lotteries over ex-post feasible allocations), ex-ante cardinally efficient and ex-ante cardinally envy-free. Their notion of efficiency is stronger than ours, but their notion of envy-freeness is weaker. Moreover, imposing exact implementability means that ex-post efficient allocations are supported by different market-clearing price vectors which weakens the large-market incentive compatibility properties of their solution.
Finally, a strand of work explored ``pseudomarket'' allocation rules (reminiscent of ours) in which expected-utility maximizing agents can select their most preferred lotteries at competitive prices using budgets of tokens \citep{hylland1979efficient,budish2013designing,gul2019market,nguyen2021delta, echenique2021constrained}. In this setting, \citet{miralles2021foundations} provide First and Second Welfare Theorems while \cite{nguyen2024equilibrium} provide necessary and sufficient conditions for exact implementability in a sufficiently rich preference domain.

We only assume that agents' preferences are monotonic in probabilities so the designer is only required to elicit an ordinal ranking over bundles. Indeed, there is evidence that relying on ordinal preferences is more robust from a practical market design perspective: Recent work suggests that market participants (e.g., in course allocation) make more mistakes when their ranking over bundles is sensitive to the cardinal representation of preferences \citep{budish2022can}.

\section{Model}\label{sec:model}
There is a finite set  $\goodsset$ of \emph{goods}, with $|\goodsset|=\ngoods$,  and a finite set $\agentset$ of agents, with $|\agentset| = \nagents$. Each good $\good$ has a finite integer \emph{capacity} $\supply{\good}\in \mathbb{N}$.

A \emph{bundle} is an integral vector, $\bundle \in \mathbb{N}^m$.  
If $\x$ contains good $\good$, we use $(\x-\e^j)^+$ to denote the bundle $\x$ with one unit of good $\good$ removed.   
The bundle consumed by agent $\agent$ is denoted by~$\bundle_{\agent}$. There might be further constraints on consumption; let $\choices_\agent \subseteq  \mathbb{N}^m $ denote the set of \emph{acceptable} bundles for agent $\agent$. We assume that $\ooo \in \choices_\agent$ for all $\agent$, meaning each agent has an outside option. However, we do not assume  free disposal, indicating that if a bundle $\bundle_{\agent}$ is acceptable to agent $i$, it is possible that $\bundle'_{\agent}\leq \bundle_{\agent}$ is not acceptable. Furthermore, we assume the maximum size of an acceptable bundle for any agent is at most $\Delta$.\footnote{Formally, $\Delta:=\max _{i \in N} \max _{x \in \Psi_i} \sum_{j \in M} x_j$. In applications, such as course allocation or refugee resettlement, $\Delta$ is around 6.}
An (ex-post or deterministic) \emph{allocation} $\allocs=(\gallocv{1}, \ldots, \gallocv{\nagents})$ is a list of acceptable bundles, one for each agent. Allocation $\allocs$ is \emph{feasible} with respect to capacities $\supplyvector$ if  $\sum_{\agent \in \agentset} \allocv{\agent} \leq \supplyvector$.  

Each agent $\agent$ has a strict preference relation $\succ_\agent$ over the set $\choices_\agent$. We assume {(without loss)} 
 that $\ooo$ is the least preferred bundle for all agents. 
We denote the weak relation of $\succ_\agent$ by $\succeq_\agent$; i.e.,  $\x\succeq_\agent \y$ means  either $\x\succ_\agent \y$ or $\x=\y$, and denote the preference profile of  all agents by $\succ:= (\succ_\agent)_{\agent \in \agentset}$.  We  use  $\succ_{-i}$ to denote the preference profile of agents excluding agent $i$.

In addition to (ex-post) allocations, our paper considers (ex-ante) lottery allocations and their associated stochastic order. %
Let $\mathcal{L}(\choices_\agent)$ denote the set of lotteries over $\choices_\agent$. We use $\lottery{x}{i}\in \mathcal{L}(\choices_\agent)$ to indicate a lottery obtained by agent $i$ and  $\expect{\lottery{x}{i}}$ to denote expectation of this lottery.

A \emph{lottery allocation}, $\lottery{X}{}=(\lottery{x}{1},..,\lottery{x}{n})\in 
\mathcal{L}(\choices_1)\times ..\times \mathcal{L}(\choices_n)$, is a list of lotteries over the acceptable bundles, one for each agent.  
The lottery allocation $(\lottery{x}{1},..,\lottery{x}{n})$ is \emph{feasible}  with respect to capacity $\bf c$ if $\sum_{i=1}^n \expect{\lottery{x}{i}} \le  \bf c$.
A lottery allocation $\lottery{X}{}$ %
is \emph{implementable} over a set of ex-post allocations (which are not necessarily feasible) if it can be realized as a lottery over this set of allocations.
Since our primitives are ordinal preferences rather than their cardinal representations, we will assume that agent $i$  prefers lottery $\lottery{x}{}$ to lottery $\lottery{y}{}$ if and only if $\lottery{x}{}$ (first-order) stochastically dominates $\lottery{y}{}$ (when ordered by $\succ_i$).

\begin{definition} [\citeauthor{bogomolnaia2001new}, \citeyear{bogomolnaia2001new}]
For agent $i$, consider two lotteries $\lottery{x}{}, \lottery{y}{} \in \mathcal{L}(\choices_i)$. 
We say that $\lottery{x}{}$ \emph{stochastically dominates} $\lottery{y}{}$, denoted 
$\lottery{x}{}\succeq^{sd}_i \lottery{y}{}$, if for every bundle $\z \in \choices_i$, 
\[
\sum_{\x \succeq_i \z} \mathbb{P}_{\x}(\lottery{x}{}) \;\;\ge\;\; 
\sum_{\y \succeq_i \z} \mathbb{P}_{\y}(\lottery{y}{}),
\]
where $\mathbb{P}_{\mathbf{x}}(\lottery{x}{})$ denotes the probability that lottery $\lottery{x}{}$ 
yields outcome $\mathbf{x}$. We say that $\lottery{x}{}$ \emph{strictly stochastically dominates} 
$\lottery{y}{}$, denoted $\lottery{x}{}\succ^{sd}_i \lottery{y}{}$, if the inequality is strict for at least one bundle $\z$.
\end{definition}

Note that stochastic dominance places only very weak assumptions on agents' preferences over lotteries viz. that preferences satisfy monotonicity.\footnote{Most decision-theoretic models assume monotonicity although some recent models of preferences for randomization allow for violations of monotonicity  \citep{agranov2022revealed}.} In particular, we do not assume that agents are expected utility maximizers.\footnote{In the special case when agents are expected utility maximizers, the condition says that agent $i$ prefers $\lottery{x}{}$ to $\lottery{y}{}$ if the agent obtains higher expected utility from $\lottery{x}{}$ than from $\lottery{y}{}$ for \emph{any} von Neumann-Morgenstern (cardinal) representation of his ordinal preferences  $\succ_i$.}

An \emph{economy} is a tuple $\mathcal{E}=(N,M,\supplyvector{}, (\choices_\agent)_{\agent\in\agentset}, \succ)$. We shorten this to $\mathcal{E}=\economy$, as capacities $\mathbf{c}$ and preferences $\succ$ completely define the economy.
A (direct) \emph{mechanism} $\mechanism(\cdot)$ maps every economy to a lottery allocation, and  $\Phi_i(\cdot)$ denotes the lottery obtained by agent $i$ in this mechanism. {We say that a mechanism has an ex-ante property \textsf{P} if its lottery allocation has property \textsf{P}; we say that a mechanism has an ex-post property \textsf{Q} if \emph{all} realizations of the mechanism's lottery allocation have property \textsf{Q}.}

\section{Competitive Equilibrium from Random Incomes}
\label{sec:ceri}
We first describe our novel equilibrium concept.
Consider a random variable $\randombudget{}\ge 0$ which we call a \emph{random income}.  For any agent $\agent$,  price vector $\prices \ge 0$ and random income $\randombudget{\agent}$, define the following random variable
\begin{equation}\label{eq:xhat}
\optbundle{\agent}(\prices,\randombudget{\agent}) := \left\{\max_{\succ_{\agent}} \{\gallocv{}:  \gallocv{} \in \choices_{\agent} \text{ and } \prices\cdot \gallocv{} \leq \budget{\agent} \} \;\; \middle| \;\;  \budget{\agent}\sim \randombudget{\agent} \right\} .
\end{equation}
 We call $\optbundle{\agent}(\prices,\randombudget{\agent})$ the \emph{random demand} of the  agent $\agent$. The realizations of $\optbundle{\agent}$ are the   optimal bundles for agent $\agent$ at prices $\prices$ when $\agent$'s budget is drawn from the distribution $\randombudget{\agent}$. 
Denote agent $i$'s \emph{expected demand} by $\expect{\optbundle{\agent}(\prices,\randombudget{\agent})}$, where the expectation is over $\randombudget{\agent}$.  %

Using this definition, our equilibrium concept is intuitive: when each agent receives what they demand given their random incomes, the markets for all goods clear exactly in expectation.

\begin{definition}Given  an economy $\mathcal{E}=\economy$  and a profile of random incomes, \label{def:ceri} $(\randombudget{1},\ldots,\randombudget{\nagents})$,  the prices $\prices = (\price{1},\ldots,\price{\ngoods})$ and the lottery allocation $ \lottery{X}{}=(\lottery{x}{1},..,\lottery{x}{n})$ comprise a \emph{competitive equilibrium from random incomes} (CERI) if %
 
\begin{enumerate}%
\item[(i)]  $\lottery{x}{i}=\mathcal{X}_i(\prices,\randombudget{i})$ for each $i$, and
\item[(ii)] $\sum_{\agent \in \agentset} \expect{\lottery{x}{i}}_\good \le \supply{\good}$ for every good $\good$, with equality whenever $\price{\good}>0$.
 \end{enumerate}  
\end{definition}
We refer to a lottery allocation $\lottery{X}{}$ as a \emph{CERI allocation} if there exists a price vector such that, together with $\lottery{X}{}$, it forms a CERI. Similarly, a price vector is called \emph{CERI prices} if it corresponds to the price vector of a CERI. In Section~\ref{sec:cerimech}, we define a \emph{CERI mechanism}, which explicitly describes how the concept of CERI can be used in an allocation mechanism.

A CERI allocation is a profile of independent lotteries over acceptable bundles.
In order to make such an allocation economically meaningful, we must ensure that this lottery allocation can be implemented as a lottery over ex-post allocations.

\begin{definition}\label{def:implementation}
Given  an economy $\mathcal{E}=\economy$  and a profile of random incomes,  $(\randombudget{1},\ldots,\randombudget{\nagents})$, and a CERI $(\prices,\lottery{X}{})$, a lottery $\{\mathbf{X}^{\mathbf{w}} \; | \;\mathbf{w} =(w_1,..,w_n) \sim  \mathcal{W} \}$ over allocations is an \emph{ex-post implementation} of the CERI if
\begin{enumerate}
\item $\mathcal{W}$ is a joint income distribution over $\mathbb{R}^n$ for which each marginal distribution in coordinate~$i$ coincides with $\randombudget{i}$, and
\item in each supporting allocation $\mathbf{X}^{\mathbf{w}}:= (x^\mathbf{w}_1,..,x^\mathbf{w}_n)$ we have that $
x^\mathbf{w}_i:= \max_{\succ_{\agent}} \{\gallocv{}:  \gallocv{} \in \choices_{\agent} \text{ and } \prices\cdot \gallocv{} \leq w_i \}$ for all $i$. 
\end{enumerate}

\end{definition}

An ex-post implementation of a CERI is a joint distribution over incomes (in which the marginals coincide with the agents' budget distributions) supported by a set of ex-post allocations in which every agent gets their most preferred bundle under CERI prices.  This definition entails one of the key conceptual contributions of the paper: the allocations in the support of any ex-post implementation are governed by a single CERI price vector.

We call an ex-post implementation of a CERI \emph{feasible} if all supporting allocations are feasible, i.e., $\sum_{i=1}^n x_i^\mathbf{w} \leq \supplyvector$; in this case, each $\mathbf{X}^{\mathbf{w}}$ is a competitive equilibrium allocation with deterministic incomes $\mathbf{w}$ and supported by the CERI price vector $\prices$. When agents have unit demand, a feasible ex-post implementation of the CERI always exists by the Birkhoff--von Neumann Theorem \citep{birkhoff1946tres,vonneumann1953assignment}.
 Unfortunately, due to the presence of combinatorial demand, in general it is impossible to implement a lottery allocation over feasible ex-post allocations (no matter which prices they are supported by). The reason is intuitive: the joint distribution of incomes $\mathcal{D}$ might allow several agents to independently draw a large budget and demand the same desirable bundle that cannot be provided by the designer.
We therefore allow the designer to relax the capacity of any good in any ex-post allocation by a small amount in order to achieve approximate implementability.

\begin{definition}\label{def:nearimplementation}
Given  an economy $\mathcal{E}=\economy$  and a profile of random incomes,  $(\randombudget{1},\ldots,\randombudget{\nagents})$, and a CERI $(\prices,\lottery{X}{})$, we say that an ex-post implementation $\mbox{$ \{\mathbf{X}^\mathbf{w} \;|\; \mathbf{w}\sim\mathcal{W} \}$}$ is \emph{$\kappa$-near-feasible} if all supporting allocations $ (x^\mathbf{w}_1,..,x^\mathbf{w}_n)$ satisfy
$$
\sum_{i=1}^n x^\mathbf{w}_i \le  \supplyvector +\kappa\cdot {\bf 1}.\footnote{Note that the definition of $\kappa$-near-feasible allocations only relaxes the upper bound, but does not impose a lower bound. It is straightforward to adapt this definition and the results to include a lower bound. See footnote~\ref{fn:lowerbound}.}
$$

\end{definition}
To illustrate the definitions of CERI, exact ex-post implementation and $\kappa$-near-feasible ex-post implementation, consider the following example.
\begin{example}
Consider two unit-demand agents $\{1,2\}$ and two goods $\{a,b\}$. Both agents have the same preference ordering: $\{a\} \succ_{1,2} \{b\}$. Each agent's random income is either $(\$1\, \text{w.p.}\, \frac{1}{2}, \$2\, \text{w.p.}\, \frac{1}{2})$.  A CERI in this economy has prices $(p_a,p_b)=(2,1)$. Each agent receives each good w.p. $\frac{1}{2}$.  There are both feasible and near-feasible implementations of this CERI. One ex-post implementation occurs when the joint income distribution is $(\$1, \$2)$ w.p. $\frac{1}{2}$ and $(\$2, \$1)$  w.p. $\frac{1}{2}$, leading to ex-post allocations $(b, a)$ and $(a, b)$, respectively (where, e.g., $a$ denotes a unit-vector in coordinate of good $a$.).

Another joint income distribution is $(\$1, \$1)$ w.p. $\frac{1}{2}$ and $(\$2, \$2)$ w.p. $\frac{1}{2}$, yielding ex-post allocations $(b, b)$ and $(a, a)$. This is not an exactly feasible ex-post implementation of the CERI, but it is a 1-near-feasible ex-post implementation. 
\end{example}

\subsection*{Existence and Implementability}
A CERI is not guaranteed to exist for all profiles of budget distributions (e.g., when budgets are deterministic). Nonetheless, we establish that a CERI always exists when the budget distributions are continuous.

\begin{theorem}[Existence]\label{theo:exist}
Given  an economy $\mathcal{E}=\economy$  and a profile of random incomes,  $(\randombudget{1},\ldots,\randombudget{\nagents})$,  if $\randombudget{i}$  %
is continuous on its domain for all $i \in \agentset$, %
then a CERI exists.
\end{theorem}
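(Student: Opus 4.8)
The plan is to reduce the statement to the existence of a price vector at which aggregate expected demand does not exceed supply, with equality in every good that carries a positive price; once such prices $\prices$ are in hand, setting $\lottery{x}{i}=\optbundle{\agent}(\prices,\randombudget{\agent})$ gives a lottery allocation that satisfies condition~(i) of Definition~\ref{def:ceri} by construction and condition~(ii) by the market-clearing property. The entire burden therefore falls on a competitive-equilibrium existence argument for the aggregate expected demand map $D(\prices):=\sum_{\agent\in\agentset}\expect{\optbundle{\agent}(\prices,\randombudget{\agent})}$. The decisive role of the continuity hypothesis is that it turns this demand from a discontinuous, multivalued object (as discrete optimal-bundle choice would ordinarily be) into a single-valued \emph{continuous} function of prices, after which a Brouwer fixed-point argument applies directly, with no convexification or Kakutani correspondence needed.

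First I would establish continuity of $D$. Fix an agent $\agent$ with continuous c.d.f.\ $F_\agent$ of $\randombudget{\agent}$, fix prices $\prices\ge 0$, and for each acceptable bundle $\x\in\choices_\agent$ write its cost $t_\x(\prices):=\prices\cdot\x$ and the cheapest strictly-preferred cost $s_\x(\prices):=\min_{\x'\succ_\agent\x}\prices\cdot\x'$ (with $s_\x=\infty$ when $\x$ is top-ranked). Because agent $\agent$ demands the $\succ_\agent$-best affordable bundle, the realized demand equals $\x$ exactly when $t_\x(\prices)\le b<s_\x(\prices)$, so by atomlessness of $F_\agent$,
\[
\prob\!\left[\optbundle{\agent}(\prices,\randombudget{\agent})=\x\right]=\max\!\bigl(0,\;F_\agent(s_\x(\prices))-F_\agent(t_\x(\prices))\bigr).
\]
Since $t_\x$ is linear and $s_\x$ is a minimum of finitely many linear functions, both are continuous in $\prices$, and $F_\agent$ is continuous; hence each such probability is continuous in $\prices$. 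Summing $\x\,\prob[\optbundle{\agent}=\x]$ over the finitely many bundles and over the $\nagents$ agents shows that $D(\prices)$ is continuous on the nonnegative orthant. This is the only step that uses the hypothesis, and it is exactly where atoms would otherwise create jumps, as a threshold $t_\x$ or $s_\x$ sweeps across a mass point.

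Next I would compactify the price domain and run a capped tâtonnement. Any bundle containing a unit of good $\good$ costs at least $\price{\good}$, so an agent with budget below $\price{\good}$ never demands such a bundle; since each demanded bundle carries at most $\Delta$ units, the expected demand for $\good$ at any price vector with $\price{\good}=\bar p$ is at most $\Delta\sum_{\agent}\prob[\randombudget{\agent}\ge\bar p]$, which tends to $0$ as $\bar p\to\infty$. I would therefore fix $\bar p$ large enough that this quantity is strictly below $\min_\good \supply{\good}$, restrict prices to the compact convex box $[0,\bar p]^{\ngoods}$, and apply Brouwer to the continuous self-map $g_\good(\prices):=\min\bigl(\bar p,\max(0,\price{\good}+D_\good(\prices)-\supply{\good})\bigr)$. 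At a fixed point $\prices^\ast$ one checks the three coordinate regimes: an interior price forces exact clearing $D_\good=\supply{\good}$; a zero price forces $D_\good\le\supply{\good}$; and a price equal to $\bar p$ would force $D_\good\ge\supply{\good}$, which the choice of $\bar p$ has ruled out. Hence every good satisfies $D_\good(\prices^\ast)\le\supply{\good}$ with equality whenever $\price{\good}>0$, i.e.\ condition~(ii), and $\prices^\ast$ yields a CERI.

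I expect the main obstacle to be the boundary behavior at large prices rather than the fixed point itself: the tâtonnement map has no analogue of Walras' law here (agents buy indivisible bundles and need not exhaust their budgets), so I cannot invoke a Gale--Nikaido--Debreu-style lemma and must instead rule out the ``$\price{\good}=\bar p$'' regime by hand. The tail estimate above does this for arbitrary---possibly unbounded---continuous budget supports, and it is the place where I would take care that the bound is uniform over the remaining coordinates $\prices_{-\good}$ and that capacities are positive so that $\min_\good\supply{\good}\ge 1>0$.
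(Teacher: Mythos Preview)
Your proposal is correct and follows essentially the same route as the paper: continuity of expected demand from atomless budgets, a price cap $\bar p$ (the paper's $P$) chosen so that demand for any good priced at the cap falls below supply, Brouwer applied to a capped t\^atonnement map, and a case check ruling out the cap at the fixed point. The only cosmetic difference is that the paper augments the domain to $[0,P]^m\times\mathcal{L}(\choices_1)\times\cdots\times\mathcal{L}(\choices_n)$ and lets the price update depend on the \emph{current} lottery component rather than on $D(\prices)$ directly; at a fixed point the lottery component collapses to $\optbundle{\agent}(\prices,\randombudget{\agent})$, so this is equivalent to your map on $[0,\bar p]^m$ alone---your version is in fact slightly leaner, and your explicit formula $\prob[\optbundle{\agent}=\x]=\max\bigl(0,F_\agent(s_\x(\prices))-F_\agent(t_\x(\prices))\bigr)$ and tail estimate make the continuity and boundary steps more transparent than the paper's one-line assertions.
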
 

The crucial observation in the proof of Theorem~\ref{theo:exist} is that with continuous income distributions, random demand is continuous in the lottery space. This allows us to apply a standard fixed-point argument.

For discrete distributions of agents' budgets, however, a CERI might not exist. Yet, we can slightly perturb the budget profile so that a CERI exists. Specifically, for any discrete budget distribution $\mathcal{B}_i$, we can replace each mass $\rho$ with a uniform distribution on $[\rho, \rho + \epsilon]$ for any $\epsilon > 0$. The resulting profile of budget distributions will have cumulative distribution functions that are continuous, and therefore, a CERI will exist under the new profile.

Next, we discuss ex-post implementation of CERI. In general, there is no exact ex-post implementation of a CERI allocation. %
However, if the designer can relax the supply of any good by one less than the size of the largest bundle, any CERI allocation be near-feasibly implemented.   

\begin{theorem}[Implementability]\label{theo:feasibility}
    Fix an economy $\mathcal{E} = \economy$, and a profile of random incomes,  $(\randombudget{1},\ldots,\randombudget{\nagents})$. 
    If the maximum size of an acceptable bundle is $\Delta$, then any CERI has a $\Delta-1$-near feasible ex-post implementation.\footnote{\label{fn:lowerbound} Using methods from \cite{nguyen2021delta}, it is straightforward to show that if a symmetric lower bound were introduced in Definition~\ref{def:nearimplementation} of $\kappa$-near-feasibility then the feasibility error on both bounds of the ex-post implementation would double.}
\end{theorem}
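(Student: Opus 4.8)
The plan is to fix the CERI prices $\prices$ and reduce the statement to a dependent-rounding problem about the agents' random demands. The first step is to note that, with $\prices$ held fixed, each random demand $\optbundle{\agent}(\prices,\randombudget{\agent})$ is a \emph{finite} lottery over bundles. Indeed, $\choices_\agent$ is finite, so as the realized income $b$ increases the maximizer $\max_{\succ_{\agent}}\{\gallocv{}\in\choices_\agent:\prices\cdot\gallocv{}\le b\}$ changes only at the finitely many thresholds $\{\prices\cdot\gallocv{}:\gallocv{}\in\choices_\agent\}$. These thresholds partition the support of $\randombudget{\agent}$ into finitely many intervals $I_\agent^1,\dots,I_\agent^{k_\agent}$, on each of which the demanded bundle is a constant $\bundle_\agent^t$. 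Writing $q_\agent^t:=\prob(\randombudget{\agent}\in I_\agent^t)$ gives $\expect{\lottery{x}{\agent}}=\sum_t q_\agent^t\,\bundle_\agent^t$, so condition (ii) of Definition~\ref{def:ceri} reads $\sum_{\agent}\sum_t q_\agent^t(\bundle_\agent^t)_\good\le\supply{\good}$ for every good $\good$.

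Second, I would show that it suffices to construct a joint distribution over the \emph{index vectors} $(t_1,\dots,t_n)$ whose $\agent$-th marginal assigns probability $q_\agent^t$ to index $t$ and in which every realization satisfies $\sum_{\agent}\bundle_\agent^{t_\agent}\le\supplyvector+(\Delta-1)\cdot\mathbf 1$. Such a coupling lifts to a genuine ex-post implementation: conditional on the drawn indices, draw each agent's income independently from $\randombudget{\agent}$ restricted to $I_\agent^{t_\agent}$. Because the conditional income lies in $I_\agent^{t_\agent}$, the induced demand is exactly $\bundle_\agent^{t_\agent}$, so the realized allocation is the coupled one; and mixing these conditional laws back with weights $q_\agent^t$ recovers $\randombudget{\agent}$, so the $\agent$-th income marginal is correct, exactly as Definition~\ref{def:implementation} requires.

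Third, I would recast the coupling as rounding the capacity-feasible fractional allocation $z_{\agent,t}:=q_\agent^t$ (which satisfies $\sum_t z_{\agent,t}=1$ and $\sum_{\agent,t}z_{\agent,t}(\bundle_\agent^t)_\good\le\supply{\good}$) into a lottery over integral allocations---one bundle per agent---that preserves each agent's marginal and overshoots each capacity by at most $\Delta-1$. The key structural fact is that every bundle has total size at most $\Delta$, i.e.\ $\sum_\good(\bundle_\agent^t)_\good\le\Delta$, so each column of the capacity-constraint matrix has $\ell^1$-weight at most $\Delta$. I would run a randomized iterated-rounding (pipage) procedure: starting from $z=q$, repeatedly move along a direction in the nullspace of the currently tight agent-equalities and active capacity constraints, splitting into the two endpoints with the mean-preserving probabilities so that $\expect{z}=q$ is maintained throughout. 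Each pivot freezes a variable at $0$ or $1$ or makes a capacity constraint tight, and a vertex-counting argument shows that the allocation is integral once no pivot direction remains. A constraint for good $\good$ is dropped from the active set once the total coefficient-weight of its remaining fractional variables falls below $\Delta$; a Beck--Fiala pigeonhole (the active constraints can never outnumber the fractional variables while a pivot direction exists, and each variable has column-weight at most $\Delta$) guarantees that such a droppable constraint always exists. Since a dropped constraint was feasible at drop time and its value can subsequently rise by strictly less than $\Delta$, every realized integral allocation overshoots each capacity by at most $\Delta-1$; the resulting distribution has mean $q$, hence the correct per-agent marginals. This is precisely the rounding lemma developed in \citet{nguyen2021delta}.

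The main obstacle is this rounding step, because one must simultaneously (a) preserve each agent's marginal lottery over bundles \emph{exactly}, which forces a mean-preserving (randomized pipage) rounding rather than a single deterministic rounding, and (b) keep the per-good overshoot at $\Delta-1$, which requires careful bookkeeping of when a capacity constraint may be dropped. The first two steps are essentially routine; all the genuine content lies in making these two requirements hold at once and in verifying that dropping a constraint exactly when its residual column-weight drops below $\Delta$ yields the sharp bound $\Delta-1$ rather than something larger.
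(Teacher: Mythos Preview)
Your proposal is correct and follows essentially the same route as the paper: reduce to a dependent-rounding problem on the finite support of each agent's random demand, invoke the $(\Delta-1)$-overshoot decomposition lemma, and then lift back to a joint income distribution via the conditional laws $\randombudget{\agent}\mid\{\text{demand}=\bundle_\agent^t\}$. The only difference is packaging: the paper cites the rounding step as a black box from \citet[Theorem~2.1]{nguyen2016assignment}, whereas you sketch its proof (pipage/Beck--Fiala) and attribute it to \citet{nguyen2021delta}; the underlying argument is the same.
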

The proof of Theorem~\ref{theo:feasibility} applies a generalization of the Birkhoff--von Neumann theorem due to \citet{nguyen2016assignment} and works as follows.
For each agent $i$, given any realization of the random demand, we can construct a conditional income distribution under which the agent’s consumption matches the realization. Since the marginals of the lottery over near-feasible ex-post allocations coincide with the random demand for each agent $i$, the marginals of the corresponding joint income distribution also coincide with $\randombudget{\agent}$ for all $\agent$. 

Two properties of the near-feasibility bound in Theorem~\ref{theo:feasibility} are worth emphasizing. First, the bound is tight. For example, when  $\Delta=1$ we recover the Birkhoff--von Neumann theorem.
Second, the bound is expressed as a good-by-good capacity violation that is independent of market size. This means that in order to ensure exact feasibility in any market, the designer simply has to reduce the capacities of each good by at most $\Delta-1$.\footnote{By contrast, the $\ell^2$-norm near-feasibility bound for ACEEI depends on the market size \citep{budish2011combinatorial}. Moreover, expressing possible capacity violations in the market-size-dependent $\ell^2$-norm makes it less practical for the designer to preemptively adjust capacity to ensure exact ex-post feasibility.}

\ignore{
A CERI specifies a lottery for each agent, but it does not require that all these lotteries can be simultaneously realized as a lottery over feasible allocations. In fact,  ordinally efficient allocations might not realizable over feasible deterministic allocations. However, we will now show that any ordinally efficient allocation (and hence any CERI allocation) can be approximately realized over ex-post efficient allocations.

We first introduce our notion of approximate ex-post efficiency.

\begin{definition}\label{def:deltaeff}Given an economy $\mathcal{E} = \economy$, 
    a deterministic allocation $\allocs$ is $\Delta$-ex-post efficient if there exists $\supplyvector'\le \supplyvector +\Delta\cdot {\bf 1}$, such that $\allocs$ is Pareto-efficient with respect to $\supplyvector'$.
\end{definition}

In words, $\Delta$-ex-post efficiency requires that when the supply of each good is relaxed by at most $\Delta$, the allocation is Pareto-efficient. The following result gives a tight condition on the realizability of ordinally efficient allocations.

\begin{theorem}\label{theo:feasibility} Fix an economy $\mathcal{E} = \economy$.
If the maximum bundle is of size at most $\Delta$, then any ordinally efficient random allocation is realizable over $(\Delta-1)$-ex-post efficient allocations.

\end{theorem}

\begin{proof}
Let $\lottery{X}{}$ be an ordinally efficient random allocation, established as equivalent to CERI. Let $\prices^*$ be the equilibrium price for this allocation. When realized as a lottery over deterministic allocations, each realized allocation corresponds to the realization of random budgets. Agents then consume their optimal bundle under the realized budget and price $\prices^*$. Let $\supplyvector'$ be the corresponding aggregate consumption; this deterministic allocation is CE with respect to $\supplyvector'$. Therefore, the realized allocation is ex-post efficient with respect to $\supplyvector'$.

We use the following result in \citet[Theorem 2.1]{nguyen2016assignment} :
\begin{quote}
    Any feasible random allocation with respect to $\supplyvector$ can be realized through deterministic allocations that are feasible 
    with respect to $\supplyvector + (\Delta-1) \cdot {\bf 1}$.
\end{quote}
As a consequence, we can construct $\lottery{X}{}$ such that each realized aggregate consumption satisfies $\supplyvector' \le \supplyvector + (\Delta-1) \cdot {\bf 1}$. This allows us to derive the desired result.
 \end{proof}

Note that the bound in  Theorem~\ref{theo:feasibility}  is tight. For example, when  $\Delta=1$ we recover the Birkhoff-von Neumann theorem.

}

\subsection*{Efficiency} We now analyze the efficiency properties of CERI both from the ex-ante and ex-post perspectives. We begin with the definition of ordinal efficiency.
\begin{definition}[\citeauthor{bogomolnaia2001new}, \citeyear{bogomolnaia2001new}]
Given an economy $\mathcal{E} = \economy$, a feasible lottery  allocation $\lottery{X}{}= (\lottery{x}{1},..,\lottery{x}{n})$  is \emph{ordinally efficient} if there does not exist another 
feasible  random  lottery allocation $\lottery{Y}{}= (\lottery{y}{1},..,\lottery{y}{n})$ such that 
$\lottery{y}{i} \succeq^{sd}_i \lottery{x}{i}$ for all $i$, and for at least for one agent the inequality is strict.
\end{definition}
A key result of our paper is that CERI characterizes ordinally efficient outcomes thereby  
\begin{theorem}[Characterization and Welfare Theorems]
\label{thm:characterization}
Given an economy $\mbox{$\mathcal{E} = \economy$}$, an allocation $\lottery{X}{}$ is ordinally efficient  if and only if there exists a profile of random incomes $(\randombudget{1},\ldots,\randombudget{\nagents})$ for which $\lottery{X}{}$ is a CERI allocation.  

\end{theorem}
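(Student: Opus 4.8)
The plan is to prove the two implications separately, as a First and a Second Welfare Theorem.

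\emph{CERI allocation $\Rightarrow$ ordinal efficiency.} Fix a CERI $(\prices,\lottery{X}{})$ with random incomes $(\randombudget{1},\ldots,\randombudget{\nagents})$. For agent $i$ and bundle $\z$, let $m_i(\z):=\min\{\prices\cdot\x : \x\in\choices_i,\ \x\succeq_i\z\}$ be the minimum cost of obtaining something weakly preferred to $\z$. The first observation is that at CERI prices the random demand spends cost-minimally: if $\x$ is demanded at some budget then $\prices\cdot\x=m_i(\x)$, since otherwise a strictly preferred bundle would be both affordable and no more expensive, contradicting optimality. Meanwhile every lottery satisfies $\expect{\prices\cdot\lottery{z}{}}\ge\expect{m_i(\lottery{z}{})}$ because $\prices\cdot\x\ge m_i(\x)$ pointwise. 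As $m_i$ is weakly monotone with respect to $\succ_i$, a summation-by-parts identity shows that stochastic dominance $\lottery{y}{i}\succeq^{sd}_i\lottery{x}{i}$ implies $\expect{m_i(\lottery{y}{i})}\ge\expect{m_i(\lottery{x}{i})}$. Chaining these facts yields $\expect{\prices\cdot\lottery{y}{i}}\ge\expect{\prices\cdot\lottery{x}{i}}$ for every agent. Summing over agents, feasibility $\sum_i\expect{\lottery{y}{i}}\le\supplyvector$ and market clearing (which gives $\prices\cdot\sum_i\expect{\lottery{x}{i}}=\prices\cdot\supplyvector$) force all these inequalities to be equalities. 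To derive a contradiction from \emph{strict} dominance for some $i_0$, I would argue good-by-good: equality of aggregate expenditure with $\prices\ge 0$ forces every positively priced good to be exactly cleared under $\lottery{Y}{}$ as well, so there is no spare capacity into which the extra upper-contour mass demanded by $i_0$ can be placed, while any slack must sit on zero-priced goods and can be handled directly. I expect this reconciliation of the cost-based inequality with the good-by-good feasibility left by complementary slackness to be the delicate part of this direction.

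\emph{Ordinal efficiency $\Rightarrow$ CERI allocation.} Here I would first invoke \citet{mclennan2002ordinal} (adapted to the combinatorial polytope of feasible marginals by a supporting-hyperplane argument): since $\lottery{X}{}$ is ordinally efficient, there exist von Neumann--Morgenstern utilities $u_i$ consistent with $\succ_i$ for which $\lottery{X}{}$ maximizes $\sum_i\expect{u_i(\lottery{x}{i})}$ over all feasible lottery allocations. Writing this as a linear program in the marginals $q_{i,\x}=\mathbb{P}_{\x}(\lottery{x}{i})$, with constraints $\sum_\x q_{i,\x}=1$ and $\sum_{i,\x} q_{i,\x}\x_\good\le\supply{\good}$, I would take its dual. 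The dual variables of the capacity constraints supply a single, anonymous nonnegative price vector $\prices$, and complementary slackness shows that each agent's support consists precisely of the bundles maximizing $u_i(\x)-\prices\cdot\x$. Two consequences follow: within agent $i$'s support the preference order coincides with the cost order (more preferred bundles are strictly more expensive), and no affordable bundle at the cost of a supported bundle is strictly preferred to it.

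It then remains to reconstruct independent random incomes reproducing $\lottery{X}{}$. For each agent, order the supported bundles by decreasing preference and decreasing cost, $t_1>t_2>\cdots>t_L$, with probabilities $q_1,\ldots,q_L$, and let $\randombudget{i}$ place mass $q_k$ at budget $t_k$. By the complementary-slackness characterization, the unique best affordable bundle at budget $t_k$ is exactly the $k$-th supported bundle, so $\optbundle{i}(\prices,\randombudget{i})$ coincides with $\lottery{x}{i}$. Market clearing in the CERI sense follows from the price-side complementary slackness of the same program: goods with $\price{\good}>0$ have a tight capacity constraint and hence clear exactly, while all goods satisfy $\sum_i\expect{\lottery{x}{i}}_\good\le\supply{\good}$ by feasibility. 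The main obstacle in this direction, and the conceptual heart of the theorem, is obtaining a \emph{single anonymous} price vector that simultaneously rationalizes every agent's lottery as a demand; this is exactly what the dual of the utilitarian program delivers, and it is the feature that distinguishes CERI from mechanisms such as ACEEI that rely on different prices for different budget realizations.
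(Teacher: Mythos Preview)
Your two directions are each sound in outline, but the route is quite different from the paper's. The paper dispatches both implications in one stroke via Farkas' lemma: it observes that a feasible $\lottery{X}{}$ is a CERI allocation iff there is a price vector $\prices\ge 0$ with $p_j=0$ on underallocated goods and $\prices\cdot(\y'-\y)\ge 1$ for every pair $\y'\succ_i\y$ with $\y$ in agent $i$'s support. By Farkas, this linear system is infeasible exactly when the alternative $\{\mathcal{A}^\top\lambda\le 0,\ \lambda\ge 0,\ \mathbf{1}^\top\lambda>0\}$ has a solution, and the paper reads $(\lambda_{i,\y',\y})$ directly as a feasible shift of probability mass from less- to more-preferred bundles---a certificate of ordinal inefficiency. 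This avoids any appeal to McLennan and makes the supporting price vector and the improving direction literal LP duals of one another.

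Your forward direction has a gap in the last step. After collapsing the chain $\expect{\prices\cdot\lottery{y}{i}}\ge\expect{m_i(\lottery{y}{i})}\ge\expect{m_i(\lottery{x}{i})}=\expect{\prices\cdot\lottery{x}{i}}$ to equalities, the ``good-by-good'' argument you sketch does not close: moving mass from $\x$ to a preferred $\y$ need not demand more of any particular good, so exact clearing of positively priced goods under $\lottery{Y}{}$ does not by itself obstruct the shift. The correct finish stays inside the cost chain. At the least index $k^*$ for which the upper-tail probabilities of $\lottery{y}{i_0}$ and $\lottery{x}{i_0}$ differ, the bundle $\z_{k^*-1}$ must lie in the support of $\lottery{x}{i_0}$ (its point mass under $\lottery{x}{i_0}$ strictly exceeds that under $\lottery{y}{i_0}$), so every bundle $\succeq_{i_0}\z_{k^*}$ is strictly more expensive than $\z_{k^*-1}$, giving $m_{i_0}(\z_{k^*})>m_{i_0}(\z_{k^*-1})$. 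Abel summation then forces $\expect{m_{i_0}(\lottery{y}{i_0})}>\expect{m_{i_0}(\lottery{x}{i_0})}$, contradicting the equalities.

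Your reverse direction via McLennan plus the utilitarian LP dual is correct: complementary slackness makes each supported bundle maximize $u_i(\x)-\prices\cdot\x$, so strictly preferred bundles are strictly more expensive and anything affordable at a supported bundle's cost is weakly worse; placing mass $\mathbb{P}(\lottery{x}{i}=\x)$ on budget $\prices\cdot\x$ therefore reproduces $\lottery{x}{i}$. This is exactly the construction the paper uses to pass from prices back to random incomes. The added cost of your route is the need to invoke (and extend to the combinatorial feasibility polytope) McLennan's separating-hyperplane theorem, whereas the paper's Farkas argument manufactures the prices directly from the ordinal data without any cardinal scaffolding.
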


The key insight in the proof lies in the requirement that for an allocation to be supported by a CERI, there must exist a price vector $\mathbf{p}$ such that, for every agent $i$, if the probability of consuming a bundle $\mathbf{x}$ is positive, then for every bundle $\mathbf{y} \succ_i \mathbf{x}$, the cost of $\mathbf{y}$ must be strictly greater than the cost of $\mathbf{x}$. Otherwise, agent $i$ would be better off consuming $\mathbf{y}$ instead of $\mathbf{x}$. Conversely, if such a price vector exists, it is possible to construct random budgets for the agents so that each lottery allocation precisely corresponds to the optimal consumption under the random budget.

Note that CERI prices can be scaled to ensure that the cost difference between bundles $\mathbf{y}$ and $\mathbf{x}$ is at least 1. Consequently, a lottery allocation can be sustained by a CERI if and only if there exists a price vector $\mathbf{p} \geq 0$ such that
 $$\prices \cdot (\y-\x)\ge 1 \text{ for every } \y\succ_i \x \text{ and } \prob(\tilde{\mathbf{x}}_i=\x)>0. $$ 

Therefore, the presence of a price vector $\mathbf{p}$ satisfying the given conditions can be expressed as a solution to a linear program. Farkas' lemma offers a characterization for the existence of such a solution through the dual of the linear program.  We show that the dual condition precisely captures the criterion for the allocation to be ordinally efficient.

 We now consider whether the allocations in the ex-post implementation of CERI are Pareto-efficient. As Theorem~\ref{theo:feasibility} showed ex-post implementations of CERI are only near-feasible in general. The following definition adapts Pareto efficiency to take into account the near-feasibility of allocations.

\begin{definition}\label{def:deltaeff}Given an economy $\mathcal{E} = \economy$, 
    a deterministic allocation $\allocs$ is \emph{$\kappa$-ex-post efficient} if there exists $\supplyvector'\le \supplyvector +\kappa\cdot {\bf 1}$, such that $\allocs$ is Pareto-efficient with respect to $\supplyvector'$.
\end{definition}

In words, $\kappa$-ex-post efficiency requires that when the supply of each good is relaxed by at most $\kappa$, the allocation is Pareto-efficient.
Each allocation in the CERI ex-post implementation corresponds to a competitive equilibrium under a particular realization of budgets and adjusted capacities. By the First Welfare Theorem, such a competitive equilibrium is Pareto-efficient with respect to the adjusted capacities. This logic is summarized in the following result.

\begin{theorem}[Ex-post efficiency]\label{theo:expostefficiency}
    All deterministic allocations in a $\Delta-1$-near feasible ex-post implementation of a CERI are $\Delta-1$-ex-post efficient.  
\end{theorem}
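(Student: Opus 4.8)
The plan is to show that each supporting allocation $\mathbf{X}^\mathbf{w}$ in the ex-post implementation is itself an ordinary competitive equilibrium allocation with respect to a suitably adjusted supply, and then to invoke a First Welfare Theorem for that allocation. Fix any $\mathbf{w}$ in the support of $\mathcal{W}$ and write $\mathbf{X}^\mathbf{w}=(x_1^\mathbf{w},\dots,x_n^\mathbf{w})$. I would define the \emph{realized supply} $\supplyvector' := \sum_{i} x_i^\mathbf{w}$. By the definition of $\Delta-1$-near-feasibility we immediately have $\supplyvector' \le \supplyvector + (\Delta-1)\cdot\mathbf{1}$, so it remains only to prove that $\mathbf{X}^\mathbf{w}$ is Pareto-efficient with respect to $\supplyvector'$, which will establish $\Delta-1$-ex-post efficiency with the witness $\supplyvector'$.

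First I would record the two equilibrium features of $\mathbf{X}^\mathbf{w}$ that come directly from Definition~\ref{def:implementation}: (a) each agent's bundle is budget-feasible and demand-optimal, i.e.\ $x_i^\mathbf{w}=\max_{\succ_i}\{x\in\choices_i : \prices\cdot x\le w_i\}$, so in particular $\prices\cdot x_i^\mathbf{w}\le w_i$; and (b) markets clear exactly at $\supplyvector'$, by construction of $\supplyvector'$ as the aggregate consumption. Together these exhibit $(\prices,\mathbf{X}^\mathbf{w})$ as a competitive equilibrium with deterministic incomes $\mathbf{w}$ and supply $\supplyvector'$, supported by the single CERI price vector $\prices$.

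Next comes the heart of the argument. Suppose, for contradiction, that some allocation $\mathbf{X}'=(x_1',\dots,x_n')$ with $\sum_i x_i'\le\supplyvector'$ Pareto-dominates $\mathbf{X}^\mathbf{w}$: every agent weakly prefers $x_i'$ and at least one strictly. Since preferences are strict, "weakly prefers" means $x_i'=x_i^\mathbf{w}$ or $x_i'\succ_i x_i^\mathbf{w}$. For any agent with $x_i'\succ_i x_i^\mathbf{w}$, optimality of $x_i^\mathbf{w}$ within the budget set forces $x_i'$ to lie outside it, so $\prices\cdot x_i' > w_i \ge \prices\cdot x_i^\mathbf{w}$; for the remaining agents $\prices\cdot x_i'=\prices\cdot x_i^\mathbf{w}$. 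Summing over agents and using that at least one strict improvement exists yields $\prices\cdot\sum_i x_i' > \prices\cdot\sum_i x_i^\mathbf{w}=\prices\cdot\supplyvector'$. But $\prices\ge 0$ together with $\sum_i x_i'\le\supplyvector'$ gives $\prices\cdot\sum_i x_i'\le\prices\cdot\supplyvector'$, a contradiction. Hence $\mathbf{X}^\mathbf{w}$ is Pareto-efficient with respect to $\supplyvector'$, and since $\supplyvector'\le\supplyvector+(\Delta-1)\cdot\mathbf{1}$ it is $\Delta-1$-ex-post efficient. As $\mathbf{w}$ was arbitrary, the claim holds for every supporting allocation.

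The step I expect to require the most care is the First Welfare direction, specifically the bookkeeping that converts a strict ordinal improvement for one agent into a strictly higher total expenditure. The key subtlety is ruling out that a strictly preferred bundle could cost the same as or less than the incumbent; this is precisely where demand-optimality of $x_i^\mathbf{w}$ within $\{x:\prices\cdot x\le w_i\}$ is used, yielding $\prices\cdot x_i'>w_i$ rather than merely $\prices\cdot x_i'>\prices\cdot x_i^\mathbf{w}$. Everything else---the passage from exact market clearing at $\supplyvector'$ to the aggregate price inequality, and the final appeal to near-feasibility---is routine and, importantly, uses only monotonicity of preferences (at the level of degenerate lotteries) rather than any expected-utility assumption.
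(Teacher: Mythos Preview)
Your proposal is correct and follows essentially the same route as the paper: observe that each supporting allocation $\mathbf{X}^{\mathbf{w}}$ is a competitive equilibrium with deterministic incomes $\mathbf{w}$ and adjusted supply $\supplyvector'=\sum_i x_i^{\mathbf{w}}\le\supplyvector+(\Delta-1)\cdot\mathbf{1}$, then invoke the First Welfare Theorem. The only difference is that the paper cites the First Welfare Theorem as a black box, whereas you spell out the standard expenditure-summing argument explicitly.
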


\begin{proof}
By Theorem~\ref{theo:feasibility}, every CERI has a $\Delta-1$-near-feasible ex-post implementation.
Using Definitions~\ref{def:implementation} and~\ref{def:nearimplementation}, note that each allocation in this ex-post implementation is a competitive equilibrium allocation with adjusted capacities. Therefore, by the First Welfare Theorem, each allocation is Pareto-efficient with respect to these adjusted capacities and hence $\Delta-1$-ex-post efficient.
\end{proof}

If the designer wished to ensure that the capacities are never violated ex-post, she can progressively reduce capacities, find a CERI and then check whether the allocations in the ex-post implementation violate the capacities. Theorem~\ref{theo:feasibility} and~\ref{theo:expostefficiency} guarantee can such preliminary capacity reductions never have to exceed $\Delta-1$ for any good.

\subsection*{Envy-freeness}

We now examine the envy-freeness properties of CERI both from the ex-ante and ex-post perspectives.

\begin{definition}[\citeauthor{bogomolnaia2001new}, \citeyear{bogomolnaia2001new}]
A lottery allocation $(\Tilde{\bf x}_1,.., \Tilde{\bf x}_n)$ is \emph{ordinally} \emph{envy-free} if
  $\Tilde{\bf x}_{i}\succeq^{sd}_{i} \Tilde{\bf x}_{i'}$ for all pairs of agents $i,  i'$.  
\end{definition}

In an ordinally envy-free lottery allocation no agent prefers the lottery of another agent. The Probabilistic Serial mechanism and the Bundled Probabilistic Serial mechanism are both ordinally envy-free, but the RSD and the ACEEI mechanism are not.
Since exact ex-post envy-freeness is not achievable in our setting due to the indivisibility of goods, we introduce the following relaxation based on envy-freeness up to one good due to \citet{lipton2004approximately} and \citet{budish2011combinatorial}.

\begin{definition} 
A lottery allocation $(\Tilde{\bf x}_1,.., \Tilde{\bf x}_n)$ is \emph{(ex-post) envy-free up to one good (EF1)}, if for all pairs of agents $i,  i'$  and every  realization ${\bf x}_i$ of $\Tilde{\bf x}_i$ and  ${\bf x}_{i'}$ of $\Tilde{\bf x}_{i'}$
there exists a good $j$ such that   ${\bf x}_i\succeq_i ({\bf x}_{i'}-{\bf e}^j)^+$. 
\end{definition}

Note that our definition of ex-post envy-freeness is a slight generalization of \citeauthor{lipton2004approximately}'s and \citeauthor{budish2011combinatorial}'s envy-freeness up to one good (which is satisfied by ACEEI allocations), because their definition applies to allocations rather than lottery allocations. Here, we extend envy-freeness up to one good to lottery allocations by requiring that every realization of the lottery allocation must satisfy EF1. 

The envy-freeness properties of CERI are summarized in the following result.

\begin{theorem}[Envy-freeness]\label{theo:envy}
Consider an economy $\mathcal{E}=\economy$ and a profile of random incomes $(\randombudget{1},\ldots,\randombudget{\nagents})$.
\begin{itemize}

\item[(i)] If for all agents $i, j \in N$, we have that $\randombudget{i} = \randombudget{j}$, %
then the CERI allocation is ordinally envy-free. \item[(ii)] If the support of all the budget distributions is within the interval $[b, \frac{\Delta}{\Delta-1}b]$, where $b > 0$ is a constant, then the ex-post implementation of the CERI is ex-post EF1. 
\end{itemize}
\end{theorem}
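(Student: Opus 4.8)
The plan is to prove both parts through a common device: converting the preference comparisons underlying stochastic dominance and EF1 into \emph{cost} comparisons under the single CERI price vector $\prices$. For an agent $i$ and a threshold bundle $\z \in \choices_i$, I would introduce the minimal cost of a bundle that $i$ weakly prefers to $\z$, namely $c_i(\z) := \min\{\prices\cdot\mathbf{w} : \mathbf{w}\in\choices_i,\ \mathbf{w}\succeq_i\z\}$. The basic lemma driving everything is that, since at each income realization agent $i$ consumes the $\succ_i$-maximal affordable bundle at prices $\prices$, agent $i$'s realized bundle is $\succeq_i\z$ \emph{if and only if} the realized income is at least $c_i(\z)$.

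For part (i), I would fix an arbitrary pair $i,i'$ and an arbitrary $\z\in\choices_i$ and compare $\prob(\lottery{x}{i}\succeq_i\z)$ with $\prob(\lottery{x}{i'}\succeq_i\z)$ (read as the probabilities that the respective lotteries yield a bundle $\succeq_i\z$, i.e.\ the quantities appearing in the stochastic-dominance definition). By the lemma, the first equals $\prob(\breal\ge c_i(\z))$ with $\breal\sim\randombudget{i}$. For the second, observe that whenever agent $i'$'s realized bundle $\mathbf{w}$ satisfies $\mathbf{w}\succeq_i\z$, it is a bundle weakly preferred by $i$ to $\z$ whose cost is at most $i'$'s realized income (being affordable for $i'$); hence that income must be at least $c_i(\z)$. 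Thus $\prob(\lottery{x}{i'}\succeq_i\z)\le\prob(\breal'\ge c_i(\z))$ with $\breal'\sim\randombudget{i'}$. The hypothesis $\randombudget{i}=\randombudget{i'}$ makes the two bounds identical, yielding $\prob(\lottery{x}{i}\succeq_i\z)\ge\prob(\lottery{x}{i'}\succeq_i\z)$ for every $\z$, which is precisely $\lottery{x}{i}\succeq^{sd}_i\lottery{x}{i'}$.

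For part (ii), I would fix a single realization of the ex-post implementation, with incomes $w_i,w_{i'}\in[b,\tfrac{\Delta}{\Delta-1}b]$ and bundles $\mathbf{x}_i,\mathbf{x}_{i'}$, each the $\succ$-maximal affordable bundle at $\prices$. Let $j$ attain the maximum per-unit price among the (at most $\Delta$) units in $\mathbf{x}_{i'}$. Since the most expensive of $k\le\Delta$ units carries at least a $1/k\ge 1/\Delta$ share of the total cost, removing it gives $\prices\cdot(\mathbf{x}_{i'}-\mathbf{e}^j)^+\le\tfrac{\Delta-1}{\Delta}\,\prices\cdot\mathbf{x}_{i'}\le\tfrac{\Delta-1}{\Delta}\,w_{i'}\le\tfrac{\Delta-1}{\Delta}\cdot\tfrac{\Delta}{\Delta-1}\,b=b\le w_i$. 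Hence $(\mathbf{x}_{i'}-\mathbf{e}^j)^+$ is affordable to agent $i$, and because $\mathbf{x}_i$ is $i$'s most preferred affordable bundle, $\mathbf{x}_i\succeq_i(\mathbf{x}_{i'}-\mathbf{e}^j)^+$, which is exactly EF1. The empty-bundle corner case $\mathbf{x}_{i'}=\mathbf{0}$ I would dispatch separately, using that $\mathbf{0}$ is least preferred.

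The main obstacle I anticipate is not the algebra but the bundle-comparability bookkeeping. In part (i) the cost argument is clean only once one checks that $i'$'s realized bundle, \emph{when} it is preferred by $i$ to $\z$, lies in the domain of $\succ_i$ and is affordable, so that it genuinely witnesses the bound $c_i(\z)$; in part (ii) the EF1 comparison presupposes that $(\mathbf{x}_{i'}-\mathbf{e}^j)^+$ is an acceptable bundle that $i$ can rank. The quantitative heart of part (ii) is the identity $\tfrac{\Delta-1}{\Delta}\cdot\tfrac{\Delta}{\Delta-1}=1$, which is exactly why the support window $[b,\tfrac{\Delta}{\Delta-1}b]$ is the right one: it forces the reduced bundle's cost down to at most $b$, the smallest income any envying agent could hold.
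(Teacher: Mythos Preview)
Your proposal is correct and follows essentially the same approach as the paper: part (i) rests on the revealed-preference observation that, at a common price vector and budget, $i$'s optimal bundle is weakly preferred by $i$ to $i'$'s optimal bundle (you encode this via the cost threshold $c_i(\z)$, the paper via a direct budget coupling, but the two formulations are equivalent). For part (ii) the paper simply cites Theorem~3 of \citet{budish2011combinatorial} without spelling out the argument, and the ``remove the most expensive unit'' calculation you give is exactly that proof.
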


Note that the two conditions that ensure ex-ante and ex-post envy-freeness properties in Theorem~\ref{theo:envy} are independent. If income distributions are not the same, but have a sufficiently small support, it is possible that a CERI allocation is not ordinally envy-free, but each allocation in the ex-post implementation is EF1. Conversely, if income distributions are identical but come from a large interval, then the CERI allocation will be ordinally envy-free, but some (or all) allocations in the ex-post implementation would not be EF1.

\section{The CERI mechanism and relationships to existing mechanisms}
\label{sec:cerimech}
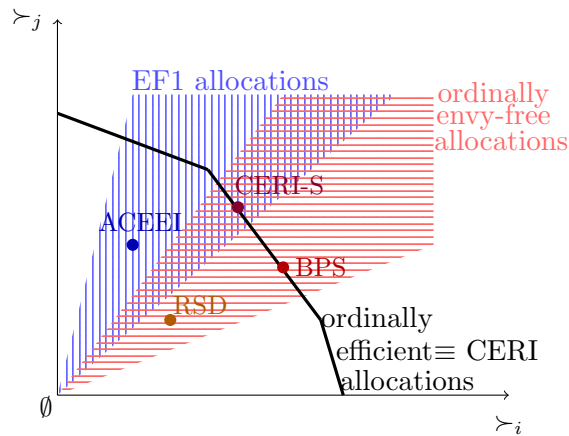
\begin{figure}
\smaller
\begin{centering}
\begin{tikzpicture}

  \fill[pattern=wide vertical lines, pattern color={rgb,255:red,102; green,102; blue,255}] (0,0) -- (1,4) -- (4.5,4)  -- cycle;
\fill[pattern=wide horizontal lines, pattern color={rgb,255:red,255; green,102; blue,102}](0,0) -- (3,4) -- (5,4) -- (5,2) -- cycle;

  \draw[->] (0,0) -- (6,0);
  \draw[->] (0,0) -- (0,5);

  \draw[very thick] (0,3.75) -- (2,3);
  \draw[very thick] (2,3) -- (3.5,1);
  \draw[very thick] (3.5,1) -- (3.8,0);

  \fill[orange!70!black] (1.5,1) circle[radius=0.08];
  \node[orange!70!black] at (1.9,1.2) {RSD};
  
  \node at (-0.15,-0.15) {$\emptyset$};
  \node at (-0.4,5) {$\succ_j$};
  \node at (6,-0.4) {$\succ_i$};
  \node at (4.25,1) {\small ordinally};
  \node at (4.35,0.6) {\small efficient};
  \node at (4.65,0.2) {\small allocations};
  \node at (5.7,0.6) {\small $\equiv$ CERI};
  
  \fill[blue!70!black] (1,2) circle[radius=0.08];
  \node[blue!70!black] at (1.1,2.3) {ACEEI};
  
  \fill[red!70!black] (3,1.7) circle[radius=0.08];
   \node[red!70!black] at (3.5,1.7) {BPS};
   
  \fill[purple!70!black] (2.4,2.5) circle[radius=0.08];
  \node[purple!70!black] at (2.95,2.8) {\CERIU};

  \node[blue!70] at (2.3,4.2) {\small EF1 allocations};

  \node[red!60] at (5.8,4) {\small ordinally};
  \node[red!60] at (5.8,3.7) {\small envy-free};
  \node[red!60] at (5.9,3.4) {\small allocations};
  
\end{tikzpicture}
\caption{Illustration of  the properties of the allocations of different mechanisms. Each point is a profile of lotteries for the two agents. More preferred lotteries are further from the origin. The frontier represents the set of ordinally efficient allocations, all of which can be achieved by a CERI. BPS = Bundled Probabilistic Serial. ACEEI = Approximate Competitive Equilibrium from Equal Incomes. RSD = Random Serial Dictatorship. %
ACEEI is a deterministic allocation near the frontier, with uncertain placement either inside or outside the frontier. Conversely, \CERIU~and BPS are on the frontier and can be precisely realized  with a lottery over allocations near the frontier.
\label{fig:tradeoffs}}
\end{centering}
\end{figure}

The properties described in the previous section allow us to define a range of mechanisms that implement a CERI. Abstractly, a \emph{CERI mechanism} maps an economy to a CERI allocation. Concretely, a CERI mechanism consists of the following steps: 
\begin{itemize}
\item[(i)] fix a profile of random incomes,
\item[(ii)] ask every agent $i\in N$ to report their $\succ_i$, 
\item[(iii)] compute a CERI, and 
\item[(iv)] construct a $\Delta-1$-near feasible ex-post implementation of this CERI.

\end{itemize}

A key parameter in any CERI mechanism is the profile of random incomes in step (i) because its choice governs the envy-freeness and incentive compatibility properties.

\subsection{\CERIU\ mechanism} For any economy, let \CERIU\ be a CERI mechanism in which every agent's income distribution is $U[1, 1+\epsilon]$ where $\epsilon< 1/m$. The following result summarizes the main properties of \CERIU.\footnote{Pronounced ``serious'' where ``S'' is for ``same and small income distributions''.}

\begin{theorem}[\CERIU]\label{thm:ceris}
The \CERIU\ mechanism is (i) ordinally efficient, (ii) $\Delta-1$-ex-post efficient, (iii) ordinally envy-free, (iv) ex-post envy-free up to one good, (v) strategyproof in the large.
\end{theorem}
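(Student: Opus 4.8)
The plan is to assemble the theorem from the earlier results, since \CERIU\ is just a CERI mechanism whose only distinctive feature is the choice of a common, continuous, small-support income distribution $U[1,1+\epsilon]$ with $\epsilon<1/m$. First I would check that the mechanism is well-defined: because $U[1,1+\epsilon]$ is continuous on its domain, Theorem~\ref{theo:exist} guarantees that a CERI exists, and Theorem~\ref{theo:feasibility} guarantees that step~(iv) can always produce a $\Delta-1$-near-feasible ex-post implementation. With this in hand, the allocation rule of \CERIU\ is a CERI allocation and its realizations are exactly the deterministic allocations of that implementation.

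Properties (i) and (ii) then follow immediately. For (i), the First Welfare Theorem direction of Theorem~\ref{thm:characterization} states that every CERI allocation is ordinally efficient; since the \CERIU\ allocation is by construction a CERI allocation (feasible in expectation by the market-clearing condition of Definition~\ref{def:ceri}), it is ordinally efficient. For (ii), Theorem~\ref{theo:expostefficiency} says that every deterministic allocation in a $\Delta-1$-near-feasible ex-post implementation of a CERI is $\Delta-1$-ex-post efficient; applying this to the implementation built in step~(iv) gives the claim for all realizations, i.e.\ as an ex-post property.

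Properties (iii) and (iv) come from Theorem~\ref{theo:envy}. Since every agent is assigned the identical income distribution $\randombudget{i}=U[1,1+\epsilon]$, part~(i) of that theorem delivers ordinal envy-freeness, which is (iii). For (iv), part~(ii) requires the common support to lie inside an interval of the form $[b,\tfrac{\Delta}{\Delta-1}b]$; taking $b=1$ the requirement becomes $1+\epsilon\le\tfrac{\Delta}{\Delta-1}$, i.e.\ $\epsilon\le\tfrac{1}{\Delta-1}$. The one piece of bookkeeping is to verify that the design choice $\epsilon<1/m$ secures this: because every acceptable bundle has size at most $\Delta\le m$, we obtain $\epsilon<1/m\le 1/(\Delta-1)$, so $[1,1+\epsilon]\subseteq[1,\tfrac{\Delta}{\Delta-1}]$ and Theorem~\ref{theo:envy}(ii) applies, yielding ex-post EF1.

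The only property that is not an immediate corollary is (v), and this is where I expect the real work. The route I would take is first to note that ordinal envy-freeness is robust to cardinalization: since $\Tilde{\bf x}_i\succeq^{sd}_i\Tilde{\bf x}_{i'}$ for all $i,i'$, the \CERIU\ allocation is envy-free under every vNM representation consistent with $\succ_i$, and the mechanism treats agents anonymously. I would then invoke \citet{azevedo2019strategy}, whose result is that an anonymous, envy-free mechanism is strategyproof in the large. The economic content I would spell out to justify the hypotheses is the price-taking argument: as the market grows, a single agent's report perturbs aggregate expected demand, and hence the CERI price vector $\prices$, by a vanishing amount, so in the limit the agent faces fixed prices; against fixed prices and a fixed income distribution, truthful reporting returns exactly the random demand $\optbundle{i}(\prices,\randombudget{i})$, which is by definition the preference-maximizing affordable lottery, whereas any misreport returns the demand optimized for the wrong preference order and is therefore weakly dominated. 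The main obstacle is making this price-taking claim precise---establishing the requisite continuity of CERI prices in the reported type distribution so that the influence of one agent genuinely vanishes---and checking that \CERIU\ satisfies the exact anonymity and regularity hypotheses under which \citet{azevedo2019strategy} derive strategyproofness in the large from envy-freeness.
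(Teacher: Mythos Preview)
Your proposal is correct and follows exactly the paper's approach: parts (i)--(iv) are direct corollaries of Theorems~\ref{thm:characterization}, \ref{theo:expostefficiency}, and~\ref{theo:envy} (the paper writes the bookkeeping for (iv) as $b=1$ and $\Delta\le m+1$, essentially your $\Delta\le m$ step). Your only misjudgment is anticipating ``real work'' for (v): the paper's proof of (v) is also a one-line black-box invocation---ordinal envy-freeness already implies Definition~5 of \citet{azevedo2019strategy}, so their Theorem~1 yields SP-L directly, and you need not establish any price-continuity or price-taking argument yourself.
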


\begin{proof}
Part (i) follows from  Theorem~\ref{thm:characterization}. Part (ii) follows from Theorem~\ref{theo:expostefficiency}. Parts (iii) and (iv) follow from Theorem~\ref{theo:envy} by setting $b=1$ and $\Delta\leq m+1$. Since \CERIU\ is ordinally envy-free, it also satisfies Definition~5 in \citet{azevedo2019strategy}. Hence, by Theorem~1 in \citet{azevedo2019strategy}, \CERIU\ is strategyproof in the large.
\end{proof}

The \CERIU\ mechanism is elementary because it simply aggregates the preferences, computes a CERI from them and outputs an ex-post implementation. \CERIU\ combines all the ex-ante and ex-post efficiency and envy-freeness properties discussed in the previous section. Moreover, \CERIU\ is strategyproof in the large \citep{budish2011combinatorial,azevedo2019strategy}, i.e., every agent has an incentive to report their preferences truthfully as a best response to CERI prices which she cannot affect in a large market.\footnote{We avoid a full technical definition of SP-L which can be found in \citet{azevedo2019strategy}.} 

\subsection{Relationship to existing mechanisms}\label{sec:relationship}
We discuss the relationship between CERI mechanisms and other mechanisms in the literature. Figure~\ref{fig:tradeoffs} provides an illustration.

\subsubsection*{Relationship to Serial Dictatorships}
A Serial Dictatorship is ex-post efficient (and ordinally efficient), but it is not ex-post envy-free up to one good.
The allocation of any given deterministic Serial Dictatorship can be implemented by a CERI mechanism in a straightforward manner. For instance, in the unit-demand case, set the  $k^\text{th}$ agent's budget to be $\$\frac{1}{k}$ w.p. 1. The CERI mechanism will set the price of the item that the $k^\text{th}$ agent consumes to $\frac{1}{k}$.

\subsubsection*{Relationship to the Random Serial Dictatorship} The Random Serial Dictatorship (RSD) selects one of Serial Dictatorship allocations uniformly at random. The RSD mechanism is ex-post efficient, but it is not ordinally efficient even in the unit-demand case, as the following example shows. 
\begin{example}[\citeauthor{bogomolnaia2001new}, \citeyear{bogomolnaia2001new}]\label{example:Bogo}
There are four unit-demand agents, $\{1,2,3,4\}$, and four goods, $\{a,b,c,d\}$.
Agents $1$ and $2$ have preferences: $\{a\} \succ_{1,2} \{b\} \succ_{1,2} \{c\} \succ_{1,2} \{d\}$, while agents $3$ and $4$ have preferences $\{b\} \succ_{3,4} \{a\} \succ_{3,4} \{d\} \succ_{3,4} \{c\}$. In the RSD, agents 1 and 2 each receive $a$ and $c$ w.p. $\frac{5}{12}$ and $b$ and $d$ w.p. $\frac{1}{12}$. Symmetrically, agents 3 and 4 each receive $b$ and $d$ w.p.  $\frac{5}{12}$ and $a$ and~$c$ w.p. $\frac{1}{12}$. This lottery allocation not ordinally efficient because agent 1 prefers to trade their probability of receiving good~$b$ with agent 3 for a higher probability of receiving~$a$.
\end{example}
Theorem~\ref{thm:characterization} implies that the RSD lottery allocation in Example~\ref{example:Bogo} cannot be supported by a CERI. Intuitively, since the price of each item can vary substantially across different Serial Dictatorships, achieving coordination on the same prices across every Serial Dictatorship, as required by a CERI, is impossible. Finally, the RSD is neither ordinally envy-free \citep[p. 307]{bogomolnaia2001new} nor ex-post envy-free up to one good \citep[p. 1064--1065]{budish2011combinatorial}. %

\subsubsection*{Relationship to ACEEI}
The ACEEI mechanism is approximately ex-post efficient and ex-post envy-free up to one good.
However, similarly to the RSD, the ACEEI mechanism might be neither ordinally efficient nor ordinally envy-free. In the unit-demand case the ACEEI mechanism coincides with the RSD. To see this, consider Example~\ref{example:Bogo} again.  In the ACEEI mechanism \cite[pp. 1080--1081]{budish2011combinatorial}, agents' budgets are uniformly perturbed within the interval $[1-\epsilon, 1]$, and a competitive equilibrium is then computed based on these adjusted budgets. In the case of unit demand, consider ordering the agents by their budgets from largest to smallest and allocating items according to a Serial Dictatorship. This allocation corresponds to a CERI where the price of each item matches the budget of the agent who receives it (see above).  Because the perturbations are uniformly random, the ACEEI mechanism produces a lottery allocation identical to the one produced by the RSD. Hence, the ACEEI mechanism is not ordinally efficient. In Appendix~\ref{app:accei}, we give another example of the ordinal inefficiency of the ACEEI mechanism in the multiunit-demand setting and in Appendix~\ref{app:B} we show that the existence of a CERI implies the existence of an ACEEI.

\subsubsection*{Relationship to Simultaneous Eating and the Probabilistic Serial mechanisms}
Simultaneous Eating mechanisms and in particular the Probabilistic Serial (PS) mechanism, were introduced and algorithmically defined by \citet{bogomolnaia2001new} for the unit-demand setting. A Simultaneous Eating mechanism involves an ``eating'' procedure in which agents ``eat'' fractional amounts of the most preferred available item at a certain ``speed''. Once an entire item has been ``eaten'', agents proceed to ``eat'' the next most preferred available item. Since Simultaneous Eating mechanisms characterize ordinally efficient allocations for the unit-demand setting,
each lottery allocation produced by a Simultaneous Eating mechanism can be supported by a CERI.

There is a formal and intuitive mapping between Simultaneous Eating mechanisms and CERI mechanisms: In any Simultaneous Eating mechanism, at any time $t$, an agent always ``eats'' the most preferred available item while in the corresponding CERI mechanism, the agent receives the most preferred item at a price at most $t$.  Concretely, consider the following mapping from a Simultaneous Eating mechanism to a CERI mechanism. First, normalize the ``eating speeds'' so that the Simultaneous Eating mechanism runs as a descending clock, starting at time 1 and finishing at time 0. Without loss of generality, we can also assume that each agent ``eats'' exactly 1 unit of the goods (e.g., by adding a dummy good).  The first moment at which an item is fully ``eaten'' corresponds to the item’s price in the CERI mechanism. If an item is not fully ``eaten'' by the end of the mechanism, its price will be 0. The budget distribution of agents is defined on the interval $[0,1]$, and the ``eating speed'' of an agent at time $t$ corresponds precisely to the density of the budget distribution at $t$.

The PS mechanism is a special case of Simultaneous Eating mechanisms where the ``eating speed'' is the same for all agents. In addition  to being ordinally efficient, the PS mechanism is ordinally envy-free.
Hence, the PS mechanism corresponds to a CERI mechanism with uniform budget distributions $U[0,1]$ for all agents and prices constructed as just described above. Hence, our Theorem~\ref{theo:envy} immediately implies that the PS mechanism must be ordinally envy-free.

\subsubsection*{Relationship to the Bundled Probabilistic Serial mechanism}
\citet{nguyen2016assignment} introduced and algorithmically defined a generalization of the PS mechanism for the multiunit-demand setting called the Bundled  Probabilistic Serial (BPS) mechanism. Rather than ``eating'' their most preferred goods as in PS mechanism, in the BPS mechanism agents ``eat'' the most preferred available \emph{bundle} of goods. \citet{nguyen2016assignment} demonstrated that a BPS lottery allocation is ordinally efficient. We first illustrate how to map the BPS mechanism to a CERI mechanism and then show that BPS can only implement  a strict subset of ordinally efficient outcomes.

The mapping from the BPS mechanism to a CERI mechanism is similar to the mapping of the Simultaneous Eating mechanism to the CERI mechanism in the unit-demand case, but with one modification in the calculation of prices. This modification ensures that, for each agent, the price of a bundle ``eaten'' earlier is strictly higher than that of a bundle ``eaten'' later. This property of bundle prices ensures that one can construct a budget distribution so that the consumption of an agent in the ``eating'' procedure coincides with the consumption under CERI.

To achieve the required property of bundle prices, we arrange the items based on the order in which they are fully ``eaten''. For item $i$, let $r_i$ represent its order, where $r_i=1$ if it is the first item and $r_i=\infty$ if the item is still available at the end of the mechanism. Let $K$ be the largest size of an acceptable bundle, and set the price of an item to $\frac{1}{(K)^{r_i}}$. If a bundle is available before item $i$ is fully ``eaten'', its price is at least $\frac{1}{(K)^{r_i}}$. On the other hand, if a bundle $x$ remains available after item $i$ is fully ``eaten'', then the price of $x$ is less than $K \times \frac{1}{(K)^{r_i+1}}= \frac{1}{(K)^{r_i}}$.

Unlike CERI, Bundled Simultaneous Eating mechanisms (i.e., the BPS mechanism with heterogeneous ``eating speeds'') do not characterize the set of all ordinally efficient allocations in the multiunit-demand setting. The following example illustrates that Bundled Simultaneous Eating mechanisms produce allocations that are a strict subset of CERI allocations.
\begin{example} \label{ex:eating}
Consider two agents $\{1,2\}$ and two goods $\{a,b\}$. The preferences of Agent 1 are  $\{a,b\}\succ_1 \{a\}\succ_1 \{b\} \succ_1 \emptyset$, while Agent 2's preferences are $\{a,b\}\succ_2 \{b\}\succ_2 \{a\} \succ_2 \emptyset$.

In any Bundled Simultaneous Eating mechanism, an agent always ``eats'' the best bundle available according to their preferences. Since both agents have the same most preferred bundle, any Bundled Simultaneous Eating mechanism will result in both agents' receiving the bundle $\{a,b\}$ with a positive probability. In the BPS, each agent receives $\{a,b\}$ w.p. $\frac{1}{2}$.

However, the allocation that assigns good ${a}$ w.p. 1 to Agent 1 and good ${b}$ w.p. 1 to Agent 2 is also ordinally efficient, but it cannot be obtained in a Bundled Simultaneous Eating mechanism under any profile of ``eating speeds''.

Both of the above allocations are ordinally efficient and hence both can be supported by a CERI. For example, the allocation (resulting from the BPS mechanism) where each agent receives bundle $\{a, b\}$ w.p. $\frac{1}{2}$ can be supported by CERI prices $p_a = p_b = 1$ and {budget distributions $\mathcal{B}_1 = \mathcal{B}_2 = (\$2\, \text{ w.p. }\, \frac{1}{2}; \$0\, \text{ w.p. }\,\frac{1}{2})$}. On the other hand, the second allocation can be supported by CERI prices $p_a = p_b = 1$ and (degenerate) budget distributions {$\mathcal{B}_1 = \mathcal{B}_2 = (\$1\, \text{ w.p. }\, 1)$}.

\ignore{
a bit more on this example: give agents budget uniform randomly between $[c_1,c_2]$
\begin{itemize}
    \item if $c_1\ge c_2/2$, then equilibrium is $c_2/2\le p\le c_1$, allocation is a to 1 b to 2.
    \item if $c_1 < c_2/2$,   $p=\frac{c_1+c_2}{3}$. Allocate $\{a,b\}$ to each with prob $\frac{c_2-2c_1}{3(c_2-c_1)}$, a to 1, b to 2 with prob $\frac{c_1+c_2}{3(c_2-c_1)}$
    
\end{itemize}
}
    
\end{example}

\section{Incentive properties} %
\label{sec:strategyproof}
It remains to discuss the incentive properties of CERI mechanisms. To formally describe incentives in a large market, we fix the set $M$ of goods and the size $\Delta$ of the largest bundle,  increase the number of agents, but allow capacities and agents' preferences to be arbitrary.
The \CERIU\ mechanism has already offered a strategyproof-in-the-large (SP-L) implementation of CERI (Theorem~\ref{thm:ceris}). However, SP-L only evaluates deviation incentives from an interim perspective: it merely requires that truthtelling be a best response to the empirical distribution of opponent types which must become exogenous to any given agent in a large market. 
Another large-market incentive guarantee introduced by \cite{liu2016ordinal} requires that each agent find it approximately optimal to report their type truthfully in response to \emph{any realization} of other agents' reports.
Denote by $\mathbb{P}_\mathbf{y}\left(\Phi_i(\capacities,\succ)\right)$ the probability that agent $i$ obtains (an acceptable) bundle $\mathbf{y}$ under the mechanism $\Phi$, when the reported preference profile is $\succ$ and the capacities are $\capacities$.
In our setting with ordinal preferences, such a notion of \emph{asymptotic strategyproofness} is defined as follows.\footnote{Other existing notions of asymptotic strategyproofness are weaker as they are expressed in terms of von Neumann–Morgenstern utilities~\citep{kojima2010incentives,hatfield2018strategy,azevedo2019strategy}.}

\begin{definition}\label{def:liu}
A mechanism $\Phi$ is \emph{asymptotically strategyproof} if for every $\eta>0$, there exists $\nthresh$ such that if the number of agents is $n >\nthresh$, then for all agents $i$, preference profiles $\{\succ_i\}_{i=1}^n$, capacity vectors $\capacities$ and reports $\succ'_i$:
$$
\sum_{\mathbf{y}\succeq_i \mathbf{x}}\mathbb{P}_\mathbf{y}\left(\Phi_i(\capacities,(\succ_i,\succ_{-i}))\right)\geq \sum_{\mathbf{y}\succeq_i \mathbf{x}}\mathbb{P}_\mathbf{y}\left(\Phi_i(\capacities,(\succ'_i,\succ_{-i}))\right)-\eta \;\; \text{for all}\;\; \mathbf{x}\in\Psi_i.
$$
\end{definition}

In words, asymptotic strategyproofness requires that every agent's lottery under truthtelling ``almost'' first-order stochastically dominates the lottery under any misreport in a large market.

Asymptotic strategyproofness is strictly stronger than SP-L and indeed many mechanisms used in practice that are SP-L (such as uniform-price auctions and stable matching mechanisms) are not asymptotically strategyproof \citep[Section 3.3.2]{azevedo2019strategy}. 
However, all notions of incentive compatibility in large markets (including SP-L and asymptotic strategyproofness) share a fundamental limitation: they bound misreporting incentives \emph{agent by agent} and merely assume that each agent is content with following an approximately optimal strategy. Such guarantees are not explicitly tied to how the mechanism is implemented and do not ensure that, with some probability, the mechanism is strategyproof. In fact, there is no guarantee that any positive fraction of the agents will find it optimal to report their preferences truthfully with any given probability. This is a significant shortcoming of existing incentive guarantees, since a mechanism’s outcomes may deviate substantially from theoretical predictions even if only a small fraction of agents misreport their preferences.
To overcome this limitation of existing incentive guarantees, we will require a stronger property: we say that a mechanism is \emph{uniformly strategyproof} if the event that \emph{all} agents have a dominant strategy to report their types truthfully occurs with a high probability in a large market.

In order to describe a uniformly strategyproof mechanism based on CERI, we will use an additional external source of randomness in the mechanism. In particular, let $\samplespace$ be a (finite) sample space, and for every outcome $\outcome\sim \samplespace$, let   $\Phi^\outcome$ be a direct mechanism. Define $\Phi^{\samplespace}$ to be the \emph{random mechanism} that  outputs $\Phi^\outcome$ with the same probability as drawing $\outcome$ from $\samplespace$. Therefore, $\mathbb{P}_{\outcome\sim \samplespace}\left(\Phi_i^\omega(\capacities,\succ)\right)$ denotes the lottery that agent $i$ obtains under the random mechanism $\Phi^\Omega$, when the reported preference profile is $\succ$, the capacities are $\capacities$ and the realization of the external source of randomness is $\omega$.
The following definition formalizes our new guarantee for truthtelling incentives.

\begin{definition}~\label{def:uniform}
A random mechanism $\Phi^{\samplespace}$ is \emph{uniformly strategyproof} if for every $\epsilon>0$, there exists $\nthresh$ such that if the number of agents is $n >\nthresh$, then for all  preference profiles $\{\succ_i\}_{i=1}^n$ and capacity vectors $\capacities$, and reports $\succ'_i$:
$$
\mathbb{P}_{\outcome\sim \samplespace}(\Phi_i^\outcome(\capacities,(\succ_i,\succ_{-i}))\succeq^{sd}_i \Phi_i^\outcome(\capacities,(\succ'_i,\succ_{-i})) \text{ for all } i \text{ and } \succ_i' )\ge 1-\epsilon.
$$

\end{definition}

Definition~\ref{def:uniform} of uniform strategyproofness is stronger than Definition~\ref{def:liu} of asymptotic strategyproofness in two ways. First, though less importantly, if a random mechanism is uniformly strategyproof, then for any agent $i$ and any preference profile $\succ_{-i}$ of the other agents, the probability that the lottery achieved by a misreport first-order stochastically dominates her lottery under truthtelling is at most $\epsilon$. This immediately implies that the error bound $\eta$ in Definition~\ref{def:liu} of asymptotic strategyproofness is bounded by $\epsilon$.
Second, and more importantly, uniform strategyproofness ensures that with arbitrarily high probability the mechanism is strategyproof; that is, it is a dominant strategy for \emph{every agent} to be truthful in a large market. This immediately implies that any individual agent is better off by being truthful in a large market which is all that is required by Definition~\ref{def:liu}.
Thus, uniform strategyproofness strictly strengthens both asymptotic strategyproofness and, \emph{a fortiori}, \mbox{SP-L}.

\subsection{Random grid}
We will now show how we can use CERI to implement a large-market mechanism that is uniformly strategyproof but nevertheless maintains the strong guarantees for efficiency and envy-freeness. 
The key feature that allows the mechanism to achieve its desirable properties is an external source of randomness that we call a \emph{random grid}.
In our next CERI implementation, we will approximate the reported type distribution by a nearby point on the grid. The intuitive reason for doing this is that with a sufficiently coarse grid, the likelihood of a single agent significantly altering the approximation to the grid is reduced, giving agents stronger incentives to report their types truthfully in a large market. On the other hand, the grid cannot be too coarse because in that case the grid point we use to approximate the type distribution might be too far from the true distribution, leading to an inefficient allocation.  
Let $\tau\in \mathbb{Z}_{>0}$ denote the dimension and $\gridjump\in \mathbb{Z}_{>0}$ denote the step size.
\begin{definition}\label{grid}
The \emph{random grid} $\mathbf{G}(\gridjump, \tau)$ is defined as follows. First, for each coordinate $t \in \{1, \ldots, \tau\}$, draw a random point $z^t$ independently and uniformly from $\{1, \ldots, \gridjump\}$. Second, set
\(
\mathbf{G}^t = \{0, \, z^t, \, z^t + \gridjump, \, z^t + 2\gridjump, \ldots\}.
\)
The random grid is then:
\[
\mathbf{G}(\gridjump, \tau) \;=\; \mathbf{G}^1 \times \mathbf{G}^2 \times \cdots \times \mathbf{G}^\tau.
\]
\end{definition}

The random grid is instantiated in the $\tau$-dimensional type-space, where dimension $\tau$ is the number of agent types. Hence, a point on the grid represents the number of agents of each type. The step size $\lambda$ of the grid is key to trading off the properties of the mechanism: larger $\lambda$ can improve truthtelling incentives but can reduce efficiency. We note that if the grid were deterministic, certain type distributions would tend to be near the grid lines. In such cases, the choice of grid point for the approximation would become highly sensitive to the actions of a single agent. Choosing a random grid mitigates this problem and ensures that, for every type distribution, the probability of any single agent substantially altering the approximation in a large market is negligible.

\subsection{\Mone\ mechanism}
Our large-market mechanism, denoted \Mone, works as follows.\footnote{Pronounced ``cereal'' where ``L'' is for ``large market''.} First, it instantiates a random grid. It then elicits a vector of agent types $\nagtypevec = (\nagtype^1, \nagtype^2, \ldots, \nagtype^{\numtypes })$, where  $\nagtype^t$ denotes the number of agents of type $t$, and approximates $\nagtypevec$  by a grid point $\typeapprox$. The mechanism then computes a CERI on the economy defined by $\typeapprox$, by identical budget distributions with a small support, and by the true capacities, and then implements an ex-post allocation according to this CERI. For clarity, we use an alternative definition of an economy. Instead of $\mathcal{E} = \economy$, we define it using the capacities and a vector of agent types: $\mathcal{E} = \pseconomy$, where $\nagtypevec \in \mathbb{Z}^\numtypes $ is as above. 
Formally, the mechanism is the following.

\begin{definition}[\Mone~mechanism] Let $\mathcal{E} = \pseconomy$ denote the economy, let $\typeset$ denote the set of agent types, let  $\numtypes  = |\typeset|$, and let $\grid{\gridjump,\numtypes }$ be a random grid for some $\gridjump\in \mathbb{Z}_{>0}$.
\begin{enumerate}
\item[(i)] Fix each agent's budget distribution to be $\randombudget{\agent}\sim U[1,\frac{\Delta}{\Delta-1}]$.
\item[(ii)] Elicit the agents' types. %
Let $\nagtypevec = (\nagtype^1, \nagtype^2, \ldots, \nagtype^{\numtypes })$, where  $\nagtype^t$ denotes the number of agents of type $t \in \typeset$. Let $\typeapprox$ be the minimal grid point that is coordinate-wise at least  $\nagtypevec$. Formally, \\      
    $\overline{\psi}^t=\min \{g \in G^t: g \geq \psi^t\}$ for each $t=1, \ldots, \tau$ and let $\typeapprox=\left(\overline{\psi}^1, \ldots, \overline{\psi}^\tau\right)$.
    
    \item[(iii)] Compute a CERI for   $\mathcal{E}' = \zeconomy$. %
    \item[(iv)] 
    Construct a $\Delta-1$-near feasible ex-post  implementation of this CERI (according to Theorem~\ref{theo:feasibility}).
\end{enumerate}

\end{definition}
\Mone\ specifies two features of a CERI mechanism. First, like \CERIU, it sets agents' budget distributions to be identical and to have a small support. Second, \Mone\ approximates the type distribution using a random grid before calculating a CERI.
Figure~\ref{fig:grid} illustrates how the random grid works in \Mone. %
In this illustration, there are two types: $\tau=2$. We elicit agents' types $\nagtypevec = (\nagtype^1, \nagtype^2)$ and  then compute the minimal grid point $\typeapprox$ that, for each agent type $t$, is no smaller than the stated number of types, $\nagtype^t$. We then use these approximate type vector as an input in \Mone. %

\begin{figure}[t]
\smaller
\begin{centering}
\begin{tikzpicture}[scale=1.3]

  \foreach \x in {0.6,1.6,...,5.6} {
    \draw[thin, dotted] (\x,0) -- (\x,5.3);
  }
  \foreach \y in {0,1,...,5} {
    \draw[thin, dotted] (0,\y+0.3) -- (5.6,\y+0.3);
  }
  \draw[-] (0,0) -- (5.6, 0);
  \draw[-] (0,0) -- (0,5.3);

  \node at (-0.1,-0.1) {0};
  \node[rotate=90, above] at (-0.8,3) {number of agents of type 2};
  \node[below] at (3,-0.4) {number of agents of type 1};

  \draw[fill=gray] (2.8,3.9) circle[radius=0.06];
  \node at (3.05,3.75) {$\nagtypevec$};

  \iftoggle{version}{
  \draw[fill=gray] (2.5,3.9) circle[radius=0.06];
  \node at (2.2,3.75) {$\nagtypevec^{*}$};
  \draw[fill=black] (2.6,4.3) circle[radius=0.06];
  \node at (2.2,4.45) {$\typeapprox^{*}$};
  }{}
  
  \draw[fill=black] (3.6,4.3) circle[radius=0.06];
  \node at (3.75,4.45) {$\typeapprox$};

 \node at (0.6,-0.2) {$z^1 $};
 \node at (-0.2,0.3) {$z^2 $};
  \node at (1.6,-0.2) {$z^1 + $ $\gridjump$};
  \node at (-0.4,1.3) {$z^2 + $ $\gridjump$};

  \foreach \i in {2,3,...,5} {
    \node at (\i + 0.6,-0.2) {$z^1 + $ \i $\gridjump$};
    \node at (-0.4,\i+0.3) {$z^2 + $ \i $\gridjump$};
    \foreach \x in {0,1,2,3,4,5} {
        \foreach \y in {0,1,2,3,4,5} {
        \filldraw (\x+0.6,\y+0.3) circle [radius=0.01];
        }
    }
  }
\end{tikzpicture}
\caption{Example of a realization of a random grid $\mathbf{G}(\gridjump, \numtypes )$, where $\numtypes =2$, and the grid is seeded with the point $(z^1, z^2)$. Also depicted is the rounding used in Mechanism \Mone, where two  type vectors ($\psi$ and $\psi^*$) are approximated by two grid points ($\overline{\psi}$ and $\overline{\psi}^*$) respectively.} 
\label{fig:grid}
\end{centering}
\end{figure}
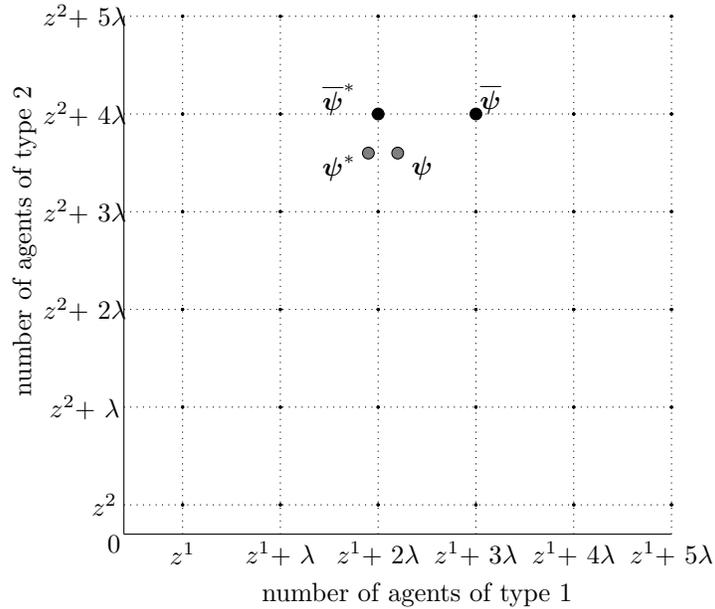

\subsection{Properties}\label{sec:asymproperties}

Since \Mone~ approximates agents' preferences by the random grid, it will not in general be ordinally efficient. Nevertheless, we can guarantee efficiency properties of \Mone~ in large markets.
In order to define notions of asymptotic efficiency, we will require the following two definitions of ex-ante and ex-post approximate efficiency.

\begin{definition}
Given an economy $\mathcal{E} = \economy$,     a lottery allocation $\Tilde{\mathbf{X}}$ is  \emph{ordinally $\epsilon$-efficient} for $\mathcal{E}$ if there exists some $\capacities'$, $\capacities(1-\epsilon)\le \capacities'\le \capacities(1+\epsilon)$, such that $\Tilde{\mathbf{X}}$ is ordinally efficient for the economy $\mathcal{E}' = \alteconomy$.   
\end{definition}

Ordinal $\epsilon$-efficiency simply requires that an allocation be ordinally efficient in an economy with slightly relaxed capacities.
Naturally, we also consider a definition of approximate ex-post efficiency. 
\begin{definition}
Given an economy $\mathcal{E} = \economy$,     a deterministic allocation ${\mathbf{X}}$ is  \emph{$(\kappa, \epsilon)$-ex-post  efficient} if there exists some $\capacities'$, $\capacities(1-\epsilon)\le \capacities'\le \capacities(1+\epsilon)$, such that ${\mathbf{X}}$ is $\kappa$-ex-post  efficient for the economy $\mathcal{E}' = \alteconomy$.  \end{definition}

A $(\kappa, \epsilon)$-ex-post efficient allocation relaxes capacities in two ways: by a multiplicative factor of $1\pm \epsilon$ and by an additive error of at most $\kappa$. As above, we will show that the additive error $\kappa$ is bounded by the maximum bundle size minus one. Hence, in large markets, the additive error is negligible relative to capacities and can be absorbed into the multiplicative error.

We can now define our two notions of \emph{asymptotic efficiency}, one from the ex-ante perspective and the other from the ex-post perspective. Let $\typedist$ be the (discrete) distribution of agents' types, and let $\typedist^n$ denote the distribution of preference profiles for $n$ agents drawn i.i.d.\ from $\typedist$.

\begin{definition}\label{def:strongAE}
 A random mechanism $\Phi^{\samplespace}$ is  \emph{asymptotically {ordinally} efficient} if for every $\epsilon$, and distribution $\typedist$ of agent types, there exists $\nthresh$ such that for all $\capacities$ and $n > \nthresh$  
 $$
 \prob_{\succ \sim  \typedist^n} (\Phi^\outcome\economy \text{ is ordinally  $\epsilon$-efficient})\ge 1-\epsilon  \text{ for all } \outcome\sim \samplespace.
 $$
 \end{definition}

 \begin{definition}
A random mechanism $\Phi^{\samplespace}$ is \emph{asymptotically $\kappa$-ex-post efficient} if for every $\epsilon$, and distribution $\typedist$ of agent types, there exists $\nthresh$ such that for all $\capacities$ and $n > \nthresh$ 
$$
 \prob_{\succ \sim  \typedist^n}(\text{all allocations produced by } \Phi^\outcome \text{ are } (\kappa, \epsilon)\text{-ex-post efficient}) \ge 1-\epsilon \;\;  \text{ for all } \outcome\sim \samplespace.
 $$
         
\end{definition}

In an asymptotically  ordinally/$\kappa$-ex-post efficient mechanism, the ex-ante/ex-post allocation is guaranteed to be ordinally $\epsilon$-/($\kappa,\epsilon)$-ex-post efficient with an arbitrarily high probability when the market is sufficiently large. Note that the probability bound has to hold in \emph{all} outcomes of the random grid.

As the agents' budget distributions are identical and meet the requirements of Theorem~\ref{theo:envy}, \Mone~is ordinally envy-free and ex-post EF1 despite the approximation of types by the random grid. In addition, \Mone~is uniformly  strategyproof  and asymptotically efficient from ex-ante and ex-post perspectives as the following result shows.

\begin{theorem}[\Mone]\label{thm:m1}
The \Mone~mechanism with random grid step size $\mbox{$\gridjump = \left\lfloor\numtypes \sqrt{n} \right\rfloor$}$ is (i) uniformly  strategyproof, (ii) asymptotically ordinally efficient,  (iii) asymptotically $\Delta-1$-ex-post efficient,  (iv) ordinally envy-free and (v) ex-post EF1.
\end{theorem}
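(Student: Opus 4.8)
The plan is to prove each of the five properties (i)--(v) of \Mone\ separately, leveraging the fact that properties (iv) and (v) follow almost immediately from the earlier structural results, so that the real work concentrates on the asymptotic statements (i)--(iii). For \textbf{(iv) ordinal envy-freeness} and \textbf{(v) ex-post EF1}: since \Mone\ sets every agent's budget distribution to the common $U[1,\frac{\Delta}{\Delta-1}]$, and since step (iii) computes a genuine CERI on the economy $\mathcal{E}'=\zeconomy$, both parts of Theorem~\ref{theo:envy} apply directly. The identical-distribution hypothesis of Theorem~\ref{theo:envy}(i) gives ordinal envy-freeness, and the support $[1,\frac{\Delta}{\Delta-1}]=[b,\frac{\Delta}{\Delta-1}b]$ with $b=1$ exactly matches the hypothesis of Theorem~\ref{theo:envy}(ii), giving ex-post EF1. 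Crucially, these properties hold for \emph{every} realization of the random grid and for \emph{every} $n$, since the grid only changes \emph{which} economy the CERI is computed for, not the fact that it is a CERI with identical small-support budgets; this is why envy-freeness is exact rather than merely asymptotic.

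For the \textbf{asymptotic efficiency} claims (ii) and (iii), the key idea is to control the distance between the reported type vector $\nagtypevec$ and its grid approximation $\typeapprox$. The plan is first to show that with the choice $\gridjump=\lfloor\numtypes\sqrt{n}\rfloor$, the rounding error $\typeapprox-\nagtypevec$ is at most $\gridjump$ in each of the $\numtypes$ coordinates, so $\|\typeapprox-\nagtypevec\|_\infty\le\numtypes\sqrt{n}$. Since a law-of-large-numbers / concentration argument shows each true coordinate $\nagtype^t$ concentrates around $n\,\typedist(t)$ (of order $n$), the relative perturbation of the type distribution induced by rounding is of order $\numtypes\sqrt{n}/n=\numtypes/\sqrt{n}\to 0$. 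A CERI computed on $\zeconomy$ is ordinally efficient for that economy by Theorem~\ref{thm:characterization}; one then argues that a CERI for the slightly perturbed type vector $\typeapprox$ is ordinally efficient for the \emph{true} economy up to a multiplicative capacity relaxation of order $\numtypes/\sqrt{n}$, which is below any fixed $\epsilon$ once $n>\nthresh$. This yields ordinal $\epsilon$-efficiency (Definition for asymptotic ordinal efficiency), and combining with Theorem~\ref{theo:expostefficiency} --- each ex-post allocation is a competitive equilibrium with adjusted capacities, hence $\Delta-1$-ex-post efficient --- yields asymptotic $\Delta-1$-ex-post efficiency, with the additive $\Delta-1$ absorbed into the multiplicative $\epsilon$ for large $n$.

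The \textbf{main obstacle} will be part \textbf{(i) uniform strategyproofness}, because it requires a statement about \emph{all} agents and \emph{all} misreports holding simultaneously with probability $\ge 1-\epsilon$ over the random grid, a far stronger quantifier structure than agent-by-agent asymptotic strategyproofness. The strategy is to show that, conditional on a ``good'' grid event, no single agent's misreport can change the grid approximation $\typeapprox$ at all, so that the CERI prices --- and hence each agent's optimal bundle at their realized budget --- are completely unaffected by any unilateral deviation, making truthful reporting weakly dominant for every agent at once. Concretely, an agent of type $t$ who misreports can shift $\nagtype^t$ by one unit in at most two coordinates; the approximation $\overline{\psi}^t=\min\{g\in G^t:g\ge\psi^t\}$ only moves if $\psi^t$ is within distance $1$ of crossing a grid line, an event whose probability is $O(1/\gridjump)=O(1/(\numtypes\sqrt{n}))$ per coordinate because the grid offsets $z^t$ are drawn \emph{uniformly} from $\{1,\dots,\gridjump\}$. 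Union-bounding over the $\numtypes$ coordinates and over the (finitely many, since types and bundles are finite) possible reports gives a total failure probability of order $\numtypes^2/\sqrt{n}\to 0$, so for $n>\nthresh$ this is below $\epsilon$. The delicate points to verify carefully are that the uniform grid offset genuinely makes the ``near a grid line'' event have probability proportional to $1/\gridjump$ (this is the whole point of randomizing the grid), and that a deviation affecting the counts leaves \emph{every} agent's realized allocation unchanged --- which holds precisely because once $\typeapprox$ is fixed, the CERI price vector is deterministic and each agent's ex-post bundle depends only on that price vector and their own realized budget, not on the reports of others.
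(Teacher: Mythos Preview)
Your proposal is correct and follows essentially the same route as the paper: a random-grid distance lemma for (i), a Chernoff/concentration bound plus a capacity-transfer argument (the paper's Lemma~\ref{lem:typediscrepancy}) for (ii)--(iii), and a direct appeal to Theorem~\ref{theo:envy} for (iv)--(v).

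Two small points where you can tighten the argument to match the paper exactly. First, in part~(i) the union bound over possible misreports is unnecessary: the single event ``$|\nagtype^t - g|>1$ for every grid coordinate $g\in\mathbf{G}^t$ and every $t$'' already guarantees that \emph{no} unilateral misreport by \emph{any} agent can move $\typeapprox$, so the ``for all $i$ and all $\succ_i'$'' in Definition~\ref{def:uniform} is handled by that one event of probability $\ge 1-O(1/\sqrt{n})$ rather than $\tau^2/\sqrt{n}$. Second, Definition~\ref{def:strongAE} requires ordinal $\epsilon$-efficiency to hold \emph{for every} realization $\omega$ of the random grid, with the probability taken only over $\succ\sim\typedist^n$; your argument does support this (the rounding error is $\le\gridjump$ regardless of $\omega$, and the concentration of $\nagtype^t$ is over $\succ$ alone), but you should flag it explicitly, as the paper does.
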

The proof is in Appendix~\ref{proof:thm1}.  
The choice of the random grid step size $\gridjump$ is crucial as $\gridjump$ represents a trade-off between efficiency and strategyproofness. When the grid step size $\gridjump$ is set to 1, \Mone\ coincides with \CERIU\. In this case, \Mone\ is ordinally efficient in a market of any size, but it cannot be guaranteed to be uniformly strategyproof.
A larger $\gridjump$ reduces the probability of an agent having a significant influence on the grid point and, consequently, on equilibrium prices, {thereby increasing truthtelling incentives}. 
However, a larger $\gridjump$ also corresponds to a larger error in the approximation of the true type distribution, leading to a less efficient allocation. As Theorem~\ref{thm:m1} shows, in a large market, one can {carefully} choose a grid step size to achieve a high probability of both minimizing the influence of individual agents on the grid and ensuring an asymptotically ordinally efficient allocation.

\section{Conclusion}\label{sec:conclusion}
This paper has laid price-theoretic foundations for efficient combinatorial assignment.
Our characterization of ordinally efficient allocations as CERI allocations provides valuable insights for comparing existing mechanisms and for constructing new ones. Our results illuminate the relationships between efficiency, envy-freeness and incentive compatibility in combinatorial assignment settings and suggest ways to manage the tradeoffs between them in market design. 

There are many possible research directions. 
The first direction is to theoretically explore other implementations of CERI that might strike a different balance between the objectives of market designers or to add new features to CERI. For example, one can imagine incorporating priorities into CERI \citep{kornbluth2021undergraduate}.
The second direction would be to give formal results about the complexity of computing (an approximate) CERI and to develop algorithms for practical implementation (cf. \citet{othman2016complexity}, \citet{budish2017course} and \citet{budish2023practical} for such results for the ACEEI).
The third direction would be to examine how CERI performs empirically relative to other mechanisms. For example, \citet{budish2012multi} empirically examined the Harvard Business School ``draft'' mechanism and pointed out its properties relative to RSD. %
\citet{bichler2021randomized} also examined other combinatorial assignment mechanisms in the course allocation setting, and showed that BPS in particular outperforms first-come-first-served mechanisms. Since CERI improves on the properties of both ACEEI and BPS, these initial results suggest that CERI would perform well in practice.

\newpage
\appendix
\renewcommand{\thelemma}{\thesection.\arabic{lemma}}
\begin{center}
{\bf \large APPENDIX}\\   
\end{center}
\vspace{0.4cm}

\section{Omitted proofs}
\subsection{Proof of Theorem~\ref{theo:exist}}\label{app:exist}
We use a standard fixed-point argument.
First, note that since the cumulative distribution functions of budgets are continuous, the expected consumption $\expect{\optbundle{\agent}(\prices,\randombudget{\agent})}$ is continuous with respect to $\prices$ for each agent $j$.

Second, as the price of a good approaches infinity, the probability of demanding a bundle containing that good tends to $0$. Therefore, there exists a positive constant $P$ such that if the price of good $j$, denoted as $\price{j}$, exceeds $P$, then the expected demand for good $i$ from every agent is less than $\supply{j}/n$. %
This implies that the aggregate consumption of good $j$ will be strictly less than the supply when the price exceeds $P$.

Recall that  $\choices_\agent$ is the set of acceptable bundles available to agent $i$, and   $\mathcal{L}(\choices_\agent)$ is  the set of lotteries over $\choices_\agent$.
In the proof, we will construct a mapping 
 $$f:[0,P]^m\times \mathcal{L}(\choices_1)\times \ldots \times \mathcal{L}(\choices_n)\rightarrow [0,P]^m\times \mathcal{L}(\choices_1)\times \ldots \times \mathcal{L}(\choices_n)$$
and use Brouwer's Fixed-Point Theorem to conclude that it has  a fixed point. We will then show that this fixed point corresponds to a CERI.

To begin, given a lottery allocation $\{\lottery{x}{i}\in \mathcal{L}(\choices_i) \}_{i=1}^n$,   let $\overline{\bf{x}}=\expect{\sum_{i=1}^n \lottery{x}{i}}$ be the aggregate expected consumption. The excess demand vector is $\overline{\bf{x}}-\supplyvector$.
Let 

$$
z_j(\prices, \lottery{x}{1},..,\lottery{x}{n}):= \min\{(\price{j}+ \overline{x}^j-\supply{j})^+,P\} \text{ for all } j\in M,
$$
and denote $\z:= (z_1,..,z_m)$.
Define the following  mapping
$$
f(\prices,\lottery{x}{1},..,\lottery{x}{n}):=(\z, \optbundle{1}(\prices,\randombudget{1}), ...,\optbundle{n}(\prices,\randombudget{n})).
$$
It is easy to see that $f$ satisfies all the conditions (continuity and boundedness) of Brouwer's Fixed-Point Theorem. Let $(\prices,\lottery{x}{1},..,\lottery{x}{n})$ be a fixed point of the mapping.  

First, at this fixed point $\lottery{x}{i}= \optbundle{i}(\prices,\randombudget{i})$ for all $i \in [n]$. It remains to check the market-clearing condition (ii) in Definition~\ref{def:ceri} of CERI.

Since the mapping is defined on the set of prices between $0$ and $P$, we have $\price{j}\le P$. We will show that the strict inequality holds, that is,  $ \price{j}<P$ for all goods $j \in M$. If this is the case then the fixed-point condition implies that $ \price{j}=z_j=(\price{j}+\overline{x}^j-\supply{j})^+$. And therefore if $\price{j}>0$, then $\overline{{x}}^j= \supply{j}$ and if $\price{j}=0$, then $\overline{{x}}^j \le \supply{j}$ which is the market-clearing condition (ii) in Definition~\ref{def:ceri}.

Finally, assume towards a contradiction that $\price{j} = P$ for a good $j$, then by the way $P$ is selected, 
the aggregate expected demand for good $j$, $\overline{x}^j$, will be strictly less than $\supply{j}$. Therefore, 
$$
z_j(\prices, \lottery{x}{1},..,\lottery{x}{n})=\min\{(\price{j}+ \overline{x}^j-\supply{j})^+,P\}= \min\{(P+ \overline{x}^j-\supply{j})^+,P\}<P = \price{j},
$$
which contradicts the fixed-point condition and completes the proof.
\qed

\subsection{Proof of Theorem~\ref{theo:feasibility}} 
Let $(\prices, \lottery{X}{})$ be a CERI. To construct a near-feasible ex-post implementation of CERI, it suffices to express the lottery allocation $\lottery{X}{}$ as the marginal distributions of a lottery over near-feasible ex-post allocations.   

Consider an agent $i$. Let $x_i$ be a realization of the random demand $\optbundle{\agent}(\prices, \randombudget{\agent})$. Define $\randombudget{\agent}|_{x_i}$ as the conditional income distribution such that the optimal consumption is $x_i$. This distribution has the density function  

\[
f_{\randombudget{\agent}|_{x_i}}(b) := \frac{f_{\randombudget{\agent}}(b)}{\prob(\optbundle{\agent}(\prices, \randombudget{\agent}) = x_i)}
\]
if agent $i$ consumes $x_i$ under income $b$, and $0$ otherwise.  

We use the following result from \citet[Theorem 2.1]{nguyen2016assignment}:  

\begin{quote}
Any feasible random allocation with respect to $\supplyvector$ can be realized through deterministic allocations that are feasible with respect to $\supplyvector + (\Delta - 1) \cdot {\bf 1}$.
\end{quote}

As a consequence, there exists a lottery $\Tilde{A}$  over $\Delta - 1$-near-feasible deterministic allocations such that its $i$-th marginal distribution is $\lottery{X}{i}$. The joint distribution of income is then constructed as follows: for each $(x_1, \dots, x_n)$ drawn from $\Tilde{A}$, we generate $(\randombudget{\agent}|_{x_1}, \dots, \randombudget{\agent}|_{x_n})$, where the coordinates are independent distributions.  
This construction of the joint income distribution, together with $\Tilde{A}$, provides a $\Delta - 1$-near-feasible ex-post implementation of the CERI. \qed

\subsection{Proof of Theorem~\ref{thm:characterization}.}\label{app:maintheorem}

 If a feasible allocation $\lottery{X}{}=(\lottery{x}{1},\ldots,\lottery{x}{n})$ is CERI, then there exists a price vector $\prices$ that satisfies the following two conditions. 
 
The first condition is the market-clearing condition:
\begin{equation}\label{eq:c1}
    p_j \ge 0; \quad p_j = 0 \text{ if  good $j$ is not fully allocated, i.e., } (\sum_i \expect{\lottery{x}{i}})_j < c_j. 
\end{equation}

The second condition is that, for every agent $i$ and a pair of deterministic bundles $\y'\succ_i \y$, where $\y$ is assigned with positive probability in $\lottery{x}{i}$, we have 
\begin{equation} \label{eq:c2}
    \prices\cdot (\y'-\y) \ge 1.
\end{equation}

The second condition holds because if an agent consumes a bundle with a positive probability, it implies that there exists a realization of the budget such that it is the optimal bundle within that budget constraint. Consequently, the cost of any more preferred bundle must be strictly higher. By scaling prices, we can assume that it is at least 1.

Conversely, if there exists a set of prices, denoted as $\prices$, satisfying Eqs. \eqref{eq:c1} and \eqref{eq:c2}, then $((\lottery{x}{1},\ldots,\lottery{x}{n}),\prices)$ forms a CERI under the budget distribution $\randombudget{i}:= \prices \cdot \x$ with a probability distribution of $\mathbb{P}(\lottery{x}{i}=\x)$ for each agent $i$.

Thus, when provided with a feasible allocation $\lottery{X}{} =(\lottery{x}{1},\ldots,\lottery{x}{n})$, we can determine whether it is a CERI by using a linear program that identifies prices $\prices$ satisfying both Eqs. \eqref{eq:c1} and \eqref{eq:c2}.

Let $\{\mathcal{A}\prices \ge \bf{1}, \prices \ge 0\}$ be the description of this linear program. By   Farkas' lemma, such $\prices$ exists if and only if $\{\mathcal{A}^T \lambda \le 0,  \lambda \ge 0; {\bf 1}^T \lambda >0\}$ does not have a solution.

We will show that  the dual has a solution if and only if  $\lottery{X}{} =(\lottery{x}{1},\ldots,\lottery{x}{n})$ is not ordinally efficient. 
To show it, we first observe that an equivalent expression for stochastic dominance  $\lottery{y}{} \succeq^{sd}_i \lottery{x}{}$
is that if $\lottery{y}{}$ can be derived from $\lottery{x}{}$  through a sequence of moves involving the reallocation of probabilities from less preferred to more preferred bundles.

The dual variable $\lambda$ has a clear interpretation. Each coordinate corresponds to a tuple $i,\y',\y$, where $\y'\succ \y$ and 
$\mathbb{P}(\lottery{x}{i}=\y)>0$.   $\mathcal{A}^T \lambda \le 0$ implies that   for every good $j$  that is fully allocated,
\begin{equation}\label{eq:dual}
    \sum_{i,\y',\y} \lambda_{i,\y',\y}\cdot y'_j  -  \sum_{i,\y',\y} \lambda_{i,\y',\y} \cdot y_j \le 0
\end{equation}

Due to the positive probability $\mathbb{P}(\lottery{x}{i}=\y)$, we can normalize $\lambda$ such that each $\lambda_{i,\y',\y}$ is at most $\mathbb{P}(\lottery{x}{i}=\y)$.

Interpret $\lambda_{i,\y',\y}$ as the probability mass indicating the extent to which agent $i$ shifts from bundle $\y$ to a superior bundle $\y'$. Eq.~\eqref{eq:dual} asserts that agents can make such shifts without violating the resource constraint of goods that are fully allocated.

For goods that are under-allocated in the allocation $\lottery{X}{}$, Eq.~\eqref{eq:dual} may not hold. However, it is possible to additionally scale down $\lambda$ by a constant factor, ensuring that the difference is within the limits of the remaining resources for that particular good:

\begin{equation} \label{eq:dual1}
\sum_{i,\y',\y} \lambda_{i,\y',\y}\cdot y'_j  -  \sum_{i,\y',\y} \lambda_{i,\y',\y} \cdot y_j \le c_j- (\sum_i \expect{\lottery{x}{i}})_j.
\end{equation}

Let $\lottery{z}{i}$ represent the resulting lottery for agent $i$ after the probability shift according to $\lambda$. Because all agents have shifted probability to more preferred bundles,  $\lottery{z}{i}\succeq_i^{sd}\lottery{x}{i}$. Moreover, ${\bf 1}^T \lambda >0$ indicates that there is at least one move of probability that is strictly positive, thereby implying that at least one agent is strictly better off.

Eqs.~\eqref{eq:dual} and \eqref{eq:dual1} together imply that the allocation $({\lottery{z}{1}},\ldots,{\lottery{z}{n}})$ is feasible with respect to $\supplyvector$. This means that the allocation $\lottery{X}{}$ is not ordinally efficient.
      \qed

\ignore{
\section{Strategy-proof in the large and asymptotically strategy-proof}
We show in this section that our definition of asymptotically strategy-proof is stronger than strategy-proof in the large \cite{azevedo2019strategy} is impli
Let $\tau\in \Delta(T)$ a distribution on type space.
$$\Phi(t_i,\tau)=\sum\Phi_i(t_i,t_{-i}) \cdot \mathbb{P}(t_{-i} \sim iid(\tau))$$

lottery  outcome  for agent $i$, playing $t_i$ when other agents idd from $\tau$.

\begin{definition}
    Strategy proof in the large if for every utility consistent with preference, for any $\tau\in \Delta(T)$, any $\delta>0$  there exists $n_0$ such that for all $n\ge n_0$ and all $t_i,t'_i$
$$
u_{t_i}(\Phi(t_i,\tau))\ge u_{t_i}(\Phi(t'_i,\tau))-\delta
$$

Our definition of almost stratergyproof is different, stronger.  
\begin{itemize}
    \item we work directly with ordinal pref. So it is true for all cardinal utilities consistent with ordinal pref. The bound not on utility but on prob that mechanism is fully strarergy proof. 
    \item we do not  require other agents to play idd. Only that number of agents have to be large.
\end{itemize}
    
\end{definition}

    \begin{theorem}
        Asymptotically strategy-proofness implies strategy-proofness in the large.
    \end{theorem}
    \begin{proof}
    Assuming utility consistent with ordinal preferences, where $U$ is the maximum utility any agent can achieve, we set $\epsilon = \frac{\delta}{U}$. Since the mechanism is truthful with a probability of at least $1-\epsilon$, deviating can only result in a payoff gain with a probability of $\epsilon$. Consequently, the maximum gain from deviation is $\epsilon \cdot U = \delta$.
      
    \end{proof}

 }   

\subsection{Proof of Theorem~\ref{theo:envy}}
\paragraph{Part (i)} Let $\mathbf{p}^*$ denote the equilibrium price. Consider the random bundles $(\lottery{x}{i}, \lottery{x}{i'}$) consumed by individuals $i$ and $i'$ when they share a common income distribution $\mathcal{B}$. Let $u_i$ represent a cardinal valuation consistent with the preference relation $\succ_i$.

For a given budget $b$, let $\x_{(\mathbf{p}, b, \succ_i)}$ denote  the best affordable bundle at price $\mathbf{p}$ under the preference relation $\succ_i$. Clearly,
$
\x_{(\mathbf{p}^*, b, \succ_i)} \succeq_i \x_{(\mathbf{p}^*, b, \succ_{i'})}.
$
Since the budget distributions $\mathcal{B}$ are identical, it must hold that for any bundle $z \in \Psi_i$
\[
\sum_{\x \succeq_i \z} \mathbb{P}_{\x}(\lottery{x}{i}) \;\;\ge\;\; 
\sum_{\y \succeq_i \z} \mathbb{P}_{\x}(\lottery{x}{i'}).
\]
Hence,  $\lottery{x}{i} \succeq_i^{sd} \lottery{x}{i'}$ as required.
\paragraph{Part (ii)}
We omit the proof of the second part of the theorem as it  follows directly from Theorem 3 in~\citet{budish2011combinatorial}. \qed
 \subsection{Proof of Theorem~\ref{thm:m1}}\label{proof:thm1}

 We  need an additional definition for the proofs.  Let $\typedist$ be a discrete type distribution. Define $\tau = | \support (\typedist) |$.  For every $t \in \support(\typedist)$, denote by  $p_t$ the probability that $t$ is sampled from $\typedist$; $p_t = \prob_{\succ \sim \typedist}(\succ = t)$, and set $\minp = \min_{t \in \support(\typedist)}\{p_t\}$. 
For simplicity, we assume that $\tau \sqrt{n}$ is an integer. 

The following three lemmata will be used to derive the efficiency and incentive properties of \Mone. Lemma~\ref{m1:2} and Lemma~\ref{lem:usingchernoff} derive statistical properties of the random grid. 
Lemma~\ref{lem:typediscrepancy} shows that an allocation that is efficient with respect to one type vector is almost  efficient with respect to another type vector, assuming they are sufficiently close.

\begin{lemma} \label{m1:2} Let $\nagtypevec = (\nagtype^1, \nagtype^2, \ldots, \nagtype^{\numtypes })$ be a vector of $\numtypes$ types, and let $\grid{\gridjump = \numtypes \sqrt{n},\numtypes }$ be a random grid. Let $\epsilon>0$, and assume that $n \geq \nthresh$, where $\nthresh =  \frac{36}{\epsilon^2}$.  Then with probability at least $1-\frac{\epsilon}{2}$,   for every $\mathbf{u} = (u^1, \ldots, u^\numtypes) \in \mathbf{G}$, it holds that for every $t \in [\numtypes]$, $|\nagtype^t - u^t|>1$.
\end{lemma}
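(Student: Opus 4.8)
The plan is to handle each coordinate separately, bound the probability that the one-dimensional grid $\mathbf{G}^t$ places a point within distance $1$ of $\nagtype^t$, and then take a union bound over the $\numtypes$ coordinates. Recall that $\mathbf{G}^t = \{0\}\cup\{z^t + k\gridjump : k \in \mathbb{Z}_{\ge 0}\}$ with $z^t$ uniform on $\{1,\ldots,\gridjump\}$ and $\gridjump = \numtypes\sqrt{n}$, and that as $\mathbf{u}$ ranges over $\mathbf{G}=\mathbf{G}^1\times\cdots\times\mathbf{G}^{\numtypes}$ its $t$-th coordinate $u^t$ ranges over all of $\mathbf{G}^t$. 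Hence the event in the statement is exactly the event that, for every $t$, no point of $\mathbf{G}^t$ lands in the integer window $W_t := \{\nagtype^t-1,\,\nagtype^t,\,\nagtype^t+1\}$ (integral, since all quantities here are integers). I will call coordinate $t$ \emph{bad} if some point of $\mathbf{G}^t$ lies in $W_t$.

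First I would bound $\prob(\text{coordinate } t \text{ is bad})$ by counting the bad shifts $z^t$. The window $W_t$ consists of the three integers $\nagtype^t-1,\nagtype^t,\nagtype^t+1$, and for each integer target $v\ge 1$ the equation $z^t + k\gridjump = v$ with $k\ge 0$ has at most one solution $z^t\in\{1,\ldots,\gridjump\}$, namely the unique representative of $v$ modulo $\gridjump$. Thus at most three shift values place a progression point inside $W_t$; the fixed grid point $0$ lies in $W_t$ only in the degenerate case $\nagtype^t\le 1$, which we exclude. Since $z^t$ is uniform over $\gridjump=\numtypes\sqrt{n}$ values,
\[
\prob(\text{coordinate } t \text{ is bad}) \;\le\; \frac{3}{\gridjump} \;=\; \frac{3}{\numtypes\sqrt{n}}.
\]

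Finally, a union bound over the $\numtypes$ coordinates gives
\[
\prob(\text{some coordinate is bad}) \;\le\; \numtypes\cdot\frac{3}{\numtypes\sqrt{n}} \;=\; \frac{3}{\sqrt{n}} \;\le\; \frac{\epsilon}{2},
\]
where the last step uses $n\ge\nthresh=36/\epsilon^2$. On the complementary event, of probability at least $1-\epsilon/2$, every coordinate is good, i.e.\ $|\nagtype^t-u^t|>1$ for all $t$ and all $u^t\in\mathbf{G}^t$, which is precisely the claim. I expect the only delicate step to be the counting argument: verifying that $W_t$ contains exactly three integer targets, that the progression hits each target for at most one shift value, and that the fixed point $0$ (and the edge case $\nagtype^t\le 1$) are correctly accounted for so that the bound is genuinely $3/\gridjump$ rather than something larger. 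The probabilistic part is then an elementary union bound, and no concentration inequality is needed here.
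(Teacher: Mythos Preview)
Your proof is correct and follows essentially the same approach as the paper: bound the per-coordinate probability of a grid point falling within distance $1$ of $\nagtype^t$ by $3/\gridjump$, then take a union bound over the $\numtypes$ coordinates and use $n\ge 36/\epsilon^2$. If anything, you are slightly more careful than the paper in spelling out the counting argument and flagging the degenerate case $\nagtype^t\le 1$ involving the fixed grid point~$0$.
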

\begin{proof}
    For every $t \in [\numtypes]$, let $u^t$ denote the coordinate of $\mathbf{G}^t$ that is closest to $\nagtype^t$. As the grid is chosen at random, the probability that  for any $t \in [\numtypes]$, 
    \begin{equation}
        \prob\left(|\nagtype^t - u^t|\leq 1\right) = \frac{3}{\numtypes\sqrt{n}} \leq \frac{3}{\numtypes\sqrt{\nthresh }}  = \frac{\epsilon}{2\numtypes},
    \end{equation}
    as there are three possible values of $u^t$ for which $|\nagtype^t - u^t|\leq 1$. 
    Taking a union bound over the types completes the proof. 
\end{proof}

\begin{lemma}\label{lem:usingchernoff}
    Let $\epsilon>0$, and let $\typedist$ be a discrete type distribution. Let $\grid{\gridjump,\numtypes }$ be a grid  with  $\numtypes = |\support(\typedist)|$.   Assume that $n \geq \nthresh$, where $\nthresh =  \frac{(8\numtypes )^2}{\epsilon^2\minp^2}$. Let  $\mathcal{E} = (\nagtypevec, \capacities)$ be  a random economy where $\nagtypevec~\sim \typedist^n$. Let  $\typeapprox$ be the smallest grid point that is at least as large as $\nagtypevec$, coordinate-wise. Then with probability at least $1-\frac{\epsilon}{2}$,  $ \nagtypevec \geq \left(1-\frac{\gridjump}{0.5\minp n}\right) \typeapprox$.
\end{lemma}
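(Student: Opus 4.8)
The plan is to reduce the multiplicative comparison $\nagtypevec \geq \left(1-\frac{\gridjump}{0.5\minp n}\right)\typeapprox$ to a simple lower bound on each grid coordinate, and then to establish that lower bound through a concentration argument. First I would record the elementary rounding property: by Definition~\ref{grid}, each coordinate grid $\mathbf{G}^t$ has consecutive gaps of size at most $\gridjump$ (the initial gap is $z^t\le\gridjump$ and all others equal $\gridjump$), so the smallest grid point $\typeapproxq^t$ that is at least the integer $\nagtype^t$ satisfies $0 \leq \typeapproxq^t - \nagtype^t < \gridjump$ for every $t$. The target inequality holds coordinate-wise precisely when $\typeapproxq^t - \nagtype^t \leq \frac{\gridjump}{0.5\minp n}\,\typeapproxq^t$; since the left-hand side is strictly below $\gridjump$, it suffices to guarantee $\typeapproxq^t \geq 0.5\minp n$ for each $t$. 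Because $\typeapproxq^t \geq \nagtype^t$, the whole statement then follows from the event $\{\nagtype^t \geq 0.5\minp n \text{ for all } t\}$.

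Second, I would bound the probability of the complementary event. Each count $\nagtype^t$ is a sum of $n$ i.i.d.\ Bernoulli$(p_t)$ indicators, so $\expect{\nagtype^t} = np_t \geq n\minp$. A multiplicative Chernoff bound with deviation parameter $\tfrac12$ gives $\prob(\nagtype^t < 0.5\,np_t) \leq \exp(-np_t/8) \leq \exp(-n\minp/8)$; and since $\minp \leq p_t$, the event $\{\nagtype^t < 0.5\minp n\}$ is contained in $\{\nagtype^t < 0.5\,np_t\}$, whence $\prob(\nagtype^t < 0.5\minp n) \leq \exp(-n\minp/8)$.

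Third, I would take a union bound over the $\numtypes$ types and verify the threshold. This yields $\prob(\exists\, t:\ \nagtype^t < 0.5\minp n) \leq \numtypes\exp(-n\minp/8)$. Substituting $n \geq \nthresh = (8\numtypes)^2/(\epsilon^2\minp^2)$ gives $n\minp/8 \geq 8\numtypes^2/(\epsilon^2\minp)$, and using the elementary inequality $\ln x \leq x$ (applied to $x = 2\numtypes/\epsilon$) one checks that $8\numtypes^2/(\epsilon^2\minp) \geq \ln(2\numtypes/\epsilon)$, so the union bound is at most $\tfrac{\epsilon}{2}$. This holds with considerable slack, which is why the stated $\nthresh$ can be so generous.

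I do not expect any single step to be a serious obstacle; the only place demanding care is the reduction in the first paragraph, where one must check that the particular multiplicative factor $1-\frac{\gridjump}{0.5\minp n}$ is exactly what is needed to absorb the additive rounding error (bounded by $\gridjump$) once the grid coordinate is known to be at least $0.5\minp n$. The concentration step is routine, and the constants (the factor $8$ in the exponent and the deviation $\tfrac12$) can be traded off freely against the size of $\nthresh$, so I would fix them for cleanliness rather than optimize.
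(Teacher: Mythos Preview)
Your proposal is correct and follows essentially the same route as the paper: reduce the multiplicative bound to the event $\{\nagtype^t \geq 0.5\minp n \text{ for all } t\}$, establish that event by a Chernoff bound plus union bound, and then absorb the additive rounding error (at most $\gridjump$) into the ratio. The only cosmetic difference is that the paper takes a small detour through a variable deviation $\delta = \sqrt{2\log n/(\minp n)}$ and then verifies $\delta \le \tfrac12$, whereas you go straight to $\delta=\tfrac12$; your version is slightly cleaner but not materially different.
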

\begin{proof}
 
Recall that $\nagtype^t$ denotes the number of agents of type $t$; let $\mu_t$ denote the expectation of $\nagtype^t$; that is $\mu_t = \expectation_{\nagtypevec\sim \typedist^n} (\nagtype^t) = np_t$.

Using a standard Chernoff bound, we have that for any $\delta \in (0,1)$, 
$\prob\left(\nagtype^t  \leq (1-\delta)\mu_t\right)\leq e^{-\delta^2\mu_t/2}. $
Setting $\delta = \sqrt{\frac{2\log{n}}{\minp n}}$ , let $\eventb_t$ denote that event $\nagtype^t \geq (1-\delta)\mu_t$, and let $\eventb$ denote the event that $\eventb_t$ holds for all $t \in \support(\typedist)$. 
We have  that for any $t \in \support(\typedist)$, 
\begin{equation}\prob\left(\neg \eventb_t \right) \leq e^{-\log{n}} = \frac{1}{n}. 
\end{equation}
Therefore, using the union bound, $\prob\left(\neg \eventb \right) \leq \frac{\numtypes }{n}<\frac{\epsilon}{2}$. Therefore $\eventb$ occurs  with probability at least $1 -\frac{\epsilon}{2}$. Note that as $n \geq \frac{64}{\minp^2}$, 
\begin{align*}
\delta = \sqrt{\frac{2\log{n}}{\minp n}} \leq  \sqrt{\frac{2}{\minp \sqrt{n}}} \leq \sqrt{\frac{2}{8}} = \frac{1}{2}.
\end{align*}
The first inequality is because $\frac{\log{n}}{n} \leq \frac{1}{\sqrt{n}}$.
Therefore as long as  $\eventb$ occurs then $t \in \support(\typedist)$, $\nagtype^t \geq 0.5\minp n$. We bound the ratio of $\typeapprox$ and $\nagtypevec$, assuming that $\eventb$ indeed occurs:

\begin{align*}
    \frac{\nagtypevec}{\typeapprox} \geq \frac{\nagtypevec}{\nagtypevec + \gridjump} %
    = 1-\frac{\gridjump}{0.5\minp n+\gridjump} \geq 1-\frac{\gridjump}{0.5\minp n}. %
\end{align*}

\end{proof}

\begin{lemma}\label{lem:typediscrepancy}
    Let $\epsilon>0$, and let $\nagtypevec \in \mathbb{R}^\numtypes $ be the the vector of the number of agents of each type. Let $\typeapprox \in \mathbb{R}^\numtypes $ be a vector such that $(1-\epsilon) \typeapprox \leq \nagtypevec \leq  \typeapprox$ and let $\lottery{X}{}$ be an allocation that is efficient with respect to the economy $\mathcal{E}' = \zeconomy$. Then $\lottery{X}{}$ is $\epsilon$-efficient with respect to $\mathcal{E} = \pseconomy$.  
\end{lemma}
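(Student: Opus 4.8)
The plan is to route the argument through the CERI characterization of ordinal efficiency (Theorem~\ref{thm:characterization}) and to exploit the fact that a \emph{single} price vector can support $\lottery{X}{}$ in both economies once the capacities are adjusted. Throughout, $\nagtype^t$ and $\typeapproxq^t$ denote the number of type-$t$ agents in $\mathcal{E}=\pseconomy$ and $\mathcal{E}'=\zeconomy$ respectively, and $\lottery{x}{t}$ denotes the common lottery received by every type-$t$ agent (well defined because, in a CERI, condition~(i) pins the demand down from $(\prices,\randombudget{t},\succ_t)$ alone). First I would apply Theorem~\ref{thm:characterization} to $\mathcal{E}'$: since $\lottery{X}{}$ is ordinally efficient there, it is a CERI allocation, so there is a price vector $\prices\ge 0$ and a budget profile under which condition~(i), $\lottery{x}{t}=\optbundle{t}(\prices,\randombudget{t})$, and condition~(ii), $\sum_t \typeapproxq^t\expect{\lottery{x}{t}}_\good\le\supply{\good}$ with equality whenever $\price{\good}>0$, both hold.

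The crucial observation is that the random demand $\optbundle{t}(\prices,\randombudget{t})$ depends only on prices, budgets and preferences, and not on the capacities or on how many agents are present. Keeping the same $\prices$ and the same budget distributions therefore preserves condition~(i) verbatim for the smaller population $\nagtypevec$, so it only remains to pick capacities $\capacities'$ that restore market clearing. I would set ${c'}^{\good}=\sum_t\nagtype^t\expect{\lottery{x}{t}}_\good$ for every good with $\price{\good}>0$ and ${c'}^{\good}=\supply{\good}$ for every good with $\price{\good}=0$. Priced goods then clear with equality by construction, while for free goods $\sum_t\nagtype^t\expect{\lottery{x}{t}}_\good\le\sum_t\typeapproxq^t\expect{\lottery{x}{t}}_\good\le\supply{\good}={c'}^{\good}$, using $\nagtypevec\le\typeapprox$ and feasibility of $\lottery{X}{}$ in $\mathcal{E}'$. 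Hence $(\prices,\lottery{X}{})$ is a CERI for $(\capacities',\nagtypevec)$, and Theorem~\ref{thm:characterization}, read in the CERI$\,\Rightarrow\,$efficient direction, shows $\lottery{X}{}$ is ordinally efficient for $(\capacities',\nagtypevec)$.

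The last and most delicate step is to verify that $\capacities'$ lies in the band $\capacities(1-\epsilon)\le\capacities'\le\capacities(1+\epsilon)$ demanded by the definition of ordinal $\epsilon$-efficiency. For a good with $\price{\good}=0$ this is immediate since ${c'}^{\good}=\supply{\good}$. For a good with $\price{\good}>0$, complementary slackness in $\mathcal{E}'$ gives $\sum_t\typeapproxq^t\expect{\lottery{x}{t}}_\good=\supply{\good}$; hence ${c'}^{\good}\le\sum_t\typeapproxq^t\expect{\lottery{x}{t}}_\good=\supply{\good}$, while $\nagtypevec\ge(1-\epsilon)\typeapprox$ gives ${c'}^{\good}\ge(1-\epsilon)\sum_t\typeapproxq^t\expect{\lottery{x}{t}}_\good=(1-\epsilon)\supply{\good}$. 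Thus ${c'}^{\good}\in[(1-\epsilon)\supply{\good},\supply{\good}]\subseteq[(1-\epsilon)\supply{\good},(1+\epsilon)\supply{\good}]$, completing the proof. I expect this band check to be the main obstacle: it succeeds precisely because the goods split into priced goods, where the market cleared \emph{exactly} in $\mathcal{E}'$ so that shrinking the population contracts demand by a factor no smaller than $1-\epsilon$, and free goods, where the existing slack lets capacity be left untouched. The CERI complementary-slackness condition is exactly what licenses this dichotomy; by contrast, a direct attempt to transplant a hypothetical improving allocation from $\mathcal{E}$ back to $\mathcal{E}'$ (giving the surplus agents their old lotteries) breaks down on under-consumed goods, which is why the price-based route is the natural one.
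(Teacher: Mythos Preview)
Your proof is correct and follows essentially the same approach as the paper: both invoke Theorem~\ref{thm:characterization} to obtain CERI prices $\prices$ for $\mathcal{E}'$, keep these prices for the population $\nagtypevec$, and use complementary slackness to show that the aggregate demand $\sum_t\nagtype^t\expect{\lottery{x}{t}}_\good$ lies in $[(1-\epsilon)\supply{\good},\supply{\good}]$ for priced goods and below $\supply{\good}$ for free goods. You are simply more explicit than the paper in constructing $\capacities'$ and in invoking Theorem~\ref{thm:characterization} again in the CERI$\,\Rightarrow\,$efficient direction, which the paper leaves implicit.
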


\begin{proof}
    Let $x^t_j$ denote the (expected) amount of good $j$ allocated to an agent of type $t$ in the allocation $\lottery{X}{}$. As $\lottery{X}{}$ is efficient with respect to $\mathcal{E}' = \zeconomy$, by Theorem~\ref{thm:characterization}, there must be prices $\prices$ 
    such that 
    $$\sum_{t \in \numtypes } \typeapproxq^t x^t_j \leq c_j, $$
    with equality for each good $\good$ such that for $\price{\good}>0$.
    The (actual) expected {according to true types} amount of good $j$ allocated is $\sum_{t \in \numtypes } \nagtype^t x^t_j$. Then %
    for every $\good$ such that $\price{\good}>0$, \begin{align*}
      \sum_{t \in \numtypes } \nagtype^t x^t_j &\geq \sum_{t \in \numtypes } (1-\epsilon)\typeapproxq^t x^t_j
         =(1-\epsilon) \sum_{t \in \numtypes } \typeapproxq^t x^t_j = (1-\epsilon) c_j.
    \end{align*}
    Similarly, $\sum_{t \in \numtypes } \nagtype^t x^t_j \leq  c_j$. 
    For every $\good$ such that  $\price{\good}=0$, $\sum_{t \in \numtypes } \nagtype^t x^t_j < c_j$. Therefore, $\lottery{X}{}$ is $\epsilon$-efficient with respect to $\mathcal{E}$.
\end{proof}

To complete the proof of Theorem~\ref{thm:m1},
for any $\epsilon>0$, set $\nthresh = \frac{(8\numtypes )^2}{\epsilon^2\minp^2}$. 
\paragraph{Part (i): uniform strategyproofness}:
From Lemma~\ref{m1:2}, the coordinate-wise distance between the vector $\typeapprox$ generated by Mechanism \Mone\  and $\nagtypevec$ is   at least $1$, with probability at least $1-\epsilon$.  If this is the case, no agent can affect the choice of the grid point used for the generation of the prices. 
\paragraph{Part (ii): asymptotic ordinal efficiency}
Plugging  $\gridjump = \gridjumplong$ into Lemma~\ref{lem:usingchernoff}, gives that   $ \nagtypevec \geq \left(1-\epsilon\right) \typeapprox$ with probability at least $1-\frac{\epsilon}{2}$. Lemma~\ref{lem:typediscrepancy} shows that when this is the case, the allocation is asymptotically ordinally efficient.    Note that Lemma~\ref{lem:usingchernoff} does not require that the grid be random. As the only source of randomness in Mechanism \Mone\ is the choice of the random grid, we have that the allocation is asymptotically efficient for every realization of the mechanism; hence it is  asymptotically efficient. 
\paragraph{Part (iii): asymptotic ex-post efficiency} This property follows directly from the asymptotic ordinal efficiency of the mechanism and from applying Theorem~\ref{theo:feasibility} to \Mone\ allocation. Specifically, \Mone\ implements a CERI as a lottery that deviates from the expected aggregate allocation of the CERI by at most $\Delta-1$ units for each good.
\paragraph{Parts (iv) and (v): Ordinal envy-freeness and ex-post EF1}
Since the allocation is a CERI with respect to the economy $\mathcal{E}' = (\typeapprox, \mathbf{c})$, then using Theorem~\ref{theo:envy} we can conclude that it is {ordinally }envy-free and ex-post EF1.

\qed

\Xomit{

\section{A strategyproof implementation}\label{app:mtwo}

Since Mechanism \Mone\ allocated using one price vector computed from one CERI for all agents, each agent had some probability of affecting the price and consequently their allocation. As a result \Mone\ was not strategyproof. To attain strategyproofness, Mechanism \Mtwo\ uses CERI-based personalized pricing, where each agent's price vector used for their allocation depends only on the other agents' reported types rather than their own.

{After instantiating a random grid, mechanism \Mtwo, (step 2) elicits agent types $\nagtypevec = (\nagtype^1, \nagtype^2, \ldots, \nagtype^{\numtypes })$.  However, subsequently in step 3, for every agent $i$, \Mtwo\ (a) computes the number $\nagtypevec_i$ of agents of type $t$ absent agent $i$; (b) approximates the vector $\nagtypevec_i$ of types by a grid point $\typeapprox_i$; (c) computes a personalized CERI for the economy defined by a vector $\mathbf{z_i}$ of types, identical budget distributions with a small support, and true capacities; and (step 5) allocates to agent $i$ according to their personalized CERI prices.}

Formally, the mechanism is the following.

\begin{definition}[Mechanism \Mtwo] Let $\typeset$ denote the set of possible agent types, and let  $\numtypes  = |\typeset|$.

\begin{enumerate}

\item Let $\grid{\gridjump,\numtypes }$ be a random grid. 
\item Elicit the agents' types, $\succ$. 
\item For each agent $i$, do the following (where $N_{-i}$ denotes the set of agents excluding $i$): 
\begin{enumerate}
    \item Let $\nagtypevec_{-i} = (\nagtype_{-i}^1, \nagtype_{-i}^2, \ldots, \nagtype_{-i}^{\numtypes })$, where  $\nagtype_{-i}^t$ denotes the number of agents of type $t \in \typeset$ in $N_{-i}$.
    \item Let $\typeapprox_{-i}$ be the minimal grid point that is coordinate-wise not less than $\nagtypevec_{-i}$. Formally, $\typeapprox_{-i} = \mathbf{u} \in \mathbf{G} :  \forall t, u^t$ is the minimal coordinate of $\mathbf{G}^t$ such that $u^t \geq \nagtype_{-i}^t$.
    \item Compute a CERI for  $\mathcal{E}' = (\typeapprox_{-i}, \mathbf{c})$  with random budgets uniformly in $[1,\frac{\Delta}{\Delta-1}]$ and denote the price vector for this equilibrium by $\prices_{-i}$.
    \item Allocate to agent $i$ his random demand  using the prices $\prices_{-i}$ and  budgets uniformly in $[1,\frac{\Delta}{\Delta-1}]$.
\end{enumerate}
\end{enumerate}
\end{definition}

{Personalized pricing in \Mtwo\ can come at an efficiency and envy-freeness cost because there is some probability that in the overall allocation agents do not face the same price vector and their marginal rates of substitution are not equalized. In this case, we can no longer provide efficiency or envy-freeness guarantees in finite markets. However, our final result shows that personalized prices will converge with high probability in large markets.}

\subsubsection*{Definitions of properties}

We start by recalling the standard definition of strategyproofness.
\begin{definition}
    A mechanism $\Phi$ is strategyproof if for each agent $i$  and all $(\succ_i,\succ_i', \succ_{-i})$
    $$\Phi(\succ_i,\succ_{-i})\succeq^{sd}_i \Phi(\succ'_i,\succ_{-i}).$$
\end{definition}

A mechanism is strategyproof if reporting one's type truthfully is a weakly dominant strategy for every agent. The truthtelling incentives in a strategyproof mechanism do not depend on the size of the market or on the beliefs about what other agents are reporting.

Strategy-proofness entails a trade-off with efficiency and fairness. To formalize this, we introduce a weaker notion of efficiency, in comparison with the asymptotic ordinal efficiency defined in Definition~\ref{def:strongAE}.

\begin{definition}
     
   A random mechanism $\Phi^{\samplespace}$ is { weak-asymptotically ordinally efficient}   if for every $\epsilon$, and discrete distribution of agent types $\typedist$, there exists $\nthresh$ such that for all $\capacities$ and $n \geq \nthresh$
 $$
 \mathbb{P}_{\outcome\sim \samplespace, \succ \sim  \typedist^n} (\Phi^\outcome\economy \text{ is ordinally  $\epsilon$-efficient})\ge 1-\epsilon. 
 $$

\end{definition}

The above definition of asymptotic efficiency states that whenever preferences are independently drawn from any distribution, we can attain an ordinally $\epsilon$-efficient allocation with high probability in a large market.   The difference between this notion and  asymptotic efficiency (Definition~\ref{def:strongAE}) is that a mechanism that is weak-asymptotically ordinally efficient might have a realization of the external source of randomness that \emph{does not} produce an ordinally $\epsilon$-efficient allocation even in large market. Definition~\ref{def:strongAE}, on the other hand, requires that the condition hold for all realizations of external source of randomness $\outcome\in \samplespace$ rather than simply hold with high probability.

We now give two definitions of our envy-freeness notions that are only required to hold in a large market.

\begin{definition}
A randomized mechanism $\Phi^{\samplespace}$ is
    \emph{asymptotically ordinally envy-free} if for every $\epsilon$, there exists an $\nthresh$ such that if $n\geq\nthresh$, then for all preference profiles $\succ$:
$$
\mathbb{P}_{\outcome\sim \samplespace}(\text{the allocation } \Phi^\outcome \economy\text{ is ordinally envy-free})
\ge 1-\epsilon.
$$
\end{definition}

\begin{definition}
A randomized mechanism $\Phi^{\samplespace}$ is 
    \emph{asymptotically ex-post EF1} if for every $\epsilon$, there exists $\nthresh$ such that if $n\geq\nthresh$, then for all preference profiles $\succ$:
  $$
\mathbb{P}_{\outcome\sim \samplespace}(\text{the allocation } \Phi^\outcome\economy \text{ is ex-post EF1})
\ge 1-\epsilon.
$$

\end{definition}
The above definitions state that a mechanism is asymptotically ordinally envy-free (resp. EF1) if we can attain an ordinally envy-free (resp. EF1) allocation with high probability in a large market. Note the dependence of both definitions on the external source of randomness similar to the definition of asymptotic efficiency.

\begin{theorem}\label{thm:m2}
Mechanism \Mtwo\ with  grid step size $\gridjump = \left\lfloor \numtypes \sqrt{n} \right\rfloor$ is strategyproof, asymptotically efficient, asymptotically ordinally envy-free and asymptotically ex-post EF1.
\end{theorem}

Before proving Theorem~\ref{thm:m2}, we require an additional lemma.

\begin{lemma} \label{m2:2} Let $\epsilon>0$ and let $\typedist$ be a discrete type distribution. Let $\grid{\gridjump,\numtypes }$ be a random grid, where $\gridjump \geq \frac{2\tau}{\epsilon}$. Denote $\numtypes = |\support(\typedist)|$ and assume that $n \geq \nthresh$, where $\nthresh =  \frac{(8\numtypes )^2}{\epsilon^2\minp^2}$.  and let  $E = (\nagtypevec, \capacities)$ be  a random economy where $\nagtypevec~\sim \typedist^n$. Let  $\typeapprox_{-i}, i\in[n]$ be the points generated by \Mtwo.  Then with probability at least $1-\epsilon$,   there exists $\typeapprox$ such that $\forall i:  \typeapprox = \typeapprox_{-i}$, and  $ \nagtypevec \geq \left(1-\frac{\gridjump}{0.5\minp n}\right) \typeapprox$. 

\end{lemma}

\begin{proof}

Let $\eventa$ denote the event that there exists $\typeapprox$ such that $\forall i:  \typeapprox = \typeapprox_{-i}$. $\eventa$ occurs if and only if
for all $i,i', \typeapprox_{-i} = \typeapprox_{-i'}$.      Let $\nagtypevecbig = \{\nagtypevec_{-i}\}_{i=1}^{n}$. There are (at most) $2\numtypes $ distinct elements in $\nagtypevecbig$. For every $t \in [\numtypes ]$, let $\nagtype^t_g$ be such that for every $\nagtypevec \in \nagtypevecbig, \nagtype^t \in \{\nagtype^t_g, \nagtype^t_g + 1\} $. That is, $\nagtype^t_g$ is the smaller of the two possible values of $\nagtype^t$. As long as $\forall t \in [\numtypes ], \nagtype^t_g \notin \mathbf{G}^t$, for all $i,i', \typeapprox_{-i} = \typeapprox_{-i'}$.  For any $t \in [\numtypes ], \Pr[\nagtype^t_g \notin \mathbf{G}^t] = \frac{1}{\gridjump}$. Using the union bound, none of these events happens with probability at least $1-\frac{\numtypes }{\gridjump}$.
     Therefore $\prob\left( \neg \eventa \right) \leq \frac{\numtypes }{\gridjump} \leq \frac{\epsilon}{2}$. 
Combining this with Lemma~\ref{lem:usingchernoff} completes the proof.
\end{proof}

\begin{proof}[Proof of Theorem~\ref{thm:m2}]
For any $\epsilon>0$, set $\nthresh = \frac{(8\numtypes )^2}{\epsilon^2\minp^2}$. 
\begin{itemize}
\item \emph{Strategyproofness}:
The mechanism  is clearly strategyproof, as each agent's allocation does not depend on their declared type. 
\item \emph{Weak asymptotic efficiency}: Similarly to the proof of Theorem~\ref{thm:m1},
    plugging  $\gridjump = \gridjumplong$ into Lemma~\ref{m2:2}, gives that with probability at least $1-\epsilon$, all points $\typeapprox_{-i}$ generated by Mechanism \Mtwo\ will be identical and  $ \nagtypevec \geq \left(1-\epsilon\right) \typeapprox$ (where $\typeapprox$ is this common grid point). If this is the case, then by Lemma~\ref{lem:typediscrepancy},  \Mtwo\ is asymptotically efficient.%
\item \emph{Asymptotic  envy-freeness and ex-post EF1}:  If all points $\typeapprox_{-i}$ generated by Mechanism \Mtwo\ are identical, then denoting by $\typeapprox$  this common grid point, the same CERI is computed for each agent by Mechanism \Mtwo. In this case, the allocation is envy-free and ex-post EF1  from Theorem~\ref{theo:envy}. As this happens with probability  at least $1-\epsilon$, the allocation is asymptotically envy-free and ex-post EF1.\qedhere
\end{itemize}
\end{proof}

The choice of grid step size is also crucial in \Mtwo, but the tradeoff it captures here is distinct from the efficiency-strategyproofness tradeoff in \Mone. Since  \Mtwo\ is strategyproof, $\gridjump$ capture the tradeoffs in efficiency. There are two conditions to achieve ordinal efficiency: 1) agents must face the same prices that guide the allocation, and 2) the corresponding allocations must clear the market in expectation. If the grid step size $\gridjump$ is large, it is more likely that $\nagtypevec_{-i}$ will be the same for agents $i$ and, as a consequence, more likely that agents receive an allocation guided by the same prices. However, a larger $\gridjump$ increases the deviation from the true type distribution to its approximating grid point, introducing errors in market clearing. By carefully selecting $\gridjump$, \Mtwo\ balances this tradeoff. Note that, unlike \Mone, \Mtwo\ can only achieve asymptotic efficiency. The reason that \Mtwo\ cannot achieve stronger asymptotic efficiency  as in Definition~\ref{def:strongAE} is that  \Mtwo\ only ensures price convergence with high probability unlike \Mone\ that guarantees to allocate using a single CERI price vector regardless of random grid choice.

}

\newpage
\section{CERI vs. ACEEI}
\subsection{Existence of a CERI implies existence of an \aceei}\label{app:B}
We first show that 
the existence of a CERI implies the existence of an approximate competitive equilibrium from equal incomes (\aceei) with the same bound on the {excess} demand as in \citet{budish2011combinatorial}. In particular we show the following result.
\begin{proposition}~\label{prop:budish}
     For any CERI, there exists an ex-post implementation which has the excess demand bounded by  $\sqrt{\Delta m/2}$ in the $\ell^2$-norm.
 \end{proposition}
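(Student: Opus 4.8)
The plan is to start from an arbitrary CERI $(\prices,\lottery{X}{})$ and extract a single deterministic realization whose aggregate demand is within $\sqrt{\Delta m/2}$ of supply in the $\ell^2$-norm; this realization is a competitive equilibrium at the single price vector $\prices$ with individual budgets, and when the income distributions are (near-)equal it is precisely an \aceei. Write $\bar{\mathbf{x}}_i := \expect{\lottery{x}{i}}$ for each agent's expected demand, and let $\bar{\mathbf{x}} := \sum_i \bar{\mathbf{x}}_i$ denote the aggregate, with $j$-th coordinate $\bar x_j$. By the market-clearing condition (ii) of Definition~\ref{def:ceri}, $\bar{\mathbf{x}} \le \supplyvector$, with $\bar x_j = \supply{j}$ whenever $\price{j}>0$. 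The first observation is that, by condition (i) of Definition~\ref{def:ceri}, every bundle in $\support(\lottery{x}{i})$ is an optimal bundle for agent $i$ at prices $\prices$ for some budget in $\support(\randombudget{i})$. Consequently, \emph{any} selection of one bundle $\mathbf{x}_i \in \support(\lottery{x}{i})$ per agent is realized by some income profile $(w_1,\ldots,w_n)$ and constitutes a deterministic competitive equilibrium at $\prices$. The task therefore reduces to a combinatorial one: choose $\mathbf{x}_i \in \support(\lottery{x}{i})$ so that $\big\|\sum_i \mathbf{x}_i - \bar{\mathbf{x}}\big\|_2 \le \sqrt{\Delta m/2}$.

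\textbf{The rounding step.} To carry this out I would view the collection $\{\lottery{x}{i}\}$ as a fractional assignment of agents to bundles in which each agent's probabilities sum to one, and round it to an integral selection exactly as in \citet{budish2011combinatorial}. The relevant fact is that such a fractional assignment can be rounded to one bundle per agent through a flow/cycle-cancelling argument that perturbs the aggregate consumption of only a bounded number of goods; since every bundle in every support has size at most $\Delta$, bounding both the number of perturbed goods and the magnitude of each perturbation yields $\big\|\sum_i \mathbf{x}_i - \bar{\mathbf{x}}\big\|_2 \le \sqrt{\Delta m/2}$, matching the clearing-error analysis of \citet{budish2011combinatorial}. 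Because $\bar x_j = \supply{j}$ on priced goods and $\bar x_j \le \supply{j}$ on free goods (where any additional rounded demand only decreases the signed deviation from supply), the excess demand of the selected allocation relative to $\supplyvector$ is bounded by the same quantity in the $\ell^2$-norm.

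\textbf{Packaging as an \aceei.} It then remains to assemble the selected allocation together with $\prices$ and the budgets $w_i$ realizing each $\mathbf{x}_i$. When the CERI is taken with (near-)equal income distributions, all realized budgets lie in a small common interval, so the output is an approximate competitive equilibrium from (approximately) equal incomes with the claimed $\ell^2$ clearing error.

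\textbf{Main obstacle.} I expect the rounding step to be the crux. The generalized Birkhoff--von Neumann rounding underlying Theorem~\ref{theo:feasibility} does not suffice here: it controls the error good-by-good in the $\ell^\infty$-norm (at most $\Delta-1$ per good), which yields only $\|\cdot\|_2 \le (\Delta-1)\sqrt{m}$, weaker than $\sqrt{\Delta m/2}$ by a factor of order $\sqrt{\Delta}$. Obtaining the sharper $\ell^2$ bound requires the finer argument that \emph{simultaneously} limits the number of goods whose aggregate demand is perturbed and the size of each perturbation. The remaining care is in checking that the integral selection this argument produces keeps each agent on a bundle in the support of their CERI lottery, so that the single price vector $\prices$ continues to support the rounded allocation.
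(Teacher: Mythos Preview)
Your overall skeleton matches the paper's: reduce to selecting one bundle $\mathbf{x}_i \in \support(\lottery{x}{i})$ per agent so that the aggregate is close to $\bar{\mathbf{x}}$ (hence to $\supplyvector$) in $\ell^2$, then observe that each selected bundle is optimal for some realized budget at the single CERI price vector $\prices$, so the output is a competitive-equilibrium allocation with the claimed clearing error.

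The gap is in the rounding step. You describe it as a ``flow/cycle-cancelling'' argument in the style of \citet{budish2011combinatorial}, but that is not the tool either Budish or the present paper uses. Cycle-cancelling/BvN-type arguments are exactly what you yourself flag as giving only the $\ell^\infty$ bound $(\Delta-1)\sqrt{m}$; they do not simultaneously control the \emph{number} of goods perturbed. The paper instead invokes the improved Shapley--Folkman bound of \citet{budish2020improved} (stated as Lemma~\ref{theo:bush}): for compact $S_1,\dots,S_n\subset\mathbb{R}^m$ with maximum diameter $D$ and any $\mathbf{c}\in\operatorname{conv}(S_1+\cdots+S_n)$, one can pick $\mathbf{x}_i\in S_i$ with $\|\mathbf{c}-\sum_i\mathbf{x}_i\|_2\le D\sqrt{m}/2$. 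Taking $S_i$ to be the (finite) set of realizations of $\lottery{x}{i}$ and observing that any two bundles of size at most $\Delta$ differ in $\ell^2$ by at most $\sqrt{2\Delta}$, so $D\le\sqrt{2\Delta}$, gives the bound $\sqrt{2\Delta}\cdot\sqrt{m}/2=\sqrt{\Delta m/2}$ in one line. Your ``finer argument that simultaneously limits the number of goods whose aggregate demand is perturbed and the size of each perturbation'' is precisely what Shapley--Folkman delivers; you should name it and use the diameter calculation rather than a rounding heuristic that, as you note, cannot reach the stated bound.
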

To show this, we use the following improved bound for the Shapley--Folkman theorem \citep{budish2020improved} in order to obtain an  \aceei. 
\begin{lemma}[Theorem 3.1 in \citealp{budish2020improved}] \label{theo:bush} 
If $S_1,\ldots,S_n$ are compact subsets of $\mathbf{R}^m$, if ${\bf c}\in conv(S_1+..+S_n)$, and $D$ is the maximum diameter of $S_i$, then there exists $\x_i\in S_i$ such that 
    $$
    ||{\bf c}-\sum_i \x_i||_{\ell_2} \le D\sqrt{m}/2.
    $$
\end{lemma}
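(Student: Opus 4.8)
The plan is to reduce the statement to a discrete rounding problem via the Shapley--Folkman decomposition and then to control the rounding error using Jung's theorem on the circumradius of bounded-diameter sets.

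First I would record the standard identity $\mathrm{conv}(S_1+\cdots+S_n)=\mathrm{conv}(S_1)+\cdots+\mathrm{conv}(S_n)$, so the hypothesis ${\bf c}\in\mathrm{conv}(S_1+\cdots+S_n)$ yields a representation ${\bf c}=\sum_i y_i$ with each $y_i\in\mathrm{conv}(S_i)$. By Carath\'eodory's theorem each $y_i$ is a convex combination $\sum_k\lambda_{ik}s_{ik}$ of at most $m+1$ points of $S_i$. The Shapley--Folkman lemma --- provable here by a dimension-counting argument: take a vertex of the polytope of all coefficient vectors $\mu$ consistent with the fixed aggregate $\sum_i\sum_k\mu_{ik}s_{ik}={\bf c}$ and the simplex constraints $\sum_k\mu_{ik}=1$, where the number of strictly fractional coefficients at a vertex is at most the number of binding aggregate constraints, namely $m$ --- then lets me assume that all but a set $T$ of at most $m$ indices satisfy $y_i\in S_i$ exactly. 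Thus it remains to round the fractional indices: to find $x_i\in S_i$ for $i\in T$ so that the residual $E:=\sum_{i\in T}(y_i-x_i)$ is small, since for $i\notin T$ I set $x_i=y_i\in S_i$ and these contribute nothing to ${\bf c}-\sum_i x_i=E$.

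The core is therefore the rounding lemma: given at most $m$ points $y_i\in\mathrm{conv}(S_i)$ with $\operatorname{diam}(S_i)\le D$, choose $x_i\in S_i$ with $\|E\|_2\le\tfrac{D\sqrt m}{2}$. Here I would bring in Jung's theorem, which guarantees that each $S_i$ (diameter $\le D$, in $\mathbb{R}^m$) lies in a ball of circumradius $R_i\le D\sqrt{\tfrac{m}{2(m+1)}}\le\tfrac{D\sqrt m}{2}$; since $y_i=\sum_k\lambda_{ik}s_{ik}$ is the weighted mean of its support points, the variance identity $\sum_k\lambda_{ik}\|s_{ik}-y_i\|^2\le R_i^2$ shows that some support point lies within $R_i$ of $y_i$. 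To control the aggregate $E$ rather than each term separately, I would round the fractional indices by a dependent (iterative) procedure: maintain the fractional coefficient vector and, at each step, move along a direction in the kernel of the still-active aggregate and simplex constraints until one more coefficient becomes integral, keeping $E=0$ as long as the degrees of freedom allow and incurring residual error only on the final, lower-dimensional batch.

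The step I expect to be the main obstacle is obtaining the sharp constant $\tfrac12$. A naive bound --- independent randomized rounding, or the greedy/Starr argument giving $\|E\|_2^2\le\sum_{i\in T}R_i^2\le mR^2$ --- only yields the weaker $\|E\|_2\le D\sqrt{m/2}=\tfrac{D\sqrt m}{\sqrt 2}$, losing a factor of $\sqrt2$. Matching $\tfrac{D\sqrt m}{2}$ requires exploiting that the at-most-$m$ fractional residuals cannot simultaneously attain the worst-case (simplex) circumradius and align adversarially; the delicate accounting that yields the halved constant --- coupling the per-coordinate $\tfrac{D}{2}$ rounding error with the at-most-$m$ effective fractional degrees of freedom so that the errors combine in quadrature --- is exactly the refined content of the cited theorem, and is where I would concentrate the technical effort.
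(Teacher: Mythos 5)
The paper itself offers no proof of this lemma: it is imported verbatim as Theorem 3.1 of \citet{budish2020improved}, so the benchmark is that paper's argument. Your scaffolding is correct and standard --- the identity $\mathrm{conv}(S_1+\cdots+S_n)=\mathrm{conv}(S_1)+\cdots+\mathrm{conv}(S_n)$, Carath\'eodory, and the vertex (basic-feasible-solution) form of the Shapley--Folkman lemma leaving at most $m$ indices with $y_i\notin S_i$. But the proposal is explicitly incomplete at the decisive step, and you concede as much: Jung's theorem controls the distance from $y_i$ to $S_i$ by the circumradius, roughly $D/\sqrt{2}$ for large $m$, and summing squares over up to $m$ fractional indices caps you at $\|E\|_2\le D\sqrt{m/2}$, a factor $\sqrt{2}$ short of the claim. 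Deferring the halved constant to ``the refined content of the cited theorem'' means the statement is not proved; moreover, Jung's theorem is the wrong tool here and cannot be patched to deliver the constant, since a single fractional index can genuinely sit at distance $\approx D/\sqrt{2}$ from its set.

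The missing idea is not a sharper geometric fact about individual sets but a sharper reading of the vertex you already constructed, combined with \emph{variance} rather than \emph{circumradius} bounds. At a basic feasible solution of your LP (with $n$ simplex constraints and $m$ aggregate equality constraints) the number of positive coefficients is at most $n+m$; writing $k_i$ for the support size of the representation of $y_i$, this gives the excess-support count $\sum_i (k_i-1)\le m$ --- note, incidentally, that your phrasing is off: the number of \emph{fractional coefficients} at a vertex can be as large as $2m$; what is at most $m$ is this excess. Now round independently at random: for each fractional index pick $x_i=s_{ik}$ with probability $\lambda_{ik}$. Then $E=\sum_i(y_i-x_i)$ has mean zero, so $\mathbb{E}\|E\|_2^2=\sum_i\mathbb{E}\|y_i-x_i\|_2^2$, and each term is the variance of a distribution on $k_i$ points of pairwise distance at most $D$, which by the identity $\sum_k\lambda_{ik}\|s_{ik}-y_i\|^2=\tfrac12\sum_{k,l}\lambda_{ik}\lambda_{il}\|s_{ik}-s_{il}\|^2$ is at most $\tfrac{k_i-1}{2k_i}D^2\le\tfrac{(k_i-1)D^2}{4}$ for $k_i\ge 2$ (the two-point case is the Bernoulli bound $\lambda(1-\lambda)\|s-s'\|^2\le D^2/4$). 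Summing, $\mathbb{E}\|E\|_2^2\le\tfrac{D^2}{4}\sum_i(k_i-1)\le\tfrac{mD^2}{4}$, so some realization attains $\|E\|_2\le D\sqrt{m}/2$. It is this pairing of the excess-support count with per-index variance $D^2/4$ that buys the factor of $2$; your proposed dependent ``kernel-walk'' rounding is unnecessary and, as sketched, does not produce the constant.
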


\begin{proof}[Proof of Proposition~\ref{prop:budish}]
Let $\randombudget{i}$ be a uniform distribution between $[1,1+\epsilon]$ for all agent $i$. 
Let $\lottery{x}{1},..,\lottery{x}{n}$ and $\prices$ be a CERI.  Apply Lemma~\ref{theo:bush}, where $S_i$ is the convex hull of all realization of $\lottery{x}{i}$, and ${\bf c}$ is the capacity vector.  If all agents' acceptable bundles are of size at most $\Delta$, then the diameter of $S_i$ is at most $\sqrt{2\Delta}$. We obtain the existence of 
$\x_i\in S_i$ such that $ ||{\bf c}-\sum_i \x_i||_2 \le \sqrt{\Delta m/2}.$ For each $\x_i\in S_i $, there is a realization $b_i$ of $\randombudget{\agent}$ such that 
$\x_i$ is the optimal choice of agent $i$ under budget $b_i$. Hence, the allocation $(\x_1,..,\x_n)$ corresponds to an ACEEI allocation in which the budget of each agent is perturbed by at most $\epsilon$.
\end{proof}

\subsection{The ACEEI mechanism is not ordinally efficient}\label{app:accei}  
In Section~\ref{sec:relationship}, we argued that in the unit-demand setting the ACEEI mechanism coincides with the RSD. An example from \citet{bogomolnaia2001new} described in that section showed that RSD allocations might not be ordinally efficient. Here, we provide an example of the ordinal inefficiency of the ACEEI in multiunit-demand setting. In this example, the unique ACEEI allocation (for any budget perturbation) is one that allocates the empty bundle w.p. 1 to all agents. This allocation turns out to be ordinally inefficient.

\begin{example}\label{ex:01}

The economy consists of a single good with capacity of 20 units, and two identical agents. The only bundle that both agents prefer to the empty bundle has 100 units. In the convexified equilibrium, each agent is allocated a lottery over the empty bundle and the bundle containing 100 units.  

ACEEI is a deterministic allocation that has excess demand bounded in the $\ell^2$-norm. In particular, the bound given by \citet{budish2011combinatorial} is for the case when agents demand at most a single unit of any good and is expressed as $\sqrt{Dm}/2$. This is generalized to the multiunit case in \citet{budish2020improved} as $D\sqrt{m}/2$, where $D$ represents the diameter of the choice set, and $m$ is the number of goods.

In this example, the diameter of the agents' choice set is 100, indicating an equilibrium where the $\ell^2$-norm of excess demand is at most $100/{2}=50$. If at least one agent receives the entire 100-unit bundle, the excess demand is 80. Consequently, the only allocation achieving the desired bound is the allocation in which agents receive nothing, which is not ordinally efficient.

\end{example}

\ignore{
\begin{example} let $b\ge 3$, $n\ge 4$

    $n$ agent labeled $1,..,n$; $n+1$ goods  labeled $g_0,g_1,..,g_n$. 
    
    $g_0$ has capacity of $(bn-2b-1)n$ and other  goods have capacity $b+1$.

    Agent $i$ is interested in 2 bundles in increasing order:   $\{g_0^{bn-2b},g_i^b\}$  $\prec_i$ $\{g_1^b,.., g_{i-1}^b, g_{i+1}^b,...,g_n^b\}$ 
     the power indicates how many copies of the good.
Notice the size of each bundle is $bn-b$.

 Easy to check that if agents consume the first bundle with prob $\frac{bn-2b-1}{bn-2b}$, the second with prob $\frac{1}{bn-2b}$, then market perfectly clear. One should be able to come up with prices so that this is equilibrium.

Recall capacity of goods $g_0,..,g_n$ are
$$(bn-2b-1)n,b+1,..,b+1$$
Case 1: If all agents consume first bundle, then consumption is 
$$(bn-2b)n,b,..,b \rightarrow \text{excess demand is } -n,1,..,1$$
$$l_2\text{-norm is } \sqrt{n^2+n}$$

case 2: If  agent 1 consumes the  better bundle, $\{g_2^b,..,g_n^b\}$ , then aggregate consumption is
 $$(bn-2b)(n-1),0,2b,..,2b \rightarrow \text{excess demand is } (b-1)n-2b,b+1,1-b,..,1-b$$
  $$l_2\text{-norm is about } (b-1)\sqrt{n^2+n}, \text{ when } n  \text{ is big}$$
  
case 3: If  agent 1,2 consume better bundle, then consumption is
  $$(bn-2b)(n-2),b,b,3b,..,3b \text{ excess demand is } (2b-1)n-4b,1,1,1-2b,..,1-2b$$
 $$l_2\text{-norm is about } (2b-1)\sqrt{n^2+n}, \text{ when } n  \text{ is big}$$

If we choose $n$ big and $b=100$, then it says the $l_2$ norm of case 1 is 99 times less than case 2, 200 times less than case 3. So it is sensible to choose case 1. \at{the next sentence gives a better reason!} In fact, case 1 is the only case where the norm smaller than ACEEI bound \at{is that obvious from the three cases above?}. So the algorithm 
end up with allocating the worse bundle to every one, which is not pareto optimal.
\end{example}
}

\newpage
\bibliographystyle{chicago}
\bibliography{ref}

\begin{thebibliography}{}

\bibitem[\protect\citeauthoryear{Agranov and Ortoleva}{Agranov and
  Ortoleva}{2022}]{agranov2022revealed}
Agranov, M. and P.~Ortoleva (2022).
\newblock Revealed preferences for randomization: An overview.
\newblock {\em AEA Papers and Proceedings\/}~{\em 112}, 426--430.

\bibitem[\protect\citeauthoryear{Ahani, G{\"o}lz, Procaccia, Teytelboym, and
  Trapp}{Ahani et~al.}{2024}]{ahani2021dynamic}
Ahani, N., P.~G{\"o}lz, A.~D. Procaccia, A.~Teytelboym, and A.~C. Trapp (2024).
\newblock Dynamic placement in refugee resettlement.
\newblock {\em Operations Research\/}~{\em 72\/}(3), 1087--1104.

\bibitem[\protect\citeauthoryear{Arrow}{Arrow}{1951}]{arrow1951extension}
Arrow, K.~J. (1951).
\newblock An extension of the basic theorems of classical welfare economics.
\newblock In {\em Proceedings of the Second Berkeley Symposium on Mathematical
  Statistics and Probability}, Volume~2, pp.\  507--533. University of
  California Press.

\bibitem[\protect\citeauthoryear{Azevedo and Budish}{Azevedo and
  Budish}{2019}]{azevedo2019strategy}
Azevedo, E.~M. and E.~Budish (2019).
\newblock Strategy-proofness in the large.
\newblock {\em Review of Economic Studies\/}~{\em 86\/}(1), 81--116.

\bibitem[\protect\citeauthoryear{Aziz, Freeman, Shah, and Vaish}{Aziz
  et~al.}{2024}]{aziz2023best}
Aziz, H., R.~Freeman, N.~Shah, and R.~Vaish (2024).
\newblock Best of both worlds: Ex ante and ex post fairness in resource
  allocation.
\newblock {\em Operations Research\/}~{\em 72\/}(4), 1674--1688.

\bibitem[\protect\citeauthoryear{Bichler and Merting}{Bichler and
  Merting}{2021}]{bichler2021randomized}
Bichler, M. and S.~Merting (2021).
\newblock Randomized scheduling mechanisms: Assigning course seats in a fair
  and efficient way.
\newblock {\em Production and Operations Management\/}~{\em 30\/}(10),
  3540--3559.

\bibitem[\protect\citeauthoryear{Birkhoff}{Birkhoff}{1946}]{birkhoff1946tres}
Birkhoff, G. (1946).
\newblock Tres observaciones sobre el álgebra lineal.
\newblock {\em Revista de la Universidad Nacional de Tucumán. Revista
  A\/}~{\em 5}, 147--151.

\bibitem[\protect\citeauthoryear{Bogomolnaia and Moulin}{Bogomolnaia and
  Moulin}{2001}]{bogomolnaia2001new}
Bogomolnaia, A. and H.~Moulin (2001).
\newblock A new solution to the random assignment problem.
\newblock {\em Journal of Economic Theory\/}~{\em 100\/}(2), 295--328.

\bibitem[\protect\citeauthoryear{Budish}{Budish}{2011}]{budish2011combinatorial}
Budish, E. (2011).
\newblock The combinatorial assignment problem: Approximate competitive
  equilibrium from equal incomes.
\newblock {\em Journal of Political Economy\/}~{\em 119\/}(6), 1061--1103.

\bibitem[\protect\citeauthoryear{Budish, Cachon, Kessler, and Othman}{Budish
  et~al.}{2017}]{budish2017course}
Budish, E., G.~P. Cachon, J.~B. Kessler, and A.~Othman (2017).
\newblock Course match: A large-scale implementation of approximate competitive
  equilibrium from equal incomes for combinatorial allocation.
\newblock {\em Operations Research\/}~{\em 65\/}(2), 314--336.

\bibitem[\protect\citeauthoryear{Budish and Cantillon}{Budish and
  Cantillon}{2012}]{budish2012multi}
Budish, E. and E.~Cantillon (2012).
\newblock The multi-unit assignment problem: Theory and evidence from course
  allocation at {Harvard}.
\newblock {\em American Economic Review\/}~{\em 102\/}(5), 2237--2271.

\bibitem[\protect\citeauthoryear{Budish, Che, Kojima, and Milgrom}{Budish
  et~al.}{2013}]{budish2013designing}
Budish, E., Y.-K. Che, F.~Kojima, and P.~Milgrom (2013).
\newblock Designing random allocation mechanisms: Theory and applications.
\newblock {\em American Economic Review\/}~{\em 103\/}(2), 585--623.

\bibitem[\protect\citeauthoryear{Budish, Gao, Othman, Rubinstein, and
  Zhang}{Budish et~al.}{2023}]{budish2023practical}
Budish, E., R.~Gao, A.~Othman, A.~Rubinstein, and Q.~Zhang (2023).
\newblock Practical algorithms and experimentally validated incentives for
  equilibrium-based fair division {(A-CEEI)}.
\newblock In {\em Proceedings of the 24th ACM Conference on Economics and
  Computation}, EC '23, New York, NY, USA, pp.\  337–368. Association for
  Computing Machinery.

\bibitem[\protect\citeauthoryear{Budish and Kessler}{Budish and
  Kessler}{2022}]{budish2022can}
Budish, E. and J.~B. Kessler (2022).
\newblock Can market participants report their preferences accurately (enough)?
\newblock {\em Management Science\/}~{\em 68\/}(2), 1107--1130.

\bibitem[\protect\citeauthoryear{Budish and Reny}{Budish and
  Reny}{2020}]{budish2020improved}
Budish, E. and P.~J. Reny (2020).
\newblock An improved bound for the {Shapley--Folkman} theorem.
\newblock {\em Journal of Mathematical Economics\/}~{\em 89}, 48--52.

\bibitem[\protect\citeauthoryear{Carroll}{Carroll}{2010}]{carroll2010efficiency}
Carroll, G. (2010).
\newblock An efficiency theorem for incompletely known preferences.
\newblock {\em Journal of Economic Theory\/}~{\em 145\/}(6), 2463--2470.

\bibitem[\protect\citeauthoryear{Che and Kojima}{Che and
  Kojima}{2010}]{che2010asymptotic}
Che, Y.-K. and F.~Kojima (2010).
\newblock Asymptotic equivalence of probabilistic serial and random priority
  mechanisms.
\newblock {\em Econometrica\/}~{\em 78\/}(5), 1625--1672.

\bibitem[\protect\citeauthoryear{Cole and Tao}{Cole and
  Tao}{2021}]{cole2021existence}
Cole, R. and Y.~Tao (2021).
\newblock On the existence of {Pareto Efficient} and envy-free allocations.
\newblock {\em Journal of Economic Theory\/}~{\em 193}, 105207.

\bibitem[\protect\citeauthoryear{Debreu}{Debreu}{1951}]{debreu1951coefficient}
Debreu, G. (1951).
\newblock The coefficient of resource utilization.
\newblock {\em Econometrica\/}~{\em 19\/}(3), 273--292.

\bibitem[\protect\citeauthoryear{Delacr{\'e}taz, Kominers, and
  Teytelboym}{Delacr{\'e}taz et~al.}{2023}]{delacretaz2023matching}
Delacr{\'e}taz, D., S.~D. Kominers, and A.~Teytelboym (2023).
\newblock Matching mechanisms for refugee resettlement.
\newblock {\em American Economic Review\/}~{\em 113\/}(10), 2689--2717.

\bibitem[\protect\citeauthoryear{Echenique, Miralles, and Zhang}{Echenique
  et~al.}{2021}]{echenique2021constrained}
Echenique, F., A.~Miralles, and J.~Zhang (2021).
\newblock Constrained pseudo-market equilibrium.
\newblock {\em American Economic Review\/}~{\em 111\/}(11), 3699--3732.

\bibitem[\protect\citeauthoryear{Foley}{Foley}{1967}]{foley1967resource}
Foley, D. (1967).
\newblock Resource allocation and the public sector.
\newblock {\em Yale Economic Essays\/}~{\em 7}, 45--98.

\bibitem[\protect\citeauthoryear{Gul, Pesendorfer, and Zhang}{Gul
  et~al.}{2024}]{gul2019market}
Gul, F., W.~Pesendorfer, and M.~Zhang (2024).
\newblock Efficient allocation of indivisible goods in pseudomarkets with
  constraints.
\newblock {\em Journal of Political Economy\/}~{\em 132\/}(11), 3708--3736.

\bibitem[\protect\citeauthoryear{Hatfield, Kojima, and Kominers}{Hatfield
  et~al.}{2018}]{hatfield2018strategy}
Hatfield, J.~W., F.~Kojima, and S.~D. Kominers (2018).
\newblock Strategy-proofness, investment efficiency, and marginal returns: An
  equivalence.
\newblock Technical report, Becker Friedman Institute for Research in Economics
  Working Paper.

\bibitem[\protect\citeauthoryear{Hylland and Zeckhauser}{Hylland and
  Zeckhauser}{1979}]{hylland1979efficient}
Hylland, A. and R.~Zeckhauser (1979).
\newblock The efficient allocation of individuals to positions.
\newblock {\em Journal of Political Economy\/}~{\em 87\/}(2), 293--314.

\bibitem[\protect\citeauthoryear{Kojima and Manea}{Kojima and
  Manea}{2010}]{kojima2010incentives}
Kojima, F. and M.~Manea (2010).
\newblock Incentives in the probabilistic serial mechanism.
\newblock {\em Journal of Economic Theory\/}~{\em 145\/}(1), 106--123.

\bibitem[\protect\citeauthoryear{Kominers, Ruberry, and Ullman}{Kominers
  et~al.}{2010}]{kominers2010course}
Kominers, S.~D., M.~Ruberry, and J.~Ullman (2010).
\newblock Course allocation by proxy auction.
\newblock In {\em Internet and Network Economics: 6th International Workshop,
  WINE 2010, Stanford, CA, USA, December 13-17, 2010. Proceedings 6}, pp.\
  551--558. Springer.

\bibitem[\protect\citeauthoryear{Kornbluth and Kushnir}{Kornbluth and
  Kushnir}{2021}]{kornbluth2021undergraduate}
Kornbluth, D. and A.~Kushnir (2021).
\newblock Undergraduate course allocation through competitive markets.
\newblock Technical report, Available at SSRN 3901146.

\bibitem[\protect\citeauthoryear{Lipton, Markakis, Mossel, and Saberi}{Lipton
  et~al.}{2004}]{lipton2004approximately}
Lipton, R.~J., E.~Markakis, E.~Mossel, and A.~Saberi (2004).
\newblock On approximately fair allocations of indivisible goods.
\newblock In {\em Proceedings of the 5th ACM Conference on Electronic
  Commerce}, pp.\  125--131.

\bibitem[\protect\citeauthoryear{Liu and Pycia}{Liu and
  Pycia}{2016}]{liu2016ordinal}
Liu, Q. and M.~Pycia (2016, August).
\newblock Ordinal efficiency, fairness, and incentives in large markets.

\bibitem[\protect\citeauthoryear{Manea}{Manea}{2008}]{manea2008constructive}
Manea, M. (2008).
\newblock A constructive proof of the ordinal efficiency welfare theorem.
\newblock {\em Journal of Economic Theory\/}~{\em 141\/}(1), 276--281.

\bibitem[\protect\citeauthoryear{McLennan}{McLennan}{2002}]{mclennan2002ordinal}
McLennan, A. (2002).
\newblock Ordinal efficiency and the polyhedral separating hyperplane theorem.
\newblock {\em Journal of Economic Theory\/}~{\em 105\/}(2), 435--449.

\bibitem[\protect\citeauthoryear{Miralles and Pycia}{Miralles and
  Pycia}{2021}]{miralles2021foundations}
Miralles, A. and M.~Pycia (2021).
\newblock Foundations of pseudomarkets: Walrasian equilibria for discrete
  resources.
\newblock {\em Journal of Economic Theory\/}~{\em 196}, 105303.

\bibitem[\protect\citeauthoryear{Nguyen, Peivandi, and Vohra}{Nguyen
  et~al.}{2016}]{nguyen2016assignment}
Nguyen, T., A.~Peivandi, and R.~Vohra (2016).
\newblock Assignment problems with complementarities.
\newblock {\em Journal of Economic Theory\/}~{\em 165}, 209--241.

\bibitem[\protect\citeauthoryear{Nguyen and Teytelboym}{Nguyen and
  Teytelboym}{2024}]{nguyen2024equilibrium}
Nguyen, T. and A.~Teytelboym (2024).
\newblock Equilibrium in pseudomarkets.
\newblock In {\em Proceedings of the 25th ACM Conference on Economics and
  Computation}, pp.\  1289.

\bibitem[\protect\citeauthoryear{Nguyen and Vohra}{Nguyen and
  Vohra}{2024}]{nguyen2021delta}
Nguyen, T. and R.~Vohra (2024).
\newblock {(Near-)}substitute preferences and equilibria with indivisibilities.
\newblock {\em Journal of Political Economy\/}~{\em 132\/}(12), 4122--4154.

\bibitem[\protect\citeauthoryear{Othman, Papadimitriou, and Rubinstein}{Othman
  et~al.}{2016}]{othman2016complexity}
Othman, A., C.~Papadimitriou, and A.~Rubinstein (2016).
\newblock The complexity of fairness through equilibrium.
\newblock {\em ACM Transactions on Economics and Computation (TEAC)\/}~{\em
  4\/}(4), 1--19.

\bibitem[\protect\citeauthoryear{Prendergast}{Prendergast}{2017}]{prendergast2017food}
Prendergast, C. (2017).
\newblock How food banks use markets to feed the poor.
\newblock {\em Journal of Economic Perspectives\/}~{\em 31\/}(4), 145--162.

\bibitem[\protect\citeauthoryear{Prendergast}{Prendergast}{2022}]{prendergast2022allocation}
Prendergast, C. (2022).
\newblock The allocation of food to food banks.
\newblock {\em Journal of Political Economy\/}~{\em 130\/}(8), 1993--2017.

\bibitem[\protect\citeauthoryear{Varian}{Varian}{1974}]{varian1974equity}
Varian, H.~R. (1974).
\newblock Equity, envy, and efficiency.
\newblock {\em Journal of Economic Theory\/}~{\em 9}, 63--91.

\bibitem[\protect\citeauthoryear{von Neumann}{von
  Neumann}{1953}]{vonneumann1953assignment}
von Neumann, J. (1953).
\newblock A certain zero-sum two-person game equivalent to the optimal
  assignment problem.
\newblock In H.~W. Kuhn and A.~W. Tucker (Eds.), {\em Contributions to the
  Theory of Games, Vol. II}, Volume~28 of {\em Annals of Mathematics Studies},
  pp.\  5--12. Princeton, NJ: Princeton University Press.

\end{thebibliography}

\end{document}